\newcommand{\gComment}[1]{\textcolor{gray}{\Comment{#1}}}
\renewcommand{\hat}{\widehat}
\newcommand{\dd}{\,\mathrm{d}}
\newcommand{\1}{\mathds{1}}
\theoremstyle{definition}
\newtheorem{proposition}{Proposition}
\newtheorem{assumption}{Assumption}
\newtheorem*{proposition*}{Proposition}
\newtheorem{definition}[proposition]{Definition}
\newtheorem{example}[proposition]{Example}
\newtheorem{theorem}[proposition]{Theorem}
\newtheorem{lemma}[proposition]{Lemma}
\newtheorem{setting}{Setting}
\DeclareMathOperator{\pa}{pa}
\DeclareMathOperator{\an}{an}
\DeclareMathOperator{\ch}{ch}
\newcommand{\trafos}[1]{\calH_{\calY,{#1}}}
\newcommand{\trafoall}{\trafos{\calX}^{*}}
\newcommand{\trafosub}{\trafos{\calX}}
\newcommand{\trafosubS}{\trafos{\calX^S}}
\newcommand{\trafosubPA}{\trafos{\calX^{S_*}}}
\newcommand{\trafoadd}{\trafos{\calX}^{\text{shift}}}
\newcommand{\trafolin}{\trafos{\calX}^{\text{linear}}}
\newcommand{\trafowald}{\trafos{\calX\times\calE}^{\text{Wald}}}
\newcommand{\trafoallS}{\trafos{\calX^S}^{*}}
\newcommand{\trafowaldS}{\trafos{\calX^S\times\calE}^{\text{Wald}}}
\newcommand{\bltrafo}{\calH_\calY}
\DeclareMathOperator{\supp}{supp}
\newcommand{\norm}[1]{\left\lVert#1\right\rVert}
\newcommand{\rZ}{Z}
\newcommand{\rY}{Y}
\newcommand{\rX}{\mX}
\newcommand{\rS}{\mS}
\newcommand{\rL}{\mL}
\newcommand{\rT}{\mT}
\newcommand{\rz}{z}
\newcommand{\ry}{y}
\newcommand{\rx}{\xvec}
\newcommand{\pZ}{F_\rZ}
\newcommand{\pSL}{F_{\SL}}
\newcommand{\pMEV}{F_{\text{minEV}}}
\newcommand{\pYx}{F_{\rY \mid \rX = \rx}}
\newcommand{\dZ}{f_\rZ}
\newcommand{\dYx}{f_{\rY | \rX = \rx}}
\newcommand{\h}{h}
\newcommand{\hY}{h_\rY}
\newcommand{\basisy}{\avec}
\newcommand{\bern}[1]{\avec_{\text{Bs},#1}}
\newcommand{\basisx}{\bvec}
\newcommand{\parm}{\varthetavec}
\newcommand{\eparm}{\vartheta}
\newcommand{\shiftparm}{\betavec}
\newcommand{\ie}{{i.e.,}~}
\newcommand{\eg}{{e.g.,}~}
\newcommand{\cf}{{cf.}~}
\newcommand{\wrt}{{w.r.t.}~}
\newcommand{\st}{{s.t.}~}
\newcommand{\Prob}{\mathbb{P}}
\newcommand{\Ex}{\mathbb{E}}
\newcommand{\RR}{\mathbb{R}}
 \DeclareMathOperator{\logit}{logit}
 \DeclareMathOperator{\expit}{expit}
 \DeclareMathOperator{\Cov}{Cov}
 \DeclareMathOperator{\Var}{Var}
 \DeclareMathOperator{\rank}{rank}
 \DeclareMathOperator*{\argmax}{{arg\,max}}
 \DeclareMathOperator{\ND}{N}
 \DeclareMathOperator{\BD}{Bernoulli}
 \DeclareMathOperator{\SL}{SL}
\def \avec {\text{\boldmath$a$}}    
\def \bvec {\text{\boldmath$b$}}    \def \mB {\text{$\mathbf B$}}
    \def \mD {\text{$\mathbf D$}}
\def \evec {\text{\boldmath$e$}}
\def \tvec {\text{\boldmath$t$}}
\def \xvec {\text{\boldmath$x$}}    \def \mX {\text{$\mathbf X$}}
\def \yvec {\text{\boldmath$y$}}
\def \rE {\text{\boldmath$E$}}
\def \rI {\text{\boldmath$I$}}
\def \rL {\text{\boldmath$L$}}
\def \rS {\text{\boldmath$S$}}
\def \rT {\text{\boldmath$T$}}
\def \rV {\text{\boldmath$V$}}
\def \rX {\text{\boldmath$X$}}
 \def \calC {\mathcal C}
 \def \calD {\mathcal D}
 \def \calE {\mathcal E}
 \def \calF {\mathcal F}
 \def \calG {\mathcal G}
 \def \calH {\mathcal H}
 \def \calM {\mathcal M}
 \def \calP {\mathcal P}
 \def \calS {\mathcal S}
 \def \calW {\mathcal W}
 \def \calX {\mathcal X}
 \def \calY {\mathcal Y}
 \def \calZ {\mathcal Z}
\def \betavec         {\text{\boldmath$\beta$}}
\def \gammavec        {\text{\boldmath$\gamma$}}
\def \thetavec        {\text{\boldmath$\theta$}}
\def \varthetavec     {\text{\boldmath$\vartheta$}}
\def \muvec           {\text{\boldmath$\mu$}}
\def \xivec           {\text{\boldmath$\xi$}}
\newcommand{\ubar}[1]{\underaccent{\bar}{#1}}
\newcommand{\pkg}[1]{\textbf{#1}}
\newcommand{\proglang}[1]{\textsf{#1}}
\newcommand{\code}[1]{\texttt{#1}}
\DeclareMathOperator{\Id}{Id}
\tikzstyle{line} = [draw, -latex']
\tikzstyle{Arrow} = [
\newcommand{\RN}[1]{%
(\textup{\uppercase\expandafter{\romannumeral#1}})%
}
\newcommand{\indep}{\perp\nolinebreak\hspace{-6pt}\perp}
\newcommand{\tram}{\textsc{tram}}
\newcommand{\tramicp}{\textsc{tramicp}}
\newcommand{\bcd}{\boldsymbol\cdot}
\begin{document}

\title{\bf Model-based causal feature selection for\\general response types}
\author{Lucas Kook\textsuperscript{1}\thanks{Email: \code{lucasheinrich.kook@gmail.com} $\bcd$
Preprint $\bcd$ Version July 8, 2024 $\bcd$ Licensed under CC-BY.},
Sorawit Saengkyongam\textsuperscript{2},
Anton Rask Lundborg\textsuperscript{3},\\
Torsten Hothorn\textsuperscript{4},
Jonas Peters\textsuperscript{2}}

\date{%
\small
\textsuperscript{1}Institute for Statistics and Mathematics, Vienna University
of Economics and Business\linebreak
\textsuperscript{2}Seminar for Statistics, ETH Zurich\linebreak
\textsuperscript{3}Department of Mathematical Sciences, University of Copenhagen\linebreak
\textsuperscript{4}Epidemiology, Biostatistics \& Prevention Institute, University of Zurich
}

\maketitle

\begin{abstract}%
Discovering causal relationships from observational data is a fundamental yet
challenging task. Invariant causal prediction \citep[ICP,][]{peters2016causal}
is a method for causal feature selection which requires data from heterogeneous
settings and exploits that causal models are invariant. ICP has been extended to
general additive noise models and to nonparametric settings using conditional
independence tests. However, the latter often suffer from low power (or poor
type I error control) and additive noise models are not suitable for
applications in which the response is not measured on a continuous scale, but
reflects categories or counts. Here, we develop transformation-model (\tram)
based ICP, allowing for continuous, categorical, count-type, and uninformatively
censored responses (these model classes, generally, do not allow for
identifiability when there is no exogenous heterogeneity). As an invariance
test, we propose \tram-GCM based on the expected conditional covariance between
environments and score residuals with uniform asymptotic level guarantees. For
the special case of linear shift \tram{s}, we also consider \tram-Wald, which
tests invariance based on the Wald statistic. We provide an open-source
\proglang{R}~package \pkg{tramicp} and evaluate our approach on simulated data
and in a case study investigating causal features of survival in critically ill
patients.
\end{abstract}

\section{Introduction} \label{sec:intro}

\subsection{Motivation}\label{sec:motivation}

Establishing causal relationships from observational data is a common goal in
several scientific disciplines. However, systems are often too complex to allow
for recovery of the full causal structure underlying the data-generating
process. In this work, we consider the easier task of uncovering the causal
drivers of a particular response variable of interest. We present methods,
theoretical results and user-friendly software for model-based causal feature
selection, where the response may represent a binary, ordered, count, or
continuous outcome and may additionally be uninformatively censored. We propose
\tramicp{} for causal feature selection, which is based on invariant causal
prediction \citep[ICP,][]{peters2016causal} and a flexible class of regression
models, called transformation models \citep[\tram{s},][]{hothorn2018most}.
\tramicp{} relies on data from heterogeneous environments and the assumption,
that the causal mechanism of the response given its direct causes (direct \wrt
the considered sets of covariates) is correctly specified by a \tram{} and does
not change across those environments
\citep{haavelmo1943statistical,frisch1948,aldrich1989autonomy,pearl2009causality,schoelkopf2012anti}.
The causal \tram{} will then produce score residuals (residuals defined
specifically for \tram{s} and potentially censored observations) that are
invariant across the environments.
If this assumption is violated (for instance, if the environment, which is not
included as a covariate, directly impacts the response) but faithfulness
\citep[p.~56]{spirtes2000causation} holds, \tramicp{} is conservative and will
produce an uninformative output.
We propose an invariance test based on the expected conditional covariance
between the score residuals and the environments given a subset $S$ of the
covariates, called \tram-GCM. With this invariance test, \tramicp{} recovers a
subset of the direct causes with high probability, by fitting a \tram{} for all
subsets of covariates, computing score residuals, testing whether those score
residuals are uncorrelated with the residualized environments and lastly,
intersecting all subsets for which the null hypothesis of invariance was not
rejected. For the special case of additive linear models, we propose another
invariance test, \tram-Wald, based on the Wald statistic for testing whether
main and interaction effects involving the environments are zero.

We illustrate the core ideas of \tramicp{} in the following example with a
binary response and the logistic regression model
\citep{mccullagh2019generalized}, which is a \tram. We defer all details on how
\tram{s} and score residuals are defined to Section~\ref{sec:bg} and describe
the \tram-GCM and \tram-Wald invariance tests in Section~\ref{sec:methods} and
Appendix~\ref{sec:wald}, respectively.

\begin{example}[Invariance in binary generalized linear models]\label{ex:intro}
Consider the following structural causal model \citep{pearl2009causality} over
$(Y, X^1, X^2, E)$:\\[0pt]
\begin{minipage}{0.49\textwidth}
\begin{align}
    E &\coloneqq N_E\\
    X^1 &\coloneqq -E + N_1\\
    Y &\coloneqq \1(0.5 X^1 > N_Y)\\
    X^2 &\coloneqq Y + 0.8 E + N_2,
\end{align}
\end{minipage}
\begin{minipage}{0.49\textwidth}
\centering
\begin{tikzpicture}[node distance=1.0cm, <-> /.tip = Latex, -> /.tip = Latex,
    thick, roundnode/.style={circle, draw, inner sep=1pt,minimum size=7mm},
    squarenode/.style={rectangle, draw, inner sep=1pt, minimum size=7mm}]
    \node[roundnode] (X) {$X^1$};
    \node[right=of X, roundnode] (Y) {$Y$};
    \node[below=of Y, roundnode] (E) {$E$};
    \node[right=of Y, roundnode] (X2) {$X^2$};
    \draw[->] (E) edge[bend left=00] (X);
    \draw[->] (E) edge[bend left=00] (X2);
    \draw[->] (X) edge[bend left=00] (Y);
    \draw[->] (Y) -- (X2);
\end{tikzpicture}
\end{minipage}\\[12pt]
where $N_E \sim \BD(0.5)$, $N_1 \sim \ND(0, 1)$, $N_2 \sim \ND(0, 1)$, $N_Y$ are
jointly independent noise variables and $N_Y$ follows a standard logistic
distribution. Here, $E$ encodes two environments in which the distribution of
$X^1$ and $X^2$ differ, but the causal mechanism of $Y$ given its direct causes
$X^1$ does not change.

Let us assume that both the above structural causal model and its implied
structure are unknown and that we observe an i.i.d.\ sample
$\{(e_i, x^1_i, x^2_i, y_i)\}_{i=1}^n$ from the joint distribution of $(E, X^1,
X^2, Y)$. We further know that $Y$ given its direct causes is correctly
specified by a logistic regression. All remaining conditionals do not need to
satisfy any model assumptions. Our task is now to infer (a subset of) the direct
causes of $Y$.

To do so, for each subset of the covariates $\rX^S$, $S \subseteq \{1, 2\}$ (\ie
for $\emptyset$, $\{1\}$, $\{2\}$ and $\{1, 2\}$), we now (i) fit a binary
logistic regression model, (ii) compute the score residuals $y_i - \hat{\Prob}(Y
= 1 \mid \rX^S = \rx^S_i)$ (from the logistic regression) and residualized
environments $e_i - \hat{\Prob}(E = 1 \mid \rX^S = \rx^S_i)$ (via a random
forest), and (iii) test whether the two residuals are correlated.
Figure~\ref{fig:intro} shows the residuals obtained in step (iii) for each
non-empty subset of the covariates.
\begin{figure}[!t]
    \centering
    \includegraphics[width=0.8\textwidth]{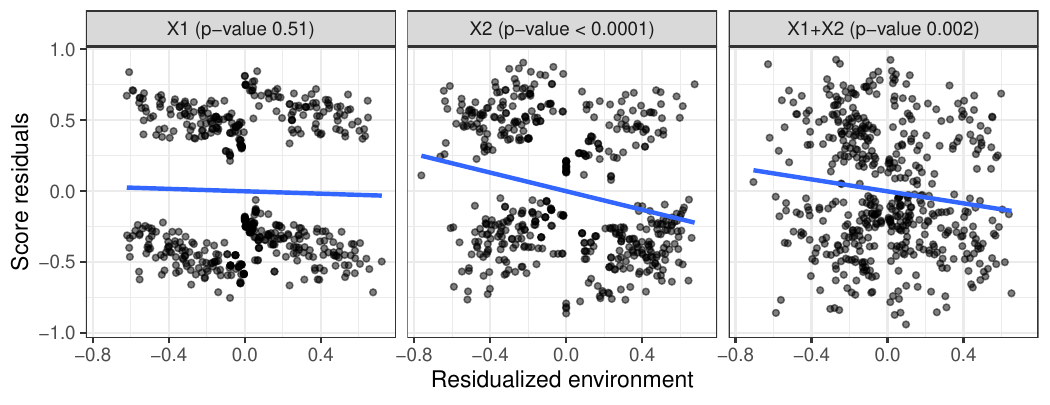}
    \caption{%
    Invariance in binary generalized linear models. By the data
    generating mechanism in Example~\ref{ex:intro}, we know that the
    conditional distribution of $Y$ given its direct cause $X^1$ does not
    change across the two environments $E = 0$ and $E = 1$. When predicting
    both $Y$ and $E$ from the three sets of covariates $\{1\}$, $\{2\}$
    and $\{1, 2\}$, the resulting residuals are uncorrelated only when
    conditioning on the invariant set $\{1\}$. The $p$-values of the
    invariance test we introduce in Section~\ref{sec:scoretest} are
    shown in the panel strips for the corresponding subset of covariates
    (we have also added linear model fits, see blue lines).
    The empty set is omitted, since the
    score residuals and residualized environments only take two values.
    }\label{fig:intro}
\end{figure}

In this example, even though the model using $\{X^1, X^2\}$ achieves higher
predictive accuracy than the model using the causal parent $\{X^1\}$, only
the model $Y \mid X^1$ is stable across the environments. If more than
one set is invariant, one can take the intersection of the invariant sets
to obtain a subset of the direct causes of $Y$
\citep{peters2016causal}.

With our openly available \proglang{R}~package \pkg{tramicp}
(\url{https://CRAN.R-project.org/package=tramicp}), the analysis in this example
can be reproduced with the following code, where \code{df} is a data frame with
500 independent observations from the structural causal
model above.
\begin{verbatim}
R> library("tramicp")
R> icp <- glmICP(Y ~ X1 + X2, data = df, env = ~ E, family = "binomial")
R> pvalues(icp, which = "set")
   Empty       X1       X2    X1+X2
1.82e-02 5.10e-01 4.54e-09 2.22e-03
\end{verbatim}
\end{example}

\subsection{Related work}\label{sec:related}

Several algorithms exist to tackle the problem of causal discovery, \ie
learning the causal graph from data, including constraint-based and score-based
methods
\citep{spirtes2000causation,chickering2002optimal,pearl2009causality,glymour2019review}.
Assuming
faithfulness, one
can hope to recover the causal graph up to the Markov equivalence class
\citep{verma1990causal,andersson1997markov,tian2013causal},
for which several algorithms have been proposed based on observational data,
interventional data, or a combination of both
\citep{spirtes2000causation,chickering2002optimal,castelo2003inclusion,he2008active,hauser2015jointly}.
However, in many real-world applications learning the full causal graph may be too
ambitious or unnecessary for tackling the problem at hand.
As opposed to causal discovery, causal feature selection aims to identify the
direct causes
of a given variable of interest (the response) from potentially many measured
covariates, instead of the full graph \citep{guyon2007causal}.

Invariant causal prediction (ICP) is an approach to causal feature selection which
exploits invariance of the conditional distribution of a response
given its direct causes under perturbations of the covariates
\citep[ICP,][]{peters2016causal}.
ICP can be formulated from a structural causal modeling, as well as potential
outcome perspective \citep{hernan2010causal}. In contrast to constraint- and
score-based algorithms, ICP requires a specific response variable and data
from heterogeneous environments.

ICP builds on the concept of invariance and can generally be formulated as
conditional independence between the response and the environments given a
candidate set \citep{heinze2018invariant}. Thus, nonparametric conditional
independence tests
\citep{fukumizu2007kernel,zhang2012kernel,candes2018,strobl2019approximate,berrett2019}
can, in principle, always be applied.
However, if one of the conditioning variables is continuous, conditional
independence testing is not feasible without further assumptions in the sense
that there is no test that simultaneously is level and has non-trivial power
\citep{shah2020hardness}. This holds also if the environments are discrete
\citep[Remark~4]{shah2020hardness}.

As an alternative to conditional independence testing, model-based
formulations of ICP have been formulated for linear \citep{peters2016causal}
and non-linear additive noise models \citep[``invariant residual distribution
test'' proposed in][]{heinze2018invariant}. \citet{diaz2022identifying} use
an ``invariant target prediction'' test from \citet{heinze2018invariant}
for testing invariance with a binary response by nonparametrically comparing
out-of-sample area under the receiver operating characteristic (ROC) curve (AUC).
Under correct  model specification, model-based ICP can have considerably higher
power than its nonparametric alternative. Model-based ICP has been extended to
generalized linear models \citep[GLMs, see dicussion in][]{peters2016causal}
and sequential data \citep{pfister2019invariant}. ICP for GLMs and additive
and multiplicative hazard models has been investigated in \citet{laksafoss2020msc}.
For real-world applications of ICP with exogenous environments, see,
for example,
\citet{meinshausen2016gene,heinze2018invariant,christiansen2020switching,migliavacca2021three}.

Many applications feature complex response types, such as ordinal
scales, survival times, or counts and the data-generating mechanism can
seldomly be assumed to be additive in the noise. This is reflected in the most
common model choices for these responses, namely proportional odds logistic
\citep{mccullagh1980regression,tutz2011regression},
Cox proportional hazards \citep{cox1972regression}, and generalized linear
models \citep{mccullagh2019generalized}, which do not assume
additive noise in general. Together, non-continuous responses and non-additive
noise render many causal feature selection algorithms inapplicable.
Moreover, proposed extensions to GLMs and hazard-based models rely on
case-specific definitions of invariance and thus a unified view on linear,
generalized linear, hazards, and general distributional regression is yet to
be established.

In practice, a model-based approach can be desirable, because it leads
to interpretable effect estimates, such as odds or hazard ratios. However, there
is a trade-off between model intelligibility and misspecification. Many
commonly applied regression models are not closed under marginalization or
the inclusion or exclusion of covariates that are associated with the response
\citep[collapsibility,][see also Appendix~\ref{app:collaps}]{greenland1996absence,greenland1999nc,didelez2022collapsibility}.

\subsection{Summary}\label{sec:summary}

Formally, we are interested in discovering the direct causes of a response $\rY
\in \calY \subseteq \RR$ among a potentially large number of covariates $\rX
\in \calX_{1} \times \dots\times \calX_d \subseteq \RR^d$. Consider a
set $S_* \subseteq \{1, \dots, d\}$ (the
reader may think about the ``direct causes'' of $\rY$) and assume that $Y
\mid \rX^{S_*}$ is correctly specified by a \tram{} while all other
conditionals remain unspecified. In Section~\ref{sec:sctram}, we define
structural causal \tram{s} and there, $S_*$ will be the set of causal
parents of $\rY$.
Thus, from now on, we refer to $S_*$ as the causal parents of $\rY$.
\tram{s} characterize the relationship between features and
response via the conditional cumulative distribution function (CDF)
$F_{Y \mid \rX^{S_*} = \rx^{S_*}}(\ry)
\coloneqq \Prob(Y \leq y \mid \rX^{S_*} = \rx^{S_*})$ on the
quantile-scale of a user-specified CDF $\pZ$.
More specifically, when using
\tram{s}, one  models the increasing function $\h(\bcd
\mid\rx^{S_*}) \coloneqq \pZ^{-1} \circ F_{\rY \mid \rX^{S_*} = \rx^{S_*}}(\bcd)$,
called a transformation function. The name stems from the fact that for all
$\rx^{S_*}$ its (generalized) inverse transforms samples of $\rZ~\sim \pZ$
to samples from
the conditional distribution $\rY \mid \rX^{S_*} = \rx^{S_*}$.
Specific choices of $\pZ$ and further modeling assumptions on the functional
form of $\h$ give rise to many well-known models (examples below). Throughout
the paper we illustrate \tramicp{} with a binary response (Ex.~\ref{ex:binary})
and give additional examples with a count and survival response in
Appendix~\ref{app:addex}.
None of the examples
can be phrased as additive noise models of the form $Y = f(X) + \varepsilon$
with $X\indep\varepsilon$. Together with the hardness of conditional independence
testing \citep[see also above]{shah2020hardness}, this motivates the need for
causal feature selection algorithms in more flexible non-additive noise models.

\begin{example}[Binary logistic regression] \label{ex:binary}
The binary logistic regression model (binomial GLM) with $\calY
\coloneqq \{0, 1\}$ can be phrased in terms of the conditional distribution
$F_{\rY \mid \rX^{S_*} = \rx^{S_*}}(0) = \expit(\eparm -
(\rx^{S_*})^\top\shiftparm)$, where
$\expit(\bcd) = \logit^{-1}(\bcd) = (1 + \exp(-\bcd))^{-1}$ denotes the standard
logistic CDF, and $\eparm$ denotes the baseline ($\rx^{S_*} = 0$) log-odds for
belonging to class $0$ rather than $1$. Here, $\shiftparm$ is interpretable as a
vector of log odds-ratios. The model can informally be written as
$F_{\rY \mid \rX^{S_*} = \rx^{S_*}}(\ry) = \expit(\hY(\ry) +
(\rx^{S_*})^\top\shiftparm)$,
where $\hY(0) \coloneqq \eparm$ and $\hY(1) \coloneqq + \infty$.
The latter way of writing the model extends to ordered responses
with more than two levels $\calY \coloneqq \{\ry_1, \ry_2, \dots, \ry_K\}$ with
$\ry_1 < \ry_2 < \dots < \ry_K$, $\hY(\ry_k) \coloneqq \eparm_k$,
for all $k$ and for $k = 2, \dots, K$, $\eparm_k > \eparm_{k-1}$,
using the convention $\eparm_K = +\infty$
\citep[see][proportional odds logistic regression]{mccullagh1980regression}.
\end{example}

In Example~\ref{ex:binary},
we have assumed that the response given its causal parents is correctly specified
by an linear shift \tram{} (see Definition~\ref{def:tramadditivelinear} for more
details). If conditioning on a set that is not $S^*$ always yielded a model
misspecification, one could attempt to identify the set of causal parents
by testing, for different sets $\rX^{S}$ of covariates, whether the model for
$Y$ given $\rX^{S}$ is correctly specified.
However, in Proposition~\ref{prop:id} below, we prove that, in
general, such a procedure does not work. More precisely, there exists a pair
of structural causal models such that both induce the same observational
distribution, and in both, the response given its causal parents
is correctly specified by an (linear shift) \tram{} but the parental sets
differ.

In this work, following a line of work in causal discovery
\citep{peters2016causal,meinshausen2016gene,heinze2018invariant,christiansen2020switching},
we instead assume to have access to data from heterogeneous environments.
Given such data, we define invariance in \tram{s} and propose invariance
tests based on the expected conditional covariance between the environments
and score residuals (\tram-GCM)
and an invariance test based on the Wald statistic for linear shift
\tram{s} in particular (\tram-Wald).
We prove that the \tram-GCM test
is uniformly asymptotically level $\alpha$ for any $\alpha \in (0,1)$
(Theorem~\ref{thm:icp}) and demonstrate empirically that it has
power comparable to or higher than nonparametric
conditional independence testing.
In the context of the result on the hardness of assumption-free
conditional independence testing assumptions for continuous distributions
\citep{shah2020hardness}, our theoretical results show that, under mild
assumptions on the relationship between $\rE$ and $\rX$, the model class
of \tram{}s can be sufficiently restrictive to allow for useful conditional
independence tests.

The rest of this paper is structured as follows. Section~\ref{sec:tram} gives
a technical introduction to transformation models which can be skipped at first
reading. We introduce structural causal \tram{s} in Section~\ref{sec:sctram} and
show that in this class, the set of causal parents is, in general, not identified
(Section~\ref{sec:id}). In Section~\ref{sec:methods}, we present the proposed
\tram-GCM invariance test and its theoretical guarantees.
We apply \tramicp{} to discover causal features of survival in
critically ill hospitalized patients in Section~\ref{sec:casestudy}.

\section{Using transformation models for causal inference} \label{sec:bg}

Transformation models, as introduced by \citet{box1964analysis} in their
earliest form, are models for the conditional cumulative distribution function
of a response given covariates
\citep{doksum1974,bickel1981,cheng1995analysis,hothorn2014conditional}.
\tram{s} transform the response conditional on covariates such that the
transformed response can be modelled on a fixed, continuous latent scale.
Given data
and a finite parameterization,
the transformation
can be estimated via maximum likelihood
\citep{hothorn2018most}. We formally define \tram{s} as a class of non-linear
non-additive noise models depending on the sample space of both response and
covariates.
Our treatment of \tram{s} may appear overly mathematical; however, the formalism
is needed to formulate and prove the identification result (see
Proposition~\ref{prop:id} in Section~\ref{sec:id}) and the uniform
asymptotic level guarantee for the \tram-GCM invariance test (Theorem~\ref{thm:icp}).
A more intuitive introduction to \tram{s} can be found in \citet{hothorn2018most},
for example. We then embed \tram{s} into a causal modeling framework, using
structural causal models \citep[SCMs,][]{pearl2009causality,bongers2016structural}.
We also adapt standard results from parametric \citep{hothorn2018most} and
semi-parametric \citep{mclain2013efficient} maximum likelihood estimation
enabling us to obtain results on consistency and asymptotic normality,
which are exploited by the proposed invariance tests.

\subsection{Transformation models} \label{sec:tram}

Let $\overline\RR \coloneqq \RR \cup \{-\infty, +\infty\}$ denote the extended
real line. Throughout the paper, let $\calZ$ denote the set of
functions $\pZ : \overline\RR \to [0, 1]$ that are (i)
strictly increasing
with $\lim_{x \rightarrow -\infty} \pZ(x) = 0$, $\lim_{x \rightarrow \infty}
\pZ(x) = 1$, (ii) three-times differentiable and have a log-concave derivative
$\dZ = \pZ'$ when restricted to $\RR$, and (iii) satisfy $\pZ(-\infty) = 0$
and $\pZ(+\infty) = 1$. We call $\calZ$ the set of \emph{extended differentiable
cumulative distribution functions}. Given that a CDF $F:\mathbb{R} \rightarrow
\mathbb{R}$
satisfies (i) and (ii), we may add (iii) and refer to the resulting function
as an \emph{extended CDF}. For instance, the extended standard logistic CDF is
given by $\pSL(\rz) = (1 + \exp(-\rz))^{-1}$ for all $\rz\in\RR$ and $\pSL(-\infty)
= 0$ and $\pSL(+\infty) = 1$. Besides $\pSL$, in our applications, we consider the
extended versions of the standard normal CDF $\Phi$, and the standard minimum
extreme value CDF $\pMEV: \rz\mapsto 1 - \exp(-\exp(\rz))$.
By slight abuse of notation, we
use the same letters $\Phi, \pSL, \pMEV$, for the extended CDFs. In general,
specification of a transformation model requires choosing a particular $\pZ \in \calZ$.
Further, for a symmetric positive semi-definite matrix $A$, let $\lambda_{\min}(A)$
denote its smallest eigenvalue and $\norm{A}_{\operatorname{op}}$ denote its operator
norm. For all $n \in \mathbb{N}$, we write $[n]$ as shorthand for $\{1, \dots, n\}$.

We call a function $h : \RR \to \overline{\RR}$ \emph{extended
right-continuous and increasing} (ERCI) on $\calY \subseteq \RR$ if (i) it is
right-continuous and strictly increasing
on $\calY$ and
fulfills $\h(\min\calY) > -\infty$
(if $\min\calY$ exists),
(ii) for all $y < \inf\calY$, we have $h(y) = - \infty$, (iii) for all
$y >\sup\calY$, we have $h(y) = + \infty$, (iv) for all
$t \in (\inf\calY, \sup\calY) \setminus \calY$, we have $\h(t) = \h(\ubar{t})$,
where $\ubar{t} \coloneqq \sup\{\upsilon \in \calY : \upsilon < t\}$
and (v) $\lim_{\upsilon\to-\infty} h(\upsilon) = -\infty$ and
$\lim_{\upsilon\to\infty} h(\upsilon) = \infty$. Condition
(iv) is needed to ensure that $\h$ is piece-wise constant outside of $\calY$.
Finally, for a function $f : \overline{\RR} \to \RR$, we denote the derivative
$f' : \RR \to \RR$ s.t. for all $x \in \RR$, $f'(x) = \frac{\dd}{\dd u}
f(u)\rvert_{u=x}$.
We are now ready to define the class of transformation models.

\begin{definition}[Transformation model]\label{def:tram} \sloppy
Let $\calY \subseteq \RR$
and $\calX \coloneqq \calX_1 \times \ldots \times
\calX_d \subseteq \RR^d$, where for all $i$, $\calX_i \subseteq \RR$.
The set of all \emph{transformation functions} on $\calY \times \calX$
is defined as
\begin{align*}
    \trafoall \coloneqq \bigg\{
    \h: \RR \times \calX \to \overline\RR
    \,\big\vert\, \forall \rx \in \calX,
    \ \h(\bcd \mid \rx) \mbox{ is ERCI on } \calY
    \bigg\}.
\end{align*}
Then, for a fixed \emph{error distribution} $\pZ \in \calZ$ and a set of
transformation functions $\trafosub \subseteq \trafoall$,
the \emph{family of \tram{}s} $\calM(\pZ, \calY, \calX, \trafosub)$
is defined as the following set of conditional cumulative distribution
functions\footnote{%
In Proposition~\ref{prop:cdf} in Appendix~\ref{app:lemmata}, we show that $\calM$
indeed only contains CDFs.}
\citep[see also Definition~2 in][]{hothorn2018most}:
\begin{align*}
    \calM(\pZ, \calY, \calX, \trafosub) \coloneqq\
    &\big\{F_{\rY\mid\rX = \bcd} : \RR\times\calX\to[0,1] \,\big|\,\\ &\qquad \exists \h
    \in \trafosub : \forall \rx \in \calX \
    \forall \ry \in \RR,\ F_{\rY\mid\rX=\rx}(\ry) = \pZ(\h(\ry \mid \rx))
    \big\}.
\end{align*}
As such, a single \tram{} is fully specified by $(\pZ, \h)$, $\pZ \in \calZ,
\h \in \trafosub$. The condition that for all $\rx\in\calX$,
$\h(\bcd\mid\rx)$ is ERCI on $\calY$ ensures that the support of the
induced conditional distribution specified by $\pYx$ is $\calY$.
Further, for all $\rx \in \calX$ and $\rz \in \overline\RR$, we write
$\h^{-1}(\rz \mid \rx) \coloneqq \inf \{\ry \in \calY : \rz \leq \h(\ry
\mid \rx)\}$ for the inverse transformation function.
\end{definition}

The inverse transformation function $\h^{-1}(\bcd \mid \rx)$ at a given
$\rx$ can be interpreted analogously to a quantile function:
Given some $\rX = \rx$, we can obtain an observation from $F_{\rY \mid \rX = \rx}$ by
sampling an observation from $\pZ$ and passing it through $\h^{-1}(\bcd \mid \rx)$.

In statistical modelling, it is common to additionally assume additivity of the
effects of $\rX$ on a specific scale. For instance, in linear regression the
covariates enter as a linear predictor on the scale of the conditional mean.
In this work, we restrict ourselves to the class of shift \tram{s} in which
additivity is assumed on the scale of the transformation function.
%
\begin{definition}[Shift \tram{s}]\label{def:tramadditive}
Let $\calY$, $\calX$ and $\pZ\in\calZ$ be as in Definition~\ref{def:tram}.
Further, let $\calF \coloneqq \{f: \calX \to \RR \mid
f \text{ measurable}\}$
and $\bltrafo \coloneqq
\{\hY : \RR \to \overline\RR \mid \hY \mbox{ is ERCI on } \calY\}$.
Let the set of \emph{shift transformation functions} be defined as
\begin{equation*}
    \trafoadd \coloneqq
    \left\{\h \in \trafoall \mid
    \exists \hY \in \bltrafo, \ f \in \calF :
    \forall \rx \in \calX, \ \h(\bcd \mid \rx) = \hY(\bcd) -
    f(\rx)\right\}.
\end{equation*}
Then, $\calM(\pZ, \calY, \calX, \trafoadd)$ denotes
the family of \emph{shift} \tram{s} and a \tram{} $\pZ
\circ \h$ is called \emph{shift \tram{}} iff $\h \in
\trafoadd$. Further, any $\hY \in \bltrafo$ is referred to as a \emph{baseline
transformation}.
\end{definition}
We next introduce the subset of linear shift \tram{s} in which
the covariates enter as a linear predictor.
%
\begin{definition}[Linear shift \tram{s}]\label{def:tramadditivelinear}
Consider shift \tram{s} specified by $\pZ, \calY, \calX, \calF,
\trafoadd$, as in Definition~\ref{def:tramadditive}.
Let $\basisx : \calX \to \RR^b$ be a finite collection of basis transformations
and define $\calF_{\basisx} \coloneqq \{f \in \calF \mid \exists
\shiftparm \in \RR^b \mbox{ \st} f(\bcd) = \basisx(\bcd)^\top\shiftparm\}$.
The set of \emph{linear shift transformation functions \wrt $\basisx$} is
defined as
\begin{align}
    \trafolin(\basisx) \coloneqq
    \left\{\h \in \trafoadd \, \big\vert \,
    \exists \hY \in \bltrafo,\ f \in \calF_{\basisx}
    : \forall \rx \in \calX : \h(\bcd \mid\rx) = \hY(\bcd) -
    f(\rx) \right\}.
\end{align}
Then, $\calM(\pZ, \calY, \calX, \trafolin(\basisx))$
denotes the family of \emph{linear shift \tram{s} \wrt $\basisx$}.
Further, a \tram{} $\pZ\circ\h$ is called \emph{linear shift \tram{}
\wrt $\basisx$} iff $\h \in \trafolin(\basisx)$.
For the special case of $\basisx : \rx \mapsto \rx$, we write
$\trafolin$ and refer to the class and its members
as \emph{linear shift \tram{s}}.
\end{definition}
%

Estimation and inference in \tram{s} can be based on the log-likelihood
function---if it exists. The following assumption ensures that this is the case.
\begin{assumption}\label{asmp:densities}
We have $\trafosub\subseteq\trafoadd$.
Furthermore, if $\calY$ is uncountable,
$\pZ, \calX, \trafosub$
are such that
for all $\rx\in\calX$ and $\h \in\trafosub$,
\begin{equation}\label{eq:cancondpdf}
\dYx(\bcd; \h) \coloneqq \pZ'(\h(\bcd\mid\rx))\h'(\bcd\mid\rx),
\end{equation}
where $\h'(\ry\mid\rx) \coloneqq \frac{\dd}{\dd\upsilon} \h(\upsilon
\mid\rx) \rvert_{\upsilon = \ry}$,
is well-defined and a density (\wrt Lebesgue measure)
of the conditional CDF induced by the \tram{}.
\end{assumption}
Assumption~\ref{asmp:densities}
allows us to define (strictly positive) canonical conditional densities
with respect to a fixed measure that we denote by $\mu$:
If $\calY$ is countable, we let $\mu$ denote the counting measure on $\calY$ and
for all $\ry \in \calY$,
define the canonical conditional density
by $f_{\rY\mid\rX=\rx}(
\ry
; \h) \coloneqq
\pZ(\h(
\ry
\mid\rx)) - \pZ(\h(\ubar\ry\mid\rx))$, where
$\ubar\ry \coloneqq \sup\{\upsilon \in \calY : \upsilon < y\}$\footnote{%
We adopt the convention that the supremum of the empty set is $-\infty$.}.
If $\calY$ is uncountable, we let $\mu$ denote the Lebesgue measure restricted to $\calY$ and
the canonical conditional density is then defined by~\eqref{eq:cancondpdf}.
In either case, $\trafosub\subseteq\trafoadd$ ensures that for all $\rx$ and $\ry
\in \calY$, $f_{\rY\mid\rX=\rx}(\ry; \h)>0$.
Thus, for $(\pZ, \calY, \calX, \trafosub)$ satisfying
Assumption~\ref{asmp:densities}, we can define the \tram{} log-likelihood as
$\ell : \trafosub \times \calY \times \calX \to \RR$
with $\ell(\h; \ry, \rx) \coloneqq \log \dYx(\ry; \h)$.

When applying ICP to linear additive noise models, invariance can be formulated
as uncorrelatedness between residuals and environments. In \tram{}s, however,
the response can be categorical, reducing the usefulness of classical residuals.
Instead, score residuals \citep{lagakos1981residuals,korepanova2020score,kook2021danchor} are a
natural choice for testing invariance of \tram{s}. Score residuals were first
introduced by \citet{lagakos1981residuals} for multiplicative hazard models
\citep[see also][for non-multiplicative hazard models]{korepanova2020score}
and extended to linear shift \tram{s} by
\citet[][Definition~2]{kook2021danchor}.
Score residuals coincide with scaled least-squares residuals in linear regression
with normal errors and martingale residuals in the Cox proportional hazards model
\citep{barlow1988residuals} and directly extend to censored responses
\citep{lagakos1981residuals,farrington2000residuals}.
In this work, score residuals play a major role in formulating invariance tests
(Section~\ref{sec:methods}) and have been used for causal regularization in a
distributional version of anchor regression
\citep{rothenhaeusler2018anchor,kook2021danchor}.
For defining score residuals, we require the following assumption
(which, by definition, is satisfied for $\trafoadd$
and $\trafolin$).
\begin{assumption}\label{asmp:closure}
$\trafosub$ is closed under scalar addition, that
is, for all $h \in \trafosub$ and $\alpha \in \RR$, we
have\footnote{We adopt the convention that for all $\alpha\in\RR$,
$-\infty + \alpha = -\infty$ and $\infty + \alpha = \infty$.}
$h + \alpha \in \trafosub$.
\end{assumption}
%
\begin{definition}[Score residuals, \citeauthor{lagakos1981residuals},
\citeyear{lagakos1981residuals}; \citeauthor{kook2021danchor},
\citeyear{kook2021danchor}]\label{def:sresids}
Let $\calY$, $\calX$, $\pZ\in\calZ$ and $\trafosub\subseteq
\trafoall$ be as in Definition~\ref{def:tram}. Impose
Assumptions~\ref{asmp:densities} and~\ref{asmp:closure}.
Then, considering $\alpha \in \RR$, the \emph{score residual} $R:
\trafosub \times \calY \times \calX \to \RR$ is defined as
\begin{align*}
    R: (\h; \ry, \rx) \mapsto
    \frac{\partial}{\partial\alpha}
    \ell(\h + \alpha; \ry, \rx) \big\rvert_{\alpha = 0}.
\end{align*}
\end{definition}

\begin{example}[Binary logistic regression, cont'd]\label{ex:bin:inv}
The family of binary linear shift logistic regression models is
given by $\calM(\pSL, \{0,1\}, \calX, \trafolin)$. We can thus
write for all $\rx \in \calX$, $\h(\bcd \mid \rx) \coloneqq \hY(\bcd) -
\rx^\top\shiftparm$ with $\hY(0) \coloneqq \eparm$ and, by convention, $\hY(1)
\coloneqq +\infty$. The likelihood contribution for a given observation $(\ry,
\rx)$ is $\pSL(\h(0 \mid \rx))^{1-\ry}(1-\pSL(\h(0 \mid \rx))^\ry$. The
score residual is given by $R(\h; \ry, \rx) = 1 - \ry - \pSL(\h(0 \mid
\rx))$. Further, the inverse transformation function
is given by $\h^{-1}: (\rz, \rx) \mapsto \1(\rz\geq\eparm -
\rx^\top\shiftparm)$.
\end{example}

\subsection{Structural causal transformation models}\label{sec:sctram}

Next, we cast \tram{s} into a structural causal modelling framework
\citep{pearl2009causality} and return to our examples from Section~\ref{sec:intro}.
For all subsets $S \subseteq [d]$, define $\calX^S$ to be the
projection of $\calX$ onto the ordered coordinates in $S$.
For the rest of this paper, we restrict ourselves to shift \tram{s}.
In this case, any ``global'' model class $\trafosub$ naturally induces
submodel classes $\trafosubS \subseteq \trafoallS$ for all
$S \subseteq [d]$ by the following construction: $\trafosubS \coloneqq \{h
\in \trafosubS^* \,|\, \exists  h^{\text{global}} \in \trafosub \text{ s.t. }
\forall \rx \in \calX,\,h^{\text{global}}(\bcd \mid \rx) = h(\bcd \mid
\rx^S)\}$. If $(F_Z, \calY, \calX, \trafosub)$ satisfies
Assumption~\ref{asmp:densities}, then $(F_Z, \calY, \calX^S, \trafosubS)$
does too. We are now ready to define structural causal \tram{s}.

\begin{definition}[Structural causal \tram{}]\label{def:tramscm}
Let $\calY$, $\calX$, $\pZ\in\calZ$ be as in Definition~\ref{def:tram}.
Let $\trafosub \subseteq \trafoall$ be a class of transformation functions
such that Assumption~\ref{asmp:densities} holds. Let $(\rZ, N_\rX)$ be jointly
independent with $\rZ \sim \pZ$. Then, a \emph{structural causal \tram{} $C$
over $(\rY, \rX)$} is defined as
\begin{equation}\label{eq:tram_scm}
C \coloneqq
\begin{cases}
    X^j \coloneqq g_{j}(\rX, \rY, N_{X^j}), \quad \forall j \in [d] \\
    \rY \coloneqq \h^{-1}(\rZ \mid \rX^{S_*}),
\end{cases}
\end{equation}
where $S_* \subseteq [d]$, $\h \in \trafosubPA$ is the \emph{causal
transformation function}
and $\pa_C(\rY) \coloneqq S_*$ denotes the set of causal parents of $\rY$
in $C$ and
$g_j$,
$j \in [d]$,
are
arbitrary measurable functions.
By $\Prob^C_{(\rY,\rX)}$ we denote the observational distribution induced
by $C$. We assume that the
induced graph (obtained by drawing directed edges from
the observed
variables on the
right-hand side to variables on the left-hand side) is acyclic.
(This, in particular, implies that the function $g_j$ does not
depend on $X^j$.)
We denote by $\calC(\pZ,\calY,\calX,\trafosub)$ the collection of all
structural causal \tram{s} with error distribution $\pZ$ and causal
transformation function $\h \in \trafosub$.
\end{definition}

\subsection{Non-identifiability of the causal parents in transformation models}
\label{sec:id}

We now show that performing causal feature selection in
structural causal transformation models
requires further assumptions. We consider a response variable $Y$ and a set
of covariates $\rX$ and assume that $(Y, \rX)$ are generated from an
(unknown) structural causal \tram{} (defined in \eqref{eq:tram_scm}) with
(known) $\trafosub \subsetneq \trafoall$. In our work,
the problem of causal feature selection concerns learning the causal parents
$\pa(Y)$ given a sample of $(Y, \rX)$ and knowledge of
$\pZ$, $\calY$, $\calX$, $\trafosub$ (which specifies the model
class $\calM(\pZ, \calY, \calX, \trafosub)$).

In this work, we specify the model class for the conditional of the
response, given its causal parents, $Y\mid\rX^{\pa(\rY)}$ by a \tram{};
the remaining
conditionals are unconstrained. Identifiability of causal structure has been
studied for several model classes that constrain the joint distribution
$(Y, \rX)$. When considering the class of linear Gaussian SCMs, for example,
the causal parents are in general not identifiable from the observational
distribution (as there are linear Gaussian SCMs with a different structure
inducing the same distribution). This is different for other model classes:
When considering linear Gaussian SCMs with equal noise variances
\citep{petersbuehlmann2013}, linear non-Gaussian SCMs \citep{shimizu2014lingam}
or nonlinear Gaussian SCMs \citep{hoyer2008neurips,peters2014jmlr}, for
example, the graph structure (and thus the set of causal parents of $Y$) is
identifiable under weak assumptions (identification then becomes possible by
using goodness-of-fit procedures). To the best of our knowledge, identifiability
in such model classes
(i.e.,\ recovering the causal parents of $Y$, not the entire graph or
equivalence classes)
has not been studied when constraining only the
conditional distribution of $Y$ given $\rX^{\pa(\rY)}$.

\tram{s} are generally not closed under marginalization (see
Appendix~\ref{app:collaps} for a detailed discussion on non-collapsability)
and one may hypothesize that this model class allows for identifiability
of the parents
(e.g., by considering different subsets of covariates and testing for goodness
of fit).
We now prove that this is not the case: In general, for
\tram{s} (and even for linear shift \tram{s}), the causal parents are not
identifiable from the observed distribution.
Instead, additional assumptions are needed to facilitate causal feature
selection in \tram{s}.

Definition~\ref{def:FZ_ident} formally introduces the notion of identifiability
of the causal parents and Proposition~\ref{prop:id}
provides the non-identifiability result.
%
\begin{definition}[Subset-identifiability of the causal parents]\label{def:FZ_ident}\sloppy
Let $\calC$ denote a collection of structural causal models. The set of causal
parents is said to be $\calC$-\emph{subset-identifiable} if for all pairs $C_1,
C_2 \in \calC$ it holds that
\begin{equation}
    \Prob^{C_1}_{(Y,\rX)} = \Prob^{C_2}_{(Y,\rX)} \implies
    \pa_{C_1}(Y) \subseteq \pa_{C_2}(Y)
    \; \lor \;  \pa_{C_2}(Y) \subseteq
    \pa_{C_1}(Y).
\end{equation}
\end{definition}

\begin{proposition}[Non-subset-identifiability]\label{prop:id}
For all $A \subseteq \RR$ that are either an interval or countable,
$\pZ\in\calZ$, $\calY\subseteq\RR$, there exists a class of transformation
functions $\trafos{A \times A} \subseteq \trafos{A\times A}^{\text{shift}}
\subsetneq \trafos{A \times A}^*$,
such that the
set of causal parents is not $\calC(\pZ, \calY, A\times A, \trafos{A \times
A})$-subset identifiable.
\end{proposition}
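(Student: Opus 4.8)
The plan is to exhibit, for the given $A$, $\pZ\in\calZ$, and $\calY\subseteq\RR$, a single class of transformation functions $\trafos{A \times A} \subseteq \trafos{A \times A}^{\text{shift}} \subsetneq \trafos{A \times A}^{*}$ together with two structural causal \tram{s} $C_1, C_2 \in \calC(\pZ,\calY,A \times A,\trafos{A \times A})$ in the sense of Definition~\ref{def:tramscm} that induce the \emph{same} observational distribution over $(Y,\rX)$, $\rX = (X^1, X^2)$, but whose causal parent sets are the \emph{incomparable} sets $\pa_{C_1}(Y) = \{1\}$ and $\pa_{C_2}(Y) = \{2\}$; this contradicts the implication in Definition~\ref{def:FZ_ident}. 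The device is to make the two covariates perfectly dependent, $X^1 = X^2$ almost surely, so that ``$Y$ given $X^1$'' and ``$Y$ given $X^2$'' are literally the same conditional law, and only the \emph{declared} direct cause differs between $C_1$ and $C_2$.

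For the model class, I would first fix a baseline transformation $h_Y \in \bltrafo$, which exists for every $\calY$ (a smooth strictly increasing function onto a suitable subset of $\overline\RR$ if $\calY$ is an interval; a step function with strictly increasing jumps if $\calY$ is countable), chosen smooth enough that Assumption~\ref{asmp:densities} holds, and then set
\[
  \trafos{A \times A} \;\coloneqq\; \bigl\{\, h \in \trafos{A \times A}^{*} \;:\; h(\bcd \given \rx) = h_Y(\bcd) - \rx^\top b,\ b \in \RR^2 \,\bigr\}.
\]
Each $h(\bcd\given\rx)$ is $h_Y$ shifted by a constant, hence ERCI on $\calY$, so $\trafos{A \times A}\subseteq\trafos{A \times A}^{*}$; taking $f(\rx) = \rx^\top b$ exhibits it as a subfamily of $\trafos{A \times A}^{\text{shift}}$ (indeed of linear shift \tram{s} with the identity feature map); and when $A$ and $\calY$ both contain at least two points, the strict inclusion $\trafos{A \times A}^{\text{shift}}\subsetneq\trafos{A \times A}^{*}$ holds, e.g.\ because a transformation function with a non-constant positive scale, $h(\bcd\given\rx) = c(\rx)\, h_Y(\bcd)$, is ERCI on $\calY$ but is not of shift form. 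Assumption~\ref{asmp:densities} for $\trafos{A \times A}$ follows from the smoothness of $h_Y$ (and is trivial for countable $\calY$), hence also holds for the submodels induced on $\{1\}$ and $\{2\}$, which are the one-parameter families $\{\, h_Y(\bcd) - b\,x^i : b \in \RR\,\}$, $i = 1,2$.

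Next I would define the two SCMs. Let $P_N$ be any distribution on $A$ (a two-point distribution when $A$ is non-degenerate, to keep the example informative), let $(Z, N)$ be jointly independent with $Z\sim\pZ$, $N\sim P_N$, and fix $\beta\neq 0$. Put
\[
  C_1:\quad X^1 \coloneqq N,\qquad X^2 \coloneqq X^1,\qquad Y \coloneqq h_1^{-1}(Z \given X^1),\qquad h_1(\bcd \given x^1) \coloneqq h_Y(\bcd) - \beta x^1,
\]
so that $\pa_{C_1}(Y) = \{1\}$ and the induced graph ($X^1\to X^2$, $X^1\to Y$) is acyclic; symmetrically,
\[
  C_2:\quad X^2 \coloneqq N,\qquad X^1 \coloneqq X^2,\qquad Y \coloneqq h_2^{-1}(Z \given X^2),\qquad h_2(\bcd \given x^2) \coloneqq h_Y(\bcd) - \beta x^2,
\]
so that $\pa_{C_2}(Y) = \{2\}$ with acyclic graph ($X^2\to X^1$, $X^2\to Y$). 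Both lie in $\calC(\pZ,\calY,A \times A,\trafos{A \times A})$ because $h_1$ is the member of $\trafos{\calX^{\{1\}}}$ coming from the global function $h_Y(\bcd) - \beta x^1 \in \trafos{A \times A}$, and likewise for $h_2$. To see that $\Prob^{C_1}_{(Y,\rX)} = \Prob^{C_2}_{(Y,\rX)}$, note that in each model $X^1 = X^2 = N \sim P_N$, and conditionally on $\{X^1 = X^2 = x\}$ the response is obtained by drawing $Z\sim\pZ$ and applying the inverse transformation function, which — by the interpretation recorded after Definition~\ref{def:tram} (cf.\ Proposition~\ref{prop:cdf}) — produces a draw from the CDF $\pZ(h_Y(\bcd) - \beta x)$ in \emph{both} cases; hence the two joint laws of $(Y, X^1, X^2)$ coincide. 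Since $\{1\}\not\subseteq\{2\}$ and $\{2\}\not\subseteq\{1\}$, Definition~\ref{def:FZ_ident} fails, so the causal parents are not $\calC(\pZ,\calY,A \times A,\trafos{A \times A})$-subset-identifiable.

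The construction is short; the real work — and the main obstacle — is the uniform bookkeeping behind ``for all $A$, $\pZ$, $\calY$'': producing an explicit admissible $h_Y \in \bltrafo$ in each regime (countable versus interval $\calY$), with enough smoothness that Assumption~\ref{asmp:densities} is met when $\calY$ is uncountable, and verifying that $A \times A$ genuinely carries a transformation function outside $\trafos{A \times A}^{\text{shift}}$ so that the strict inclusion asserted in the statement holds (which requires $A$, and $\calY$, to be non-degenerate). These checks are routine once one commits to a case-by-case choice of $h_Y$, but they are where the quantifiers in the statement actually get used.
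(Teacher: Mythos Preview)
Your argument is correct: setting $X^1 = X^2$ almost surely and swapping which coordinate is declared the parent produces two structural causal \tram{s} with identical observational laws and incomparable parent sets, so subset-identifiability fails for the linear shift class you exhibit. All the bookkeeping (membership in $\trafos{A\times A}$, membership of $h_1,h_2$ in the induced subclasses, acyclicity, the strict inclusion in the non-degenerate case) is handled adequately.

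The paper's proof takes a genuinely different and less degenerate route. Rather than forcing $X^1=X^2$, it fixes a single conditional \tram{} density $f_{Y\mid X}$ and a marginal $f_X$, then builds a joint density $f(y,x^1,x^2)\propto f_X(x^1)f_{Y\mid X}(y\mid x^1)f_{Y\mid X}(y\mid x^2)f_X(x^2)/f_Y(y)$ which is symmetric in $(x^1,x^2)$; integrating out either coordinate shows that both $Y\mid X^1$ and $Y\mid X^2$ are the \emph{same} \tram{}. The two SCMs are then $X^1\to Y\to X^2$ and $X^2\to Y\to X^1$, with $X^1$ and $X^2$ conditionally independent given $Y$ but not equal. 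What this buys is precisely the point the surrounding text motivates: the paper wants to rule out the naive strategy of identifying parents by goodness-of-fit testing across candidate conditioning sets, and for that one needs an example where $X^1$ and $X^2$ are genuinely distinct covariates yet both yield a correctly specified \tram{} for $Y$. Your construction, by contrast, makes $Y\mid X^1$ and $Y\mid X^2$ literally the same regression, so while it proves the proposition as stated, it does not speak to the goodness-of-fit intuition and would likely be viewed as exploiting a degeneracy. The trade-off is that your argument is shorter and avoids the Bayes-rule symmetrization entirely.
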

A proof is given in Appendix~\ref{proof:id}, where we construct a joint
distribution over three random variables $(Y, X^1, X^2)$, in which the two
conditionals $Y \mid X^1$ and $Y \mid X^2$ are
\tram{s}. This implies that there are two structural causal \tram{s}
that have identical observational distributions, while $Y$ has two different
(non-empty) sets of causal parents that do not overlap.
The proof in \ref{proof:id} characterizes how to construct
such a joint distribution for shift \tram{s}. For
illustrative purposes, we present a concrete
example in Appendix~\ref{proof:id}
in which $\calY = \calX^1 = \calX^2 = \{1, 2, 3\}$
and
$Y \mid X^1$ and $Y
\mid X^2$ are proportional odds logistic regression models.
We then sample from the induced distribution
We sample from the induced distributions of the two structural causal
\tram{s} constructed in the proof and apply the naive method described above
of performing goodness-of-fit tests to identify the parents. We see that this
method indeed fails to identify a non-empty subset of the
parents in this example.

Instead of subset-identifiability, one can also consider a stronger notion of
\emph{full identifiability}, which states that the set of causal parents can
be uniquely determined by the observed distribution (formally defined in
Appendix~\ref{app:id}). Proposition~\ref{prop:id} immediately implies that
the set of causal parents is not fully identifiable either.

\section{Transformation model invariant causal prediction} \label{sec:methods}

Even if the observational distribution is insufficient to identify causal
parents, identifiability can become possible if we have access to data from
multiple, heterogeneous environments.
Invariant causal prediction
\citep[ICP,][]{peters2016causal} exploits the invariance of causal mechanisms
\citep{haavelmo1943statistical,frisch1948,aldrich1989autonomy,pearl2009causality,schoelkopf2012anti}
under interventions on variables other than the response.
Depending on the response variable, multi-center clinical trials, data
collected from different countries or different points in time may fall into
this category. We then show that under Setting~\ref{set:env}, the set of causal
parents is subset-identifiable (Proposition~\ref{prop:parents}) and fully
identifiable if the environments are sufficiently heterogeneous
(Proposition~\ref{prop:identification}).

\begin{setting}[Data from multiple environments]\label{set:env}\sloppy
Let $\calY$, $\calX$, $\pZ \in \calZ$ be as in Definition~\ref{def:tram}
and let $\trafosub \subseteq \trafoall$ be a class of transformation functions
such that Assumptions~\ref{asmp:densities} and~\ref{asmp:closure} hold.
Let $C_*$ be a structural causal \tram{} (Definition~\ref{def:tramscm})
over $(\rY, \rX, \rE)$ such that\\
\begin{minipage}{0.49\textwidth}
\begin{align*}
C_* \coloneqq \begin{cases}
    E^k \coloneqq \ m_k(\rX, N_{E^k}), \quad \forall k \in [q]
\\
    X^j \coloneqq \ g_j(\rX, \rE, \rY, N_{X^j}), \quad  \forall j \in [d] \\
    Y \ \coloneqq \ \h^{-1}_*(Z \mid \rX^{S_*}),
\end{cases}
\end{align*}
\end{minipage}
\begin{minipage}{0.49\textwidth}
\centering
\begin{tikzpicture}[node distance=1.0cm, <-> /.tip = Latex, -> /.tip = Latex,
    thick, roundnode/.style={circle, draw, inner sep=1pt,minimum size=7mm},
    squarenode/.style={rectangle, draw, inner sep=1pt, minimum size=7mm}]
    \node[roundnode] (X) {$X^2$};
    \node[right=of X, roundnode] (X2) {$X^1$};
    \node[below=of X2, roundnode] (E) {$E$};
    \node[right=of X2, squarenode] (Y) {$Y$};
    \node[below=of Y, roundnode] (X3) {$X^3$};
    \node[right=of X3, roundnode] (X4) {$X^4$};
    \node[right=of Y, roundnode, fill=gray!20] (Z) {$\rZ$};
    \draw[->] (E) edge[bend left=00] (X2);
    \draw[->] (X) edge[bend left=45] (Y);
    \draw[->] (X2) -- (Y);
    \draw[->] (Z) -- (Y);
    \draw[->] (Y) -- (X3);
    \draw[->] (X4) -- (X3);
    \draw[->] (X) -- (X2);
    \draw[->] (E) -- (X3);
\end{tikzpicture}
\end{minipage}\\[12pt]
where $\h_* \in \trafosubPA$ with $S_* \subseteq [d]$ denoting the parents of
$\rY$ and $(Z, N_\rX, N_\rE)$ denoting the jointly independent noise variables.
By definition, the induced graph $\calG_*$ (containing the variables $E^1,
\ldots, E^q, X^1, \ldots, X^d, \rY$) is acyclic. In this setup, the random
vector $\rE$ encodes the environments and takes values in $\calE \subseteq
\RR^q$. We further assume that the parents of $\rE$ can only be
non-descendants\footnote{A node is called a non-descendant of $Y$ in $\calG_*$
if there is no directed path from $Y$ to that node in $\calG_*$.} of $Y$ in
$\calG_*$, which is satisfied, for example, if $\rE$ is exogeneous (that is,
each $E^k$ is a function of $N_{E^k}$ only); $\rE$ may be discrete or
continuous. An example of a DAG contained in this setup is depicted on the
right. By $\calD_n \coloneqq \{(y_i, \rx_i, \evec_i)\}_{i=1}^n$, we denote an
i.i.d.\ sample from $\Prob^{C_*}_{(\rY,\rX,\rE)}$.
\end{setting}

As for ICP, invariance plays a key role for \tramicp{}. We say a subset of
covariates is invariant if the corresponding transformation model correctly
describes the conditional distribution across the environments $\rE$. More
formally, we have the following definition.
%
\begin{definition}[$(\pZ, \trafosub)$-invariance]\label{def:traminv}
Assume Setting~\ref{set:env}. A subset of covariates $S \subseteq [d]$ is
$(\pZ,\trafosub)$\emph{-invariant} if there exists $\h^S\in \trafosubS$, such
that for $\Prob_{(\rX^S,\rE)}$-almost all $(\rx^S, \evec)$,
\begin{align}
    (\rY \mid \rX^S = \rx^S, \rE = \evec) \mbox{ and } (\rY \mid \rX^S = \rx^S)
    \mbox{ are identical with conditional CDF } \pZ(\h^S(\bcd \mid \rx^S)).
\end{align}
\end{definition}
%
If an \emph{invariant transformation function} $\h^S$ according to
Definition~\ref{def:traminv} exists, it is
$\Prob_{\rX^S}$-almost surely
unique (see Lemma~\ref{lem:unique} in Appendix~\ref{app:lemmata}).
Proposition~\ref{prop:parents} shows that the parental set fulfills
$(\pZ,\trafosub)$-invariance, which is sufficient to establish
coverage guarantees for invariant causal prediction in \tram{s}.
A proof is given in Appendix~\ref{proof:parents}.
%
\begin{proposition}\label{prop:parents}
Assuming Setting~\ref{set:env}, the set of causal parents $S_*$ is
$(\pZ,\trafosub)$-invariant.
\end{proposition}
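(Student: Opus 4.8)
The plan is to show that the transformation function $\h_*$ appearing in the structural assignment for $\rY$ in Setting~\ref{set:env} already witnesses the $(\pZ,\trafosub)$-invariance of $S_*$. Concretely, I would take $\h^{S_*} \coloneqq \h_*$: by Setting~\ref{set:env} we have $\h_* \in \trafosubPA$, the class that Definition~\ref{def:traminv} requires for $S = S_*$, so $\h^{S_*}$ is an admissible choice. It then remains to verify that, for $\Prob_{(\rX^{S_*},\rE)}$-almost all $(\rx^{S_*},\evec)$, both $(\rY \given \rX^{S_*}=\rx^{S_*},\rE=\evec)$ and $(\rY \given \rX^{S_*}=\rx^{S_*})$ have conditional CDF $\pZ(\h_*(\bcd \given \rx^{S_*}))$.

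First I would establish the structural independence $Z \indep (\rX^{S_*},\rE)$, which is the heart of the matter and mirrors the standard ICP observation that a causal mechanism is unaffected by interventions on non-descendants of the response. Since the induced graph $\calG_*$ is acyclic and every coordinate $j\in S_*$ is, by construction, a parent of $\rY$, the node $\rY$ cannot be an ancestor of any such $X^j$ -- otherwise $\calG_*$ would contain a directed cycle through $\rY$. Unrolling the acyclic assignments (which, by acyclicity, induce a unique solution map as a function of the exogenous noise), each $X^j$ with $j\in S_*$, and also $\rE=N_\rE$, is a measurable function of the noise variables attached to its ancestors; since $Z$ is attached only to $\rY$, it follows that $(\rX^{S_*},\rE) = \phi(N_\rX,N_\rE)$ for some measurable $\phi$. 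Joint independence of $(Z,N_\rX,N_\rE)$ then gives $Z \indep (\rX^{S_*},\rE)$, and in particular $Z \indep \rX^{S_*}$.

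Next I would read off the conditional CDFs directly from $\rY = \h_*^{-1}(Z \given \rX^{S_*})$. Using $Z \indep (\rX^{S_*},\rE)$ and the fact that, for fixed $\rx^{S_*}$, $\h_*^{-1}(\bcd \given \rx^{S_*})$ is a deterministic map, we obtain, for $\Prob_{(\rX^{S_*},\rE)}$-almost all $(\rx^{S_*},\evec)$ and all $\ry$,
\[
  \Prob(\rY \leq \ry \given \rX^{S_*}=\rx^{S_*}, \rE=\evec)
  = \Prob(\h_*^{-1}(Z \given \rx^{S_*}) \leq \ry)
  = \pZ(\h_*(\ry \given \rx^{S_*})),
\]
the last equality being the Galois-type equivalence $\h_*^{-1}(z \given \rx^{S_*}) \leq \ry \iff z \leq \h_*(\ry \given \rx^{S_*})$ for ERCI transformation functions applied with $Z \sim \pZ$; this is precisely the argument behind Proposition~\ref{prop:cdf}, which shows $\pZ\circ\h$ is a CDF. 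Since the right-hand side does not depend on $\evec$, rerunning the same computation with only $\rX^{S_*}$ in the conditioning (using $Z \indep \rX^{S_*}$), or applying the tower property, yields $\Prob(\rY \leq \ry \given \rX^{S_*}=\rx^{S_*}) = \pZ(\h_*(\ry \given \rx^{S_*}))$ as well. Hence the two conditionals coincide and both equal the \tram{} CDF induced by $\h^{S_*}=\h_*$, so $S_*$ is $(\pZ,\trafosub)$-invariant.

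The hard part will be making the first step rigorous inside the reduced-form SCM notation of Definition~\ref{def:tramscm} and Setting~\ref{set:env}, where $\rX$ may appear on the right-hand side of its own assignment: one must invoke acyclicity to obtain the unique noise-to-observation solution map and then argue that its $S_*$-coordinates do not involve $Z$. Everything downstream is routine, relying only on this independence and the already-established behaviour of generalized inverses of ERCI functions; I would also make explicit that the conditional CDFs are understood as regular conditional distributions, so that the $\Prob_{(\rX^{S_*},\rE)}$-almost-sure statements are well posed.
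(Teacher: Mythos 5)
Your proof is correct and follows essentially the same route as the paper's: both rest on the key independence $Z \indep (\rE, \rX^{S_*})$ and then read off that the conditional CDF of $\rY$ given $\rX^{S_*}$ (with or without $\rE$) is $\pZ(\h_*(\bcd \given \rx^{S_*}))$ by construction. You in fact supply more detail than the paper does — the acyclicity/unrolling argument for the independence, and the Galois-type identity behind the CDF computation — where the paper simply asserts the independence and phrases the conclusion via weak union and contraction as $\rY \indep \rE \given \rX^{S_*}$.
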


The set of causal parents $S_*$ together with the causal transformation function
$\h_*$ in Setting~\ref{set:env} may not be the only set satisfying $(\pZ,
\trafosub)$-invariance. In this vein, we define the set of
\emph{identifiable causal predictors} as
\begin{equation}
    S_I \coloneqq \bigcap_{S \subseteq [d] : S \mbox{ is }
    (\pZ, \trafosub)\mbox{-invariant}} S.
\end{equation}
Since
$(\pZ, \trafosub)$-invariance
is a property of the observed distribution,
$S_I$ is identifiable from the observed distribution, too.
By Proposition~\ref{prop:parents}, $S_I \subseteq S_*$.
Thus,
the causal parents $S_*$ are subset-identifiable.
In a modified version of Setting~\ref{set:env} in which
$\rE$ is among the causal parents of $Y$ (see Setting~\ref{set:empty} in
Appendix~\ref{app:empty})
and the induced distribution is faithful w.r.t.\ the induced graph
(see \citealt{spirtes2000causation}, p.~56), the set of
identifiable causal predictors is empty (because there exists no
$(\pZ,\trafosub)$-invariant set; we prove this statement as
Proposition~\ref{prop:empty} in Appendix~\ref{app:empty}).

Furthermore,
if the environments induce a sufficient amount of heterogeneity in the data,
in the sense that $S_* \subseteq \ch(\rE)$, then $S_I = S^*$, so the causal
parents are fully identified
(this result assumes faithfulness\footnote{
Strictly speaking,
assuming faithfulness for the whole graph when proving
Proposition~\ref{prop:identification} is too strong.
As can be seen from the proof,
it suffices to assume that for all $S \subseteq [d]$,
we have $E$ is not $d$-separated from $Y$ given $\rX^S$
implies $E$ is not independent of
$Y$ given $\rX^S$.}

\begin{proposition}\label{prop:identification}
Assume Setting~\ref{set:env}. Let $\calG$ be the DAG induced by $C_*$ and assume
that $\Prob_{(\rY,\rX,\rE)}^{C_*}$ is faithful \wrt to $\calG$. If  $S_*
\subseteq \ch(\rE)$, where $\ch(\rE)$ denotes the children of $\rE$, we have
$S_I = S_*$.
\end{proposition}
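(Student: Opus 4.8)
The plan is to show the two inclusions $S_I \subseteq S_*$ and $S_* \subseteq S_I$ separately. The first inclusion is already established before the statement: by Proposition~\ref{prop:parents}, $S_*$ is $(\pZ,\trafosub)$-invariant, hence it is one of the sets in the intersection defining $S_I$, so $S_I \subseteq S_*$. The work is therefore entirely in the reverse inclusion $S_* \subseteq S_I$, which amounts to showing that $S_*$ is contained in every $(\pZ,\trafosub)$-invariant set $S$; equivalently, that no invariant set $S$ can omit a causal parent of $\rY$. I would prove this by contrapositive: suppose $S \subseteq [d]$ with $S_* \not\subseteq S$, i.e.\ there is a parent $j \in S_* \setminus S$; I will show $S$ is not $(\pZ,\trafosub)$-invariant.

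The key engine is the faithfulness assumption together with the graphical hypothesis $S_* \subseteq \ch(\rE)$. First I would observe that, because $j \in S_* = \pa_{C_*}(\rY) \subseteq \ch(\rE)$, there is a directed edge $\rE \to X^j$ in $\calG$, and also the edge $X^j \to \rY$. I then want to produce a d-connection between $\rE$ and $\rY$ given $\rX^S$: the path $\rE \to X^j \to \rY$ is a candidate, and since $j \notin S$ the middle node $X^j$ is a non-collider that is \emph{not} conditioned on, so this path is d-connecting given any set not containing $X^j$ — in particular given $\rX^S$ (one should also check that $\rE$ itself and $\rY$ are not blocking issues, but as path endpoints they don't block). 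Hence $\rE \not\perp_{\calG} \rY \mid \rX^S$, and by faithfulness $\rE \not\indep \rY \mid \rX^S$ under $\Prob^{C_*}_{(\rY,\rX,\rE)}$. Finally I would argue that $(\pZ,\trafosub)$-invariance of $S$ implies $\rE \indep \rY \mid \rX^S$: indeed Definition~\ref{def:traminv} says the conditional law of $\rY$ given $(\rX^S,\rE)$ equals that given $\rX^S$ alone for $\Prob_{(\rX^S,\rE)}$-almost all $(\rx^S,\evec)$, which is precisely conditional independence of $\rY$ and $\rE$ given $\rX^S$. This contradiction shows $S$ is not invariant, so every invariant set contains $j$; as $j \in S_*\setminus S$ was arbitrary we get $S_* \subseteq S$ for all invariant $S$, hence $S_* \subseteq S_I$.

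The main obstacle — and the step that needs care rather than cleverness — is the d-separation bookkeeping: one must verify that the path $\rE \to X^j \to \rY$ (or, if that specific path happens to be blocked by some descendant structure, \emph{some} d-connecting path) is genuinely open given $\rX^S$, handling the possibility that $\rY$ has descendants in $S$ that could open collider paths, or that $X^j$ has other parents. Since the edge $X^j \to \rY$ is direct and $X^j$ is a non-collider on the length-two path with $X^j \notin \rX^S$, the path is open; the only subtlety is confirming that conditioning on $\rX^S$ does not somehow re-block it, which it cannot because $X^j$ is the unique intermediate vertex and it is not in the conditioning set. I would state this cleanly as: any path $\rE \to X^j \to \rY$ with $X^j \notin \rX^S$ is $\rX^S$-active, invoke faithfulness, and conclude. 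One should also note explicitly that acyclicity of $\calG_*$ (assumed in Setting~\ref{set:env}) guarantees $X^j \neq \rY$ and that $\rE \neq X^j$, so the path is a legitimate three-vertex path.

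A remark worth including: the hypothesis $S_* \subseteq \ch(\rE)$ is what forces \emph{every} parent to be reachable from $\rE$ by such a short open path; without it, a parent $j$ might be d-separated from $\rE$ given $\rX^S$, in which case dropping $j$ could preserve invariance and $S_I$ would be a strict subset of $S_*$ — consistent with the non-identifiability phenomenon of Proposition~\ref{prop:id}. I would keep the proof to these three ingredients (Proposition~\ref{prop:parents} for one inclusion, d-connection via $\rE\to X^j\to\rY$, faithfulness to pass from graph to distribution, and Definition~\ref{def:traminv} to read invariance as conditional independence) and defer the routine verification of activeness of the path to a single sentence.
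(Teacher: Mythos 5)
Your proposal is correct and follows essentially the same route as the paper's proof: one inclusion from Proposition~\ref{prop:parents}, and for the other, the observation that $S_*\subseteq\ch(\rE)$ yields the directed path $\rE\to X^j\to\rY$ for any omitted parent $j$, which is $\rX^S$-active since $X^j\notin\rX^S$, so faithfulness rules out invariance of $S$. Your explicit verification that the length-two path is open and that invariance reads as $\rY\indep\rE\mid\rX^S$ fills in steps the paper leaves implicit, but the argument is the same.
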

A proof is given in Appendix~\ref{proof:identification}.
%
For simple model classes such as linear Gaussian SCMs, sufficient conditions
for faithfulness are known \citep{spirtes2000causation}. In our setting,
analyzing the faithfulness assumption is particularly challenging due to
non-collapsibility and non-closure under marginalization of \tram{s}
(see Appendix~\ref{app:collaps}). Nonetheless, we empirically show in our
simulations (see Appendix~\ref{sec:results}) that faithfulness is not
violated, for example, if the coefficients in linear shift \tram{s} are
sampled from a continuous distribution.

\subsection{Testing for invariance}\label{subsec:testing}

We now translate
$(\pZ,\trafosub)$-invariance into testable conditions which are
applicable to general \tram{s} and thus general response types. Here, we
propose an invariance condition based on score residuals
(Definition~\ref{def:sresids}). The following proposition shows that
the score residuals are uncorrelated with the environments (in
Setting~\ref{set:env}) when conditioning on an invariant set.
%
\begin{proposition}[Score-residual-invariance]\label{prop:residualinv}
Assume Setting~\ref{set:env} and that \eqref{eq:interch} in
Appendix~\ref{app:lemmata} holds. Then, we have the following implication:
\begin{align}\label{eq:residinv}
\begin{split}
S \mbox{ is } (\pZ,\trafosub)\mbox{-invariant} \implies
    &\Ex[R(\h^S; \rY, \rX^S) \mid \rX^S] = 0, \mbox{ and}\\
    &\Ex[\Cov[\rE, R(\h^S; \rY, \rX^S) \mid \rX^S]] = 0,
\end{split}
\end{align}
where $\Ex[\Cov[\rE, R(\h^S; \rY, \rX^s) \mid \rX^S]]
\coloneqq \Ex[\rE R(\h^S; \rY, \rX^s) \mid \rX^S] - \Ex[\rE\mid\rX^S]
\Ex[R(\h^S; \rY, \rX^S) \mid \rX^S]$ denotes the expected conditional
covariance between the residuals and environments.
\end{proposition}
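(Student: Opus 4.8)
The plan is to reduce the statement to the unconditional result already established in Lemma~\ref{prop:sresidmeanzero}, applied within each fixed environment. Suppose $S \subseteq [d]$ is $(\pZ, \trafosub)$-invariant, so that there exists $\h^S \in \trafosubS$ with $(\rY \given \rX^S = \rx^S, \rE = \evec)$ having CDF $\pZ(\h^S(\bcd \given \rx^S))$ for $\Prob_{(\rX^S, \rE)}$-almost all $(\rx^S, \evec)$. The key observation is that this conditional CDF does not depend on $\evec$, so the conditional law of $\rY$ given $(\rX^S, \rE)$ equals the conditional law given $\rX^S$ alone, and in particular the score residual $R(\h^S; \rY, \rX^S)$ -- which is a deterministic function of $\rY$ and $\rX^S$ -- has the same conditional distribution given $(\rX^S, \rE)$ as given $\rX^S$.

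First I would verify the interchange-of-differentiation-and-integration hypothesis of Lemma~\ref{prop:sresidmeanzero} transfers: since~\eqref{eq:interch} is assumed to hold (relative to the model for $\rY \given \rX^S$), and the conditional law of $\rY$ given $\rX^S = \rx^S$ is exactly $\pZ(\h^S(\bcd \given \rx^S))$ with $\h^S \in \trafosubS$, Lemma~\ref{prop:sresidmeanzero} applies verbatim with $\rX$ replaced by $\rX^S$ and $\h_0$ replaced by $\h^S$, giving $\Ex[R(\h^S; \rY, \rX^S) \given \rX^S] = 0$ $\Prob_{\rX^S}$-almost surely. (Technically one should also check $(\pZ, \calY, \calX^S, \trafosubS)$ satisfies Assumption~\ref{asmp:densities}, which is noted in the text just before Definition~\ref{def:tramscm}, and that $\trafosubS$ is closed under scalar addition, which follows from Assumption~\ref{asmp:closure} for $\trafosub$.) This yields the first conclusion.

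For the second conclusion, I would write out the expected conditional covariance and use the tower property. By definition,
\begin{align*}
\Ex[\Cov[\rE, R(\h^S; \rY, \rX^S) \given \rX^S]] = \Ex\big[\Ex[\rE \, R(\h^S; \rY, \rX^S) \given \rX^S]\big] - \Ex\big[\Ex[\rE \given \rX^S]\,\Ex[R(\h^S; \rY, \rX^S) \given \rX^S]\big].
\end{align*}
The second term vanishes because the inner factor $\Ex[R(\h^S; \rY, \rX^S) \given \rX^S] = 0$ almost surely by the first conclusion. For the first term, I would condition further on $(\rX^S, \rE)$: by the tower property it equals $\Ex[\rE \,\Ex[R(\h^S; \rY, \rX^S) \given \rX^S, \rE]]$, and by $(\pZ, \trafosub)$-invariance the conditional law of $\rY$ given $(\rX^S, \rE)$ coincides with that given $\rX^S$, so $\Ex[R(\h^S; \rY, \rX^S) \given \rX^S, \rE] = \Ex[R(\h^S; \rY, \rX^S) \given \rX^S] = 0$ almost surely. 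Hence the first term also vanishes and the expected conditional covariance is zero.

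The main obstacle is a measure-theoretic bookkeeping one rather than a conceptual one: one must be careful that the ``$\Prob_{(\rX^S,\rE)}$-almost all'' qualifier in the definition of invariance, combined with the almost-sure statements coming from Lemma~\ref{prop:sresidmeanzero}, compose correctly so that $\Ex[R(\h^S; \rY, \rX^S) \given \rX^S, \rE] = 0$ holds $\Prob_{(\rX^S,\rE)}$-almost surely (and not merely on a $\rX^S$-null-complement that could have positive $(\rX^S,\rE)$-mass in pathological cases). This is handled by noting that the conditional distribution of $\rY$ given $(\rX^S, \rE)$ is, up to a $\Prob_{(\rX^S,\rE)}$-null set, a function of $\rX^S$ only, so integrability of $\rE$ (needed for the covariance to be defined at all, which I would note is implicitly assumed) together with the almost-sure identity closes the argument. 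A secondary minor point is confirming that $R(\h^S; \rY, \rX^S) \, \rE$ is integrable, which again should be taken as part of the standing assumptions ensuring the expected conditional covariance is well-defined.
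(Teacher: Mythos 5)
Your proposal is correct and follows essentially the same route as the paper's proof: apply Lemma~\ref{prop:sresidmeanzero} with $\rX^S$ and $\h^S$ in place of $\rX$ and $\h_0$ to get the vanishing conditional mean, then use that invariance makes the conditional law of $\rY$ given $(\rX^S,\rE)$ independent of $\rE$ so that the tower-property computation of the expected conditional covariance collapses to zero. The additional care you take with null sets and integrability is sensible but not a departure from the paper's argument.
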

A proof is given in Appendix~\ref{proof:residualinv}. In Appendix~\ref{app:cens},
we extend \tramicp{} (in particular, Proposition~\ref{prop:residualinv}) to
uninformatively censored observations, where $Y$ itself is unobserved.

We now turn to the problem of testing invariance from finite data.
Section~\ref{sec:scoretest} develops a test, based on similar ideas as
the Generalised Covariance Measure \citep[GCM,][]{shah2020hardness}, on how to
test the implication in \eqref{eq:residinv}. As a second, alternative, invariance
test, we also propose a Wald test for the existence of main and interaction terms
involving the environments in Appendix~\ref{sec:wald}; we
show in Proposition~\ref{prop:wald} that for linear shift
\tram{s}, such a test is closely related to the implication in
Proposition~\ref{prop:residualinv}.

For all $S \subseteq [d]$, and sample sizes $n$, let $p_{S,n} : (\RR \times
\calX^S \times \calE)^n \to [0, 1]$ be the $p$-value for the null hypothesis
that $S$ is $(\pZ,\trafosub)$-invariant. All proposed invariance tests are
embedded in a subset-search over the set of covariates, in which we return the
intersection of all non-rejected sets at a given level $\alpha \in (0, 1)$ (ICP;
Algorithm~\ref{alg:outer}).

\begin{algorithm}[!ht]
\caption{Invariant causal prediction \citep{peters2016causal}} \label{alg:outer}
\begin{algorithmic}[1]
\Require Data $\calD_n$ from Setting~\ref{set:env}, significance level $\alpha
\in (0, 1)$, and a family of invariance tests $(p_{S,n})_{S \subseteq \{1,
\dots, d\}}$ (outputting a $p$-value; see Algorithms~\ref{alg:tramgcm},
and~\ref{alg:tramicp} and the comparators in Section~\ref{sec:comparators})
\State For all $S \subseteq [d]$, compute $p_{S,n}(\calD_n)$
    \gComment{Compute $p$-value of invariance test}
\State \Return{$S_n \coloneqq \bigcap_{S : p_{S,n}(\calD_n) > \alpha} S$}
\gComment{Intersection over all non-rejected sets}
\end{algorithmic}
\end{algorithm}
It directly follows from Proposition~\ref{prop:parents} that if the tests are
level $\alpha$, then the output of Algorithm~\ref{alg:outer} is contained in
the causal parents with large probability
\citep[see][Theorem~1]{peters2016causal}, that is,
$\Prob({S}_n \subseteq \pa_{C_*}(\rY)) \geq 1 - \alpha$.\footnote{
The coverage guarantee holds by
$\Prob(S_n \subseteq \pa_{C_*}(\rY))
\geq \Prob(p_{S^*,n}(\calD_n) > \alpha) = 1 - \alpha$.}
This coverage guarantee does not require faithfulness or sufficiently
heterogeneous environments as assumed in
Proposition~\ref{prop:identification}.\footnote{If the test had
perfect power, then under the conditions assumed in
Proposition~\ref{prop:identification}, the procedure would output $S^*$.
In practice, even under the conditions assumed in
Proposition~\ref{prop:identification},
we may not correctly
reject all non-invariant sets, but the coverage guarantee
still holds. In this sense, the method adapts automatically to settings, in
which the heterogeneity is sufficiently strong.}
It only requires that
the environment is a measurable function of non-descendants of $Y$
and is not a causal parent of $Y$. Assuming
the induced distribution is faithful w.r.t.\ the induced graph
and oracle tests for
$(\pZ, \trafosub)$-invariance, the coverage guarantee
even
holds if $\rE$ is a causal parent of $Y$ (this, in particular, includes
cases of linear shift \tram{s} in which $\rE$ only interacts with $\rX^{S_*}$ to
cause $Y$, i.e.,~effect modification).
In this case, there exists no $S \in [d]$ that is $(\pZ,\trafosub)$-invariant
and \tramicp{} returns the empty set with high probability
if the test is sufficiently powerful
(see Proposition~\ref{prop:empty} in Appendix~\ref{app:empty}).

We refer to the combination of ICP (Algorithm~\ref{alg:outer}) with the
proposed \tram-GCM invariance test (Algorithm~\ref{alg:tramgcm}) as
\tramicp-GCM, with the proposed \tram-Wald invariance test
(Algorithm~\ref{alg:tramicp}) as \tramicp-Wald and using a nonparametric
conditional independence test (see Appendix~\ref{sec:comparators}) as
nonparametric ICP.

\subsubsection{Invariance tests based on score residuals} \label{sec:scoretest}

We can test the null hypothesis of $(\pZ,\trafosub)$-invariance by
testing the implication in~\eqref{eq:residinv}, \ie
uncorrelatedness between score residuals and residualized
environments in a GCM-type invariance test (Algorithm~\ref{alg:tramgcm}).
This requires that the maximum likelihood estimator exists and is unique.
\begin{assumption}\label{asmp:mle}
Under Setting~\ref{set:env} and for all $S \subseteq [d]$, the maximum likelihood
estimator,
given by $\argmax_{\h\in\trafosubS} \ell(\h; \calD_n)$, exists and is unique.
\end{assumption}
See also the regularity conditions in \citet[Assumptions~I--V]{mclain2013efficient}.
Theorem~\ref{thm:icp} shows that the proposed test
is uniformly asymptotically level $\alpha$ for any $\alpha \in (0,1)$.

\begin{algorithm}[!ht]
\caption{\tram-GCM invariance test
} \label{alg:tramgcm}
\begin{algorithmic}[1]
\Require Data $\calD_n$ from Setting~\ref{set:env}, $S \subseteq [d]$,
estimator $\hat\muvec$ for $\muvec(\rX^S) \coloneqq \Ex[\rE \mid \rX^S]$.
\State Fit the \tram{}:
$\hat\h \gets \argmax_{\h \in \trafosubS} \ell(\h; \calD_n)$
\State Obtain $\hat\muvec$ using data $\calD_n$
\State Compute residual product terms:
$\rL_i \gets R(\hat\h; \ry_i, \rx_i^S) \{\evec_i - \hat\muvec(\rx^S_i)\}, i = 1, \dots, n$
\State Compute residual covariance:
$
\hat{\Sigma} \gets n^{-1} \sum_{i=1}^n \rL_i \rL_i^\top
- \left(n^{-1} \sum_{i=1}^n \rL_i \right) \left(n^{-1} \sum_{i=1}^n \rL_i\right)^\top
$
\State Compute test statistic:
$\rT_n \gets \hat{\Sigma}^{-1/2} \left(n^{-1/2} \sum_{i=1}^n \rL_i \right)$
\State Compute $p$-value:
$p_{S,n}(\calD_n) \gets 1 -  F_{\chi^2_q}(\lVert \rT_n \rVert^2_2)$
\State \Return{$p_{S,n}(\calD_n)$}
\end{algorithmic}
\end{algorithm}

\begin{theorem}[Uniform asymptotic level of the invariance test in
Algorithm~\ref{alg:tramgcm}]\label{thm:icp}
Assume Setting~\ref{set:env}
and
Assumption~\ref{asmp:mle}
and for a fixed $S \subseteq [d]$
let $\calP \coloneqq \{\Prob_{(\rY,\rX^S,\rE)} \mid S \mbox{ is }
\trafosub\mbox{-invariant}\}$ denote the set of null distributions for the
hypothesis $H_0(S): S$ is $(\pZ, \trafosub)$-invariant
(Definition~\ref{def:traminv}). For all $P$ in $\calP$, we denote by $\h_P$ the
$h^S \in \trafosubS$ in the definition of $(F_Z, \trafosub)$-invariance and
$\muvec(\rX^S) \coloneqq \Ex_P[\rE \mid \rX^S]$.
Let $\xivec \coloneqq \rE - \muvec(\rX^S)$. Assume that
\begin{enumerate}[label=(\alph*)]
    \item $\inf_{P\in\calP} \lambda_{\min}(\Ex_P[R(\h_P;\rY,\rX^S)^2\xivec
        \xivec^\top]) > 0$, \label{gcm:c1}
    \item There exists $\delta > 0$, s.t.~$\sup_{P\in\calP} \Ex_P[\lVert R(\h_P;
        \rY,\rX^S) \xivec \rVert_2^{2+\delta}] < \infty$, \label{gcm:c2}
    \item $\sup_{P\in\calP} \max\left\{\Ex_P[\lVert\xivec \rVert_2^2 \mid \rX^S],
        \Ex_P[R(\h_P; \rY,\rX^S)^2 \mid \rX^S]\right\} < \infty$. \label{gcm:c3}
\end{enumerate}
Further, we require the following rate conditions on $M \coloneqq n^{-1}
\sum_{i=1}^n \lVert \hat\muvec(\rX^S_i) - \muvec(\rX^S_i)\rVert^2_2$ and $W \coloneqq n^{-1}
\sum_{i=1}^n \{R(\hat\h; \rY_{i}, \rX^S_{i}) - R(\h_P; \rY_{i},\rX^S_{i})\}^2$:
\begin{enumerate}[label=(\roman*)]
    \item $M = o_{\calP}(1)$,\label{gcm:c4}
    \item $W = o_{\calP}(1)$,\label{gcm:c5}
    \item $MW = o_{\calP}(n^{-1})$.\label{gcm:c6}
\end{enumerate}
Then $\rT_n$ converges to a standard $q$-variate normal distribution uniformly
over $\calP$. As a consequence, for all $\alpha \in (0, 1)$,
\begin{equation*}
    \lim_{n\to\infty} \sup_{P\in\calP} \Prob_P(
    p_{S,n}(\calD_n)
    \leq \alpha) = \alpha,
\end{equation*}
where
$p_{S,n}(\calD_n)$
is the $p$-value computed by Algorithm~\ref{alg:tramgcm}.
\end{theorem}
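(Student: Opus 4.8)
The plan is to adapt the argument behind the Generalised Covariance Measure \citep{shah2020hardness} to score residuals and a $q$-variate environment, and to upgrade every stochastic-convergence statement to hold uniformly over the null class $\calP$. Fix $P\in\calP$ with associated $\h_P\in\trafosubS$ and $\muvec_P(\rX^S)=\Ex_P[\rE\given\rX^S]$, and abbreviate $R_i\coloneqq R(\h_P;\rY_i,\rX^S_i)$, $\hat R_i\coloneqq R(\hat\h;\rY_i,\rX^S_i)$, $\xivec_i\coloneqq\rE_i-\muvec_P(\rX^S_i)$ and $\rL^*_i\coloneqq R_i\xivec_i$. The first step is the algebraic linearisation
\begin{align*}
\rL_i = \rL^*_i &+ (\hat R_i - R_i)\xivec_i - R_i\bigl(\hat\muvec(\rX^S_i) - \muvec_P(\rX^S_i)\bigr)\\
&- (\hat R_i - R_i)\bigl(\hat\muvec(\rX^S_i) - \muvec_P(\rX^S_i)\bigr).
\end{align*}
Under $H_0(S)$ Definition~\ref{def:traminv} gives $\rY\indep\rE\given\rX^S$, so Proposition~\ref{prop:residualinv} yields $\Ex_P[\rL^*_1]=\Ex_P[\Cov_P[\rE,R(\h_P;\rY,\rX^S)\given\rX^S]]=\nullvec$; the $\rL^*_i$ are i.i.d.\ with covariance $\Sigma_P=\Ex_P[R(\h_P;\rY,\rX^S)^2\xivec\xivec^\top]$, positive definite uniformly in $P$ by~\ref{gcm:c1}.

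\textbf{Negligibility of the remainders, uniformly.} Next I would show $n^{-1/2}\sum_i(\rL_i-\rL^*_i)=o_{\calP}(1)$ and that the sample covariance $\hat\Sigma$ converges to $\Sigma_P$ uniformly in probability. For the two ``cross'' terms I would exploit that $\hat\h$ (hence $\hat R_i-R_i$) depends on the data only through $\{(\rY_j,\rX^S_j)\}_j$, while $\hat\muvec$ depends on it only through $\{(\rE_j,\rX^S_j)\}_j$ (the $\rY$-likelihood and the regression of $\rE$ on $\rX^S$ are fit on disjoint coordinates). Conditioning on $\{(\rY_j,\rX^S_j)\}_j$, the $\xivec_i$ remain independent with $\Ex_P[\xivec_i\given\rY_i,\rX^S_i]=\nullvec$ and $\Ex_P[\lVert\xivec_i\rVert_2^2\given\rY_i,\rX^S_i]=\Ex_P[\lVert\xivec_i\rVert_2^2\given\rX^S_i]$ (both equalities use $\rY\indep\rE\given\rX^S$), the latter uniformly bounded by~\ref{gcm:c3}; hence $n^{-1/2}\sum_i(\hat R_i-R_i)\xivec_i$ has conditional mean zero and conditional second moment at most $C\cdot W$ with $W=o_{\calP}(1)$ by~\ref{gcm:c5}, so it is $o_{\calP}(1)$ by a conditional Chebyshev inequality. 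Symmetrically, conditioning on $\{(\rE_j,\rX^S_j)\}_j$ and using $\Ex_P[R_i\given\rX^S_i]=0$ (Proposition~\ref{prop:residualinv}) and the bound on $\Ex_P[R(\h_P;\rY,\rX^S)^2\given\rX^S]$ in~\ref{gcm:c3}, the term $n^{-1/2}\sum_i R_i(\hat\muvec(\rX^S_i)-\muvec_P(\rX^S_i))$ is $o_{\calP}(1)$, controlled by $M=o_{\calP}(1)$ (\ref{gcm:c4}). The last (product-of-errors) term is bounded deterministically by Cauchy--Schwarz by $\sqrt{nWM}=o_{\calP}(1)$ using~\ref{gcm:c6}. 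The same three estimates, together with $\sum_i x_iy_i\le(\sum_i x_i)(\sum_i y_i)$ for nonnegative reals, also give $n^{-1}\sum_i\lVert\rL_i-\rL^*_i\rVert_2^2=o_{\calP}(1)$, which with the uniform law of large numbers below and $\bar\rL=o_{\calP}(1)$ yields $\hat\Sigma\to\Sigma_P$ uniformly in probability (so $\hat\Sigma$ is eventually positive definite, uniformly).

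\textbf{Uniform CLT and conclusion.} By~\ref{gcm:c2} and~\ref{gcm:c1} the standardised summands $\Sigma_P^{-1/2}\rL^*_i$ have a $(2+\delta)$-th absolute moment bounded uniformly over $\calP$ (the operator norm of $\Sigma_P^{-1/2}$ equals $\lambda_{\min}(\Sigma_P)^{-1/2}$, uniformly bounded by~\ref{gcm:c1}); a multivariate Lyapunov CLT, applied through the Cram\'er--Wold device with a uniform Berry--Esseen bound, then gives $\Sigma_P^{-1/2}n^{-1/2}\sum_i\rL^*_i\to N(\nullvec,I_q)$ uniformly over $\calP$, and the same moment bound supplies the uniform law of large numbers $n^{-1}\sum_i\rL^*_i(\rL^*_i)^\top\to\Sigma_P$ used above. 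Combining these with the linearisation and a uniform Slutsky/continuous-mapping argument, $\rT_n=\hat\Sigma^{-1/2}n^{-1/2}\sum_i\rL_i=\Sigma_P^{-1/2}n^{-1/2}\sum_i\rL^*_i+o_{\calP}(1)\to N(\nullvec,I_q)$ uniformly; hence $\lVert\rT_n\rVert_2^2\to\chi^2_q$ uniformly in distribution and $p_n(S)=1-F_{\chi^2_q}(\lVert\rT_n\rVert_2^2)$ converges uniformly in distribution to the uniform law on $(0,1)$. Since this limit is continuous, $\sup_{P\in\calP}\lvert\Prob_P(p_n(S)\le\alpha)-\alpha\rvert\to0$, which is the claimed equality for every $\alpha\in(0,1)$.

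\textbf{Main obstacle.} The genuinely delicate point is not any single bound but the uniformity over the (typically infinite) null class $\calP$: this is what forces~\ref{gcm:c1}--\ref{gcm:c3} to be uniform moment conditions and requires uniform versions of the CLT and of the law of large numbers, and it is also why the cross terms must be handled through the conditional-independence structure $\rY\indep\rE\given\rX^S$ and the separation between the data used for $\hat\h$ and for $\hat\muvec$, rather than by a naive Cauchy--Schwarz bound, which would only produce the useless rates $O_{\calP}(\sqrt{nW})$ and $O_{\calP}(\sqrt{nM})$.
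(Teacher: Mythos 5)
Your proposal is correct and follows essentially the same route as the paper's proof: the identical linearisation of $\rL_i$ around $R(\h_P;\rY_i,\rX^S_i)\xivec_i$, the same conditioning arguments (on $(\rY_i,\rX^S_i)_i$ resp.\ $(\rE_i,\rX^S_i)_i$, exploiting $\rY\indep\rE\given\rX^S$ and the mean-zero property of the score residuals) for the two cross terms, Cauchy--Schwarz with condition (iii) for the product term, a uniform Lyapunov CLT, and a uniform Slutsky step. The only difference is organisational: you establish $\hat\Sigma\to\Sigma_P$ via $n^{-1}\sum_i\lVert\rL_i-\rL_i^*\rVert_2^2=o_{\calP}(1)$ and Cauchy--Schwarz on the Gram matrix, which compresses the paper's explicit fourteen-term expansion of $\hat\Sigma$ into the same three underlying estimates.
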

A proof is given in Appendix~\ref{proof:icp}.
%
Conditions (a)--(c) are mild regularity conditions on the distributions of
$(Y, \rX^S, \rE)$. Of the remaining conditions it is usually (iii) that is the
strictest. In the case of a parametric linear shift \tram{}, we
would expect $W=O_{\mathcal{P}}(n^{-1})$
and therefore would only need the regression of $\rE$ on $\rX^S$ to be
consistent. However, the \tram-GCM invariance test can still be correctly
calibrated even if the score residuals are learned at a slower-than-parametric
rate. Slower rates occur, for instance, in mixed-effects
\citep{tamasi2022tramme}, penalized linear shift
\citep{kook2020regularized}, or conditional
\tram{s} \citep{hothorn2014conditional}.

In Appendix~\ref{sec:results}, we demonstrate in a simulation
study that \tramicp-GCM and \tramicp-Wald are level at the nominal $\alpha$
and have non-trivial power (at least as high as nonparametric ICP) against
the considered alternatives in several model classes including binary logistic,
Weibull and Cox regression. However, the \tram-Wald invariance test hinges
critically on correct model specification. Despite its high power in the
simulation study (Appendix~\ref{sec:results}),
the \tram-Wald invariance test has size greater than its nominal level under
slight model misspecification (for instance, presence of a non-linear effect,
see Appendix~\ref{app:additive}).
The \tram-GCM test, however, directly extends to more flexible shift \tram{s}
which can incorporate the non-linearity, comes with theoretical guarantees,
and does not lead to anti-conservative behaviour under the null
when testing invariance. The robustness property of the
\tram-GCM invariance test does not necessarily hold without residualization
of the environments \citep{chern2017double,shah2020hardness}.
We illustrate empirically how also the naive correlation test may not be
level, in case of shift and penalized linear shift \tram{s}, in
Appendix~\ref{app:additive}.

\subsection{Practical aspects}\label{sec:practical}

\paragraph{Plausible causal predictors}
The procedure in Algorithm~\ref{alg:outer} can be used to compute $p$-values for
all $S \in \{1, \dots, d\}$. Based on \citet{peters2016causal} and as
implemented in \pkg{InvariantCausalPrediction} \citep{pkg:icp}, we can transform
the set-specific $p$-values into predictor-specific $p$-values: For all $j \in
[d]$, $\hat{p}_j \coloneqq 1$ if $\max_{S \subseteq [d]} p_{S,n}(\calD_n) <
\alpha$ and $\hat{p}_j \coloneqq \max_{S \subseteq [d] : j \not\in S}
p_{S,n}(\calD_n)$ otherwise. Now, for $j \in [d]$, $\hat{p}_j$ is a valid
$p$-value for the null hypothesis $H_0(j) : X^j \notin \pa(\rY)$ (assuming that
the true parents satisfy $(\pZ,\trafosub)$-invariance). We then refer to $X^j$
with $\hat{p}_j \leq \alpha$, $j \in [d]$ as \emph{plausible causal predictors}.

\paragraph{Unmeasured confounding}
In Setting~\ref{set:env}, we assume that all confounding variables between
covariates and response and all parents of the response have been measured. This
assumption can be weakened by instead assuming that there exists a subset of
observed ancestors $A \subseteq \an(\rY)$, such that $\rE \indep_{\calG^*} Y
\mid \rX^A$ (where $\indep_{\calG^*}$ denotes $d$-separation in $\calG^*$) and
the model for $\rY$ given $\rX^A$ is correctly specified by a \tram{}. Such
transformation models can be constructed in special cases
\citep{barbanti2019transformation,wienke2010frailty}, but a characterization of
this assumption is, to the best of our knowledge, an open problem. As in ICP in
the presence of hidden confounders \citep[][Proposition~5]{peters2016causal},
\tramicp{}, under this assumption, returns a subset of the ancestors of $Y$ with
large probability.

\paragraph{Nonparametric extension}
If the assumption that the response given its parents is correctly specified by
a TRAM is violated, we can still apply nonparametric approaches to estimate the
conditional CDF of $Y$ given $X^S$, $S \in [d]$. Appendix~\ref{app:nonp} shows
empirically that the \tram-GCM test based on score residuals obtained via
survival random forests \citep{ishwaran2008survforest} is level in a
data-generating process with right-censored responses where nonparametric ICP,
ignoring the censoring, is not. We leave a theoretical extension of our results
for shift \tram{s} to the nonparametric case for future work.

\section{Causal drivers of survival in critically ill adults}\label{sec:casestudy}

We apply \tramicp{} to the SUPPORT2 dataset \citep{knaus1995support} with
time-to-death in a population of critically ill hospitalized adults being the
response variable. SUPPORT2 contains data from 9105 patients of whom 68.1\% died
after a maximum follow-up of 5.55 years and the remaining 31.9\% of observations
were right-censored due to loss of follow-up. We consider the following
predictors measured at baseline (determined at most three days after hospital
admission): Sex (male/female), race (white, black, asian, hispanic, other),
number of comorbidities (0--9; \code{num.co}), coma score (0--100,
\code{scoma}), cancer (no cancer, cancer, metastatic cancer; \code{ca}), age
(years), diabetes (yes/no), dementia (yes/no), disease group (nine groups,
including colon and lung cancer; \code{dzgroup}).\footnote{\code{ca} is not a
deterministic function of \code{dzgroup}.} For our analysis, we treat
\code{num.co} (0, 1, $\dots$, 5, 6 or more) and \code{scoma} (11 levels) as
factors, square-root transform age and omit 43 patients with missing values in
any of the predictors listed above. We apply \tramicp{} using both \tram-GCM and
\tram-Wald. For \tram-Wald, we only test the presence of main effects of the
environments (without additional first-order interaction effects) due to
non-convergence when fitting the models with interaction effects.

\subsection{Choice of Environments}

When applying oracle tests and assuming faithfulness, \tramicp{} maintains the
coverage guarantee as long as the environment variables are non-descendant of
the response \citep[][Section~3.3]{peters2016causal}. In our study, all measured
predictors precede the response chronologically, so, if all model assumptions
are satisfied and faithfulness holds, all choices of environments come with the
correct coverage but may differ in power. We choose \code{num.co} as the
environment as, we believe, it is associated with several other predictors and
subsequently creates enough heterogeneity. In addition, because \code{num.co} is
constructed from the presence/absence of other (recorded and unrecorded)
comorbidities, it is a sink node in the corresponding graph. If an unrecorded
comorbidity or \code{num.co} were to directly cause (time to) death, the
population output of TRAMICP (assuming faithfulness) would be empty since the
path from \code{num.co} to (time to) death cannot be blocked without
conditioning on the presence/absence of this comorbidity itself. In
Appendix~\ref{app:casestudy:menv}, we apply \tramicp{} when additionally using
\code{race} as an environment. (For a single choice of a valid environment, no
multiple testing correction is needed; however, when applying \tramicp{} to
several choices of environments, in order to obtain a family-wise coverage
guarantee, one would need to apply a multiple testing correction, such as
Bonferroni with the number of choices of environments.)

\subsection{Results}\label{sec:casestudy:results}

\paragraph{The set of all predictors is not invariant}
In the model including all predictors the standard Wald test rejects the null
hypothesis of no effect for all predictors except \code{race}. A Wald test for
the main effect of \code{num.co} yields a $p$-value $< 0.0001$. This provides
strong evidence that the purely predictive model using all predictors is not
invariant across \code{num.co} and thus uses a set of features that is different
from the set of causal parents.

\paragraph{Evidence of age and cancer being direct causes of time-to-death}
We now apply \tramicp-GCM and \tramicp-Wald to the SUPPORT2 dataset specifying
the survival time as the response in a Cox proportional hazard model, using
\code{num.co} as the environment and including all other predictors. Both
algorithms output \code{ca} and \code{age} as plausible causal predictors (\ie
the intersection of all sets for which the invariance test was not rejected
equals $\{\code{ca}, \code{age}\}$). This can be seen in
Figure~\ref{fig:casestudy} in Appendix~\ref{app:casestudy:fig}, where all
non-rejected sets include both \code{ca} and \code{age}. The predictor-specific
$p$-values (see Section~\ref{sec:practical}) are given in
Table~\ref{tab:casestudy} (`Evidence of age and cancer being direct causes of
time-to-death'). In their original analysis of the SUPPORT2 dataset,
\citet{knaus1995support} have assumed that the censoring is uninformative. In a
sensitivity analysis in Appendix~\ref{app:casestudy:cens}, we show that while
\tramicp{} is somewhat robust when inducing (potentially) additional informative
censoring, it eventually returns the empty set.

\begin{table}[!ht]
\centering
\caption{%
\tramicp{} applied to the SUPPORT2 dataset in the different settings described
in Section~\ref{sec:casestudy}. Predictor-specific $p$-values (see
Section~\ref{sec:practical}) are reported for the \tram-GCM and \tram-Wald
invariant test, together with the environment variable used. $p$-values in bold
are significant at the 5\% level; in each row, the set of predictors with bold
numbers corresponds to the output of \tramicp{}.
}\label{tab:casestudy}
\resizebox{\textwidth}{!}{%
\begin{tabular}{@{}lrrrrrrrrc@{}}
\toprule
\textbf{Invariance test}
& \multicolumn{8}{c}{\textbf{Predictor-specific $p$-values}}&
\multicolumn{1}{l}{\textbf{Environment}} \\ \midrule
& \multicolumn{1}{l}{\code{scoma}} & \multicolumn{1}{l}{\code{dzgroup}} & \multicolumn{1}{l}{\code{ca}} & \multicolumn{1}{l}{\code{age}} & \multicolumn{1}{l}{\code{diabetes}} & \multicolumn{1}{l}{\code{dementia}} & \multicolumn{1}{l}{\code{sex}} & \multicolumn{1}{l}{\code{race}} & \multicolumn{1}{l}{} \\ \cmidrule(lr){2-9}
\multicolumn{10}{l}{\textit{Evidence of age and cancer being direct causes of time-to-death}}\\
\tram-GCM  & 0.239 & 0.239 & \textbf{0.000} & \textbf{0.003} & 0.157 & 0.176 & 0.162 & 0.220 & \multirow{2}{*}{\code{num.co}} \\
\tram-Wald & 0.127 & 0.127 & \textbf{0.000} & \textbf{0.001} & 0.080 & 0.077 & 0.089 & 0.127 &\\ \midrule
\multicolumn{10}{l}{\textit{Incorporating prior knowledge about direct causes}}\\
\tram-GCM  & 0.273 & 0.273 & \textbf{0.000} & - & - & - & 0.163 & 0.216 & \multirow{2}{*}{\code{num.co}} \\
\tram-Wald & 0.127 & 0.127 & \textbf{0.000} & - & - & - & 0.089 & 0.127 &\\
\bottomrule
\end{tabular}
}
\end{table}

\paragraph{Incorporating prior knowledge about direct causes}
If a set of predictors is known to cause the outcome, this set can always be
included in the conditioning set (which reduces computational complexity,
because fewer invariance tests have to be performed). We illustrate this by
including \code{age}, \code{dementia}, and \code{diabetes} as `mandatory'
covariates when running \tramicp{} (see Appendix~\ref{app:pkg}). In this case,
both \tramicp-GCM and \tramicp-Wald still output \code{ca} as a causal predictor
of survival. The predictor $p$-values are given in Table~\ref{tab:casestudy}
(`Incorporating prior knowledge about direct causes').

\section{Discussion} \label{sec:discussion}

In this paper, we generalize invariant causal prediction to transformation
models, which encompass many classical regression models and different types of
responses including categorical and discrete variables. We show that, despite
most of these models being neither closed under marginalization nor collapsible,
\tramicp{} retains the same theoretical guarantees in terms of identifying a
subset the causal parents of a response with high probability. We generalize the
notion of invariance to discrete and categorical responses by considering score
residuals which are uncorrelated with the environment under the null hypothesis.
Since score residuals remain well-defined for categorical responses, our
proposal is one way to phrase invariance in classification settings.

We have applied \tramicp{} to roughly ten real world data sets (which
technically would require a multiple testing correction), and have often
observed that, depending on the choice of environment, either no subset of
covariates is invariant (\ie all invariance tests are rejected) or all subsets
of covariates are invariant. In both cases, \tramicp{} outputs the empty
set---an output that is not incorrect but uninformative.

\section*{Acknowledgments}
We thank Niklas Pfister and Alexander Mangulad Christgau for insightful
discussions. We would also like to thank Juraj Bodik for helpful comments. The
research of LK was supported by the Swiss National Science Foundation (SNF;
grant no. 214457). LK carried out part of this work at the University of
Copenhagen and University of Zurich. During parts of this research project, SS
and JP worked at the University of Copenhagen and were supported by a research
grant (18968) from the VILLUM Foundation. ARL is supported by a research grant
(0069071) from Novo Nordisk Fonden.

\appendix
\renewcommand{\thesection}{\Alph{section}}
\renewcommand{\thesubsection}{\Alph{section}\arabic{subsection}}
\counterwithin{figure}{section}
\renewcommand\thefigure{\thesection\arabic{figure}}
\counterwithin{table}{section}
\renewcommand\thetable{\thesection\arabic{table}}
\counterwithin{algorithm}{section}
\renewcommand\thealgorithm{\thesection\arabic{algorithm}}

\section*{Appendix}

The appendix is structured as follows:
\begin{itemize}
\item \textbf{Appendix~\ref{app:tramicp}: \nameref{app:tramicp}}
    \begin{itemize}
        \item[\bf \ref{sec:wald}]: \nameref{sec:wald}
        \item[\bf \ref{app:collaps}]: \nameref{app:collaps}
        \item[\bf \ref{app:addex}]: \nameref{app:addex}
        \item[\bf \ref{app:id}]: \nameref{app:id}
        \item[\bf \ref{app:empty}]: \nameref{app:empty}
        \item[\bf \ref{app:cens}]: \nameref{app:cens}
    \end{itemize}
\item \textbf{Appendix~\ref{app:computationaldetails}: \nameref{app:computationaldetails}}
    \begin{itemize}
        \item[\bf \ref{sec:existingmethods}]: \nameref{sec:existingmethods}
        \item[\bf \ref{sec:comp}]: \nameref{sec:comp}
        \item[\bf \ref{sec:results}]: \nameref{sec:results}
        \item[\bf \ref{app:res}]: \nameref{app:res}
        \item[\bf \ref{app:faith}]: \nameref{app:faith}
        \item[\bf \ref{app:additive}]: \nameref{app:additive}
        \item[\bf \ref{app:nonp}]: \nameref{app:nonp}
    \end{itemize}
\item \textbf{Appendix~\ref{app:casestudy}: \nameref{app:casestudy}}
    \begin{itemize}
        \item[\bf \ref{app:casestudy:fig}]: \nameref{app:casestudy:fig}
        \item[\bf \ref{app:casestudy:menv}]: \nameref{app:casestudy:menv}
        \item[\bf \ref{app:casestudy:cens}]: \nameref{app:casestudy:cens}
    \end{itemize}
\item \textbf{Appendix~\ref{app:pkg}: \nameref{app:pkg}}
\item \textbf{Appendix~\ref{app:theory}: \nameref{app:theory}}
    \begin{itemize}
        \item[\bf \ref{app:proofs}]: \nameref{app:proofs}
        \item[\bf \ref{app:lemmata}]: \nameref{app:lemmata}
    \end{itemize}
\end{itemize}

\section{Additional information and results}\label{app:tramicp}

\subsection{Invariance tests based on the Wald statistic}\label{sec:wald}

For linear shift \tram{s}, $(\pZ,\trafolin)$-invariance
implies the absence of main and interaction effects involving the
environments if we include the environment variable as a covariate into the model
(see Proposition~\ref{prop:wald} below); this can be tested for using a Wald
test for both continuous and discrete responses (Algorithm~\ref{alg:tramicp}).

We now introduce notation for including main and interaction effects
involving the environments. Let $\otimes$ denote the Kronecker product and
define, for all $S \subseteq [d]$, $m^S : \calX^S \times \calE \to \RR^{q(1 +
\lvert S \rvert)}$, $(\rx^S, \evec)
\mapsto (1, (\rx^S)^\top)^\top \otimes \evec$, which sets up first order interaction
between environments $\evec$ and covariates $\rx^S$ together with environment main
effects.
Let $\trafos{\calX\times\calE}^*$ denote the class of all transformation
functions on $\calY \times (\calX \times \calE) \subseteq \RR \times \RR^d \times \RR^q$
(see Definition~\ref{def:tram}).
For a fixed vector of basis functions
$\basisy : \calY \to \RR^b$
(see Section~\ref{sec:comp} for typical choices of
bases and their correspondence to commonly used regression models), we
define
$\trafowald(\basisy) \coloneqq \{\h \in
\trafos{\calX\times\calE}^* \mid \exists
(\thetavec, \shiftparm, \gammavec) \in \Theta \times \RR^{d} \times
\RR^{q(1 + d)} : \forall \rx \in \calX, \evec \in \calE: \h (\bcd
\mid \rx, \evec) = \basisy(\bcd)^\top\thetavec + \rx^\top\shiftparm + m(\rx,
\evec)^\top \gammavec\}$
where $\Theta \subseteq \overline{\RR}^b$.
Thus, the transformation functions are parameterized
by $\parm \coloneqq (\thetavec, \shiftparm, \gammavec) \in \Theta
\times \RR^{d} \times \RR^{q(1 + d)}$.
For all $S \subseteq [d]$, this global model class induces subclasses
$\trafowaldS(\basisy)$ and $\trafos{\calX^S}^{\text{Wald}}(\basisy)$ as
described in Section~\ref{sec:sctram}. For $(\pZ, \calY, \calX\times\calE,
\trafowald(\basisy))$ satisfying Assumption~\ref{asmp:densities},
we define the log-likelihood function
$\ell^{\text{Wald}} : \trafowald(\basisy) \times \calY \times \calX \times \calE
\to \RR$ with $\ell : (\h, \ry, \rx, \evec) \mapsto \log f_{\rY\mid\rX=\rx,
\rE=\evec}(\ry; \h)$. For all subsets of covariates $S\subseteq[d]$, we then
estimate the \tram{} $\pZ \circ \h_{\parm^S}$, $\h_{\parm^S} \in
\trafowaldS(\basisy)$ and consider the hypothesis test $H_0(S): \gammavec^S = 0$.

\begin{proposition}\label{prop:wald}\sloppy
Assume Setting~\ref{set:env} with $\trafosub = \trafos{\calX}^{\text{Wald}}(
\basisy)$. Let $S \subseteq [d]$ be given and suppose that the canonical
conditional CDF of
$Y$ given $\rX^S$ and $\rE$ is an element of $\calM(\pZ,\calY,\calX^S\times\calE,
\trafowaldS)$; that is, there exist $(\thetavec, \shiftparm^S, \gammavec^S)
\in \Theta \times \RR^{\lvert S \rvert} \times \RR^{q(1 + \lvert S
\rvert)}$ \st the canonical conditional CDF equals $\pZ(\basisy(\bcd)^\top
\thetavec + (\rx^S)^\top\shiftparm^S + m^S(\rx^S, \evec)^\top\gammavec^S)$.
Then $S$ is $(\pZ, \trafos{\calX}^{\text{Wald}})$-invariant if and
only if $\gammavec^S = 0$.
\end{proposition}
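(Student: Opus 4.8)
The plan is to establish the two implications separately, reading them off from the definition of $(\pZ,\trafos{\calX}^{\text{Wald}})$-invariance (Definition~\ref{def:traminv}) together with the hypothesis that the canonical conditional CDF of $Y$ given $(\rX^S,\rE)$ equals $\pZ\bigl(\basisy(\bcd)^\top\thetavec + (\rx^S)^\top\shiftparm^S + m^S(\rx^S,\evec)^\top\gammavec^S\bigr)$ for $\Prob_{(\rX^S,\rE)}$-almost all $(\rx^S,\evec)$. The implication ``$\gammavec^S=\mathbf 0 \Rightarrow S$ is $(\pZ,\trafos{\calX}^{\text{Wald}})$-invariant'' is immediate: with $\gammavec^S=\mathbf 0$ the displayed CDF no longer depends on $\evec$, hence $(Y\mid\rX^S=\rx^S,\rE=\evec)$ and $(Y\mid\rX^S=\rx^S)$ agree for a.e.\ $(\rx^S,\evec)$, with common conditional CDF $\pZ(h^S(\bcd\mid\rx^S))$ for $h^S(\bcd\mid\rx^S)\coloneqq\basisy(\bcd)^\top\thetavec + (\rx^S)^\top\shiftparm^S$. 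This $h^S$ has exactly the parametric form defining $\trafos{\calX^S}^{\text{Wald}}(\basisy)$, the submodel of $\trafos{\calX}^{\text{Wald}}(\basisy)$ induced by $S$, and it is ERCI on $\calY$ (inherited from the hypothesis, where the transformation function is ERCI on $\calY$, upon setting $\gammavec^S=\mathbf 0$), so Definition~\ref{def:traminv} is satisfied.

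For the converse, suppose $S$ is $(\pZ,\trafos{\calX}^{\text{Wald}})$-invariant, witnessed by some $h^S\in\trafos{\calX^S}^{\text{Wald}}(\basisy)$, say $h^S(\bcd\mid\rx^S)=\basisy(\bcd)^\top\thetatildevec + (\rx^S)^\top\betatildevec^S$, so that $\Prob(Y\le\bcd\mid\rX^S=\rx^S,\rE=\evec)=\pZ(h^S(\bcd\mid\rx^S))$ for a.e.\ $(\rx^S,\evec)$. Since $\pZ$ is strictly increasing on $\RR$ (hence injective on $\overline\RR$), comparing this with the representation from the hypothesis shows that the two transformation functions coincide on $\calY$: for a set of $(\rx^S,\evec)$ of full $\Prob_{(\rX^S,\rE)}$-measure and every $\ry\in\calY$,
\begin{equation*}
\basisy(\ry)^\top\thetatildevec + (\rx^S)^\top\betatildevec^S \;=\; \basisy(\ry)^\top\thetavec + (\rx^S)^\top\shiftparm^S + m^S(\rx^S,\evec)^\top\gammavec^S .
\end{equation*}
Evaluating at any $\ry\in\calY$ for which both sides are finite and rearranging, this reads $m^S(\rx^S,\evec)^\top\gammavec^S = c_0 + (\rx^S)^\top\dvec$ $\Prob_{(\rX^S,\rE)}$-almost surely, with $c_0\coloneqq\basisy(\ry)^\top(\thetatildevec-\thetavec)\in\RR$ and $\dvec\coloneqq\betatildevec^S-\shiftparm^S\in\RR^{\lvert S\rvert}$. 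Now $m^S(\rx^S,\evec)=(1,(\rx^S)^\top)^\top\otimes\evec$ assembles the environment main effects $\evec$ together with the first-order $\rX^S\times\rE$ interactions $\{x^j\evec\}_{j\in S}$; since these regressors are linearly independent of $1$ and the coordinates of $\rX^S$ in $L^2(\Prob_{(\rX^S,\rE)})$ --- a non-degeneracy property of the joint law of $(\rX^S,\rE)$ --- all coefficients in the identity must vanish, so in particular $\gammavec^S=\mathbf 0$.

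The only substantive step is this last identification; everything else is unwinding Definition~\ref{def:traminv} plus injectivity of $\pZ$. The non-degeneracy of $(\rX^S,\rE)$ is genuinely needed here and cannot be dropped: were $m^S(\rX^S,\rE)^\top\gammavec^S$ to be $\Prob_{\rX^S}$-almost surely constant in $\rE$ for some $\gammavec^S\neq\mathbf 0$, then $S$ would still be invariant while $\gammavec^S\neq\mathbf 0$, and the equivalence would fail. I would therefore state this non-degeneracy explicitly --- either as part of Setting~\ref{set:env} or as the standing identifiability condition that already underpins the Wald test of Algorithm~\ref{alg:tramicp} and makes ``the'' $\gammavec^S$ of the statement well defined --- and note that, as the argument for the easy direction makes clear, it enters only in the implication ``invariant $\Rightarrow\gammavec^S=\mathbf 0$''.
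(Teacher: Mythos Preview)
Your proof is correct and follows essentially the same two-step approach as the paper: the forward direction is handled identically, and for the converse you equate the invariant transformation function with the hypothesized one via injectivity of $\pZ$. The paper's own argument for the converse is slightly more abbreviated --- it constructs $\h_{\parm^S}\in\trafowaldS(\basisy)$ with $\parm^S=(\thetatildevec,\betatildevec^S,\mathbf 0)$ and notes that it agrees with $h^S$, then simply declares the proof complete --- leaving implicit the parameter-identifiability step you spell out. Your observation that a non-degeneracy condition on the joint law of $(\rX^S,\rE)$ is genuinely needed (so that the $\gammavec^S$ of the statement is well defined and the equality of transformation functions forces $\gammavec^S=\mathbf 0$) is a valid point that the paper does not make explicit; your remark that this condition also underpins the Wald test itself is apt.
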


A proof is given in Appendix~\ref{proof:wald}.

The Wald test uses the quadratic test statistic
$(\hat\gammavec^S)^\top\hat\Sigma_{\hat\gammavec^S}/n\hat\gammavec^S$
which converges in distribution to a $\chi^2$-distribution with
$\rank\hat\Sigma_{\hat\gammavec^S}$ degrees of freedom
\citep[under further regularity conditions, see][Theorems~1--3]{hothorn2018most}.
Here, $(\hat\Sigma_{\hat\gammavec^S})_{ij}
\coloneqq [\rI(\h_{\hat\parm^S}; \rX^S, \rE)]^{-1}_{ij}$, $i,j \in
\{\lvert S \rvert + 1, \dots, \lvert S\rvert + q(1 + \lvert S\rvert)\}$,
denotes the estimated variance-covariance matrix of the model restricted to
the main and interaction effects involving the environments, where
$\rI(\h_{\hat\parm^S};\rX^S,\rE)$ denotes an estimate of the Fisher
information,
which for all $S \subseteq [d]$ and
$\parm^S \in \Theta \times \RR^{\lvert
S\rvert} \times \RR^{q(1 + \lvert S\rvert)}$, is defined as $\rI(\h_{\parm^S};
\rX^S, \rE) \coloneqq \Ex\left[-\frac{\partial}{\partial \parm^S\partial
(\parm^S)^\top} \ell^{\text{Wald}}(\h_{\parm^S}; \rY, \rX^S, \rE)\mid\rX^S,
\rE\right]$.

\begin{algorithm}[!ht]
\caption{\tram-Wald invariance test}\label{alg:tramicp}
\begin{algorithmic}[1]
\Require Data $\calD_n$ from Setting~\ref{set:env}, $S \subseteq [d]$,
$\trafos{\calX^S\times\calE} \subseteq \trafowaldS(\basisy)$
\State Fit the \tram{}:
$\h_{\hat\parm^S_n} \gets \argmax_{\h_{\parm^S} \in
\trafos{\calX^S\times\calE}} \ell(\h_{\parm^S}; \calD_n)$
\State Compute the variance-covariance matrix:
$\hat\Sigma_{\hat\gammavec^S_n} \gets [\rI(\h_{\hat\parm^S_n};
\calD_n)]^{-1}_{\hat\gammavec^S}, \quad K_S \gets \rank \hat\Sigma_{\hat\gammavec^S}$
\State Compute Wald $p$-value:
$p_{S,n}(\calD_n) \gets 1 - F_{\chi^2(K_S)}\{(\hat\gammavec^S_n)^\top
\hat\Sigma_{\hat\gammavec^S_n}\hat\gammavec^S_n\}
$
\State \Return{$p_{S,n}(\calD_n)$}
\end{algorithmic}
\end{algorithm}

\subsection{Collapsibility and closure under marginalization}\label{app:collaps}

Linear shift \tram{}s, like the general additive noise model, are in general
not closed under marginalization. That is, when omitting variables from a correctly
specified \tram{}, the reduced model is generally no longer a \tram{} of
the same family and linear shift structure. This is closely related to
non-collapsibility. Non-collapsibility has been studied for contingency tables
\citep{whittemore1978collapsibility},
logistic-linear models \citep{guo1995collapsibility} and rate/hazard ratio models
\citep{greenland1996absence,martinussen2013collapsibility}. In non-collapsible
models, the marginal effect measure (obtained from regressing $\rY$ on $\rX$) is
not the same as in the conditional effect measure (the regression additionally
adjusting for $\rV$) even if $\rX$ and $\rV$ are independent. For instance, in
binary logistic regression, the conditional odds ratio given a binary exposure
across a conditioning variable with two strata is in general not the same as the
marginal odds ratio and may even flip sign
\citep[which is related to Simpson's paradox,][]{hernan2011simpson}.
Even if the marginal model is correctly specified, the conditional model
may not be correctly specified, and vice versa, which has been illustrated in
binary GLMs \citep{robinson1991logistic} and the Cox proportional hazards model
\citep{ford1995cox}.
Collapsibility depends on the chosen link function. In GLMs, the identity- and
log-link give rise to collapsible effect measures (difference in expectation, and
log rate-ratios, respectively). In transformation models, the most commonly used
link functions (logit, cloglog, loglog, probit) generally yield non-collapsible
effect measures \citep{gail1984biased,gail1986adjusting,greenland1999nc}.

It is possible to connect collapsibility with conditional independence
statements. For instance, the conditional odds ratio of a covariate $X$ for a
binary outcome $\rY$ is collapsible over another (discrete) covariate $V$ if
either $X \indep V \mid \rY$ or $\rY \indep V \mid X$
\citep{greenland1999nc}.
Recently, collapsibility has also been discussed in the context of causal
parameters \citep{didelez2022collapsibility}.

\subsection{Additional examples}\label{app:addex}

\begin{example}[Count regression]\label{ex:count}
Count random variables take values in $\calY\coloneqq\{0, 1, 2, \dots\}$. Without
restricting the distribution for $F_{\rY \mid \rX^{S_*} = \rx^{S_*}}$ to a
parametric
family, we can formulate models for the count response via $F_{\rY \mid
\rX^{S_*} = \rx^{S_*}}(\bcd) = \pZ(\hY(\bcd) - (\rx^{S_*})^\top \shiftparm)$,
where $\hY$ is
an increasing step function (with jumps at points in $\calY$)
and $\pZ$ a user-specified continuous cumulative
distribution function with log-concave density \citep[log-concavity ensures
uniqueness of the maximum likelihood estimator,][]{siegfried2020count}.

For count responses, we can define a family of linear shift \tram{s}
with support $\calY \coloneqq \{0, 1, 2, \dots\}$, given by $\calM(\pZ, \calY,
\calX, \trafolin)$. For all $\rx \in \calX$, the
transformation function $\h(\bcd \mid \rx)
\coloneqq \hY(\bcd) - \rx^\top\shiftparm$ is a right-continuous step function
with steps at the integers and linear shift effects.
The log-likelihood contribution for a single observation $(\ry, \rx)$ is given
by $\log\pZ(\h(0 \mid \rx))$ if $\ry = 0$ and $\log(\pZ(\h(\ry \mid \rx))
- \pZ(\h(\ry - 1 \mid \rx))$ for $\ry \geq 1$. The exact form of score
residual depends on the choice of $\pZ$. The (generalized) inverse
transformation function is given by $\h^{-1} : (\rz, \rx) \mapsto
\lfloor\hY^{-1}(\rz + \rx^\top\shiftparm)\rfloor$.
\end{example}

\begin{example}[Parametric survival regression]
\label{ex:surv}
Understanding the causal relationship between features and patient survival is
sought after in many biomedical applications. Let $\rY$ be a (strictly) positive
real-valued random variable, \ie $\calY \coloneqq \RR_{+}$. The Weibull
proportional hazards model is defined via $F_{Y\mid\rX^{S_*}=\rx^{S_*}}(\bcd) = 1 -
\exp(-\exp(\eparm_1 + \eparm_2 \log(\bcd) - (\rx^{S_*})^\top\shiftparm))$, with $\eparm_1
\in \RR, \eparm_2 > 0$. Here, $\shiftparm$ is
interpretable as a vector of log hazard ratios \citep{kleinbaum2012parametric}.
The model can be written as $F_{\rY\mid\rX^{S_*}=\rx^{S_*}}(\bcd) = F(\hY(\bcd) -
(\rx^{S_*})^\top\shiftparm)$, where $F(\bcd) \coloneqq 1 - \exp(-\exp(\bcd))$ denotes
the standard minimum extreme value distribution and $\hY(\bcd)\coloneqq \eparm_1 +
\eparm_2 \log(\bcd)$. The Cox proportional hazard model
\citep{cox1972regression} is obtained as an extension of the Weibull
model by allowing $\hY$ to be a step function (with jumps
at the observed event times) instead of a log-linear function.

For the Weibull proportional hazards model, we fix $\pZ(\rz) = 1 - \exp(-\exp(
\rz))$ and define a family of log-linear transformation functions
$\trafosub^{\text{log-lin}} \coloneqq \{\h \in
\trafolin \mid \exists (\eparm_1, \eparm_2) \in
\RR \times \RR_+ : \forall \ry \in \calY, \rx \in \calX \ \h(\ry \mid \rx)
= \eparm_1 + \eparm_2 \log(\ry) - \rx^\top\shiftparm\}$.
The log-likelihood contribution for an exact response is given by the
log-density and an uninformatively right-censored observation at time $t$
with covariates $\rx \in \calX$ contributes the log-survivor function evaluated
at $t$, \ie $\log(1 - \pZ(\h(t\mid \rx)))$, to the log-likelihood. The inverse
transformation function is given by $(\rz, \rx) \mapsto \hY^{-1}(\rz +
\rx^\top\shiftparm) \coloneqq \exp(\eparm_2^{-1}(\rz - \eparm_1 +
\rx^\top\shiftparm))$.
\end{example}

\subsection{Full identifiability of the causal parents}\label{app:id}

\begin{definition}[Full identifiability of the causal parents]\label{def:pointid}\sloppy
Let $\calC$ denote a class of structural causal models.
Then, the set of causal parents is said to be
$\calC$-\emph{fully identifiable}
if for all pairs $C_1,C_2 \in \calC$ it holds that
\begin{equation}
    \Prob^{C_1}_{(Y,\rX)} = \Prob^{C_2}_{(Y,\rX)} \implies
    \pa_{C_1}(Y) = \pa_{C_2}(Y).
\end{equation}
\end{definition}

\subsection[If an environment is a causal parent]{If an environment variable is
a causal parent}\label{app:empty}

In case $\rE$ is a causal parent of $Y$ but not included in the set of
predictors for causal feature selection, \tramicp{}, like nonparametric ICP and
other versions of ICP, outputs the empty set, assuming faithfulness and oracle
invariance tests.
Proposition~\ref{prop:empty} shows that, (i) assuming the induced distribution
is faithful w.r.t.\ the induced graph, the set of identifiable causal predictors
is empty; and (ii) additionally assuming oracle invariance tests, \tramicp{}
outputs the empty set with probability 1. Lastly, Example~\ref{ex:empty:boxcox}
shows that, if faithfulness is violated, nonparametric ICP and \tramicp{} can
produce non-empty and incorrect outputs in Box--Cox-type transformation models.
However, this counterexample is contrived in that it makes use of specifically
tuned coefficients.
This may not come as a surprise in that
faithfulness violations
occur with probability zero when choosing the coefficients of a linear Gaussian SCM randomly
according to a distribution that is absolutely continuous with respect to Lebesgue measure
\citep[Theorem~3.2]{spirtes2000causation}.
We hypothesize that similar statements hold for
linear shift \tram{s}, too.
Indeed, as discussed in Appendix~\ref{app:faith}, faithfulness violations
did not occur empirically when sampling the coefficients of linear shift
\tram{s} from a continuous distribution.

\begin{setting}[Data from multiple environments]\label{set:empty}\sloppy
Let $\calY$, $\calX$, $\pZ \in \calZ$ be as in Definition~\ref{def:tram} and let
$\trafos{\calX\times\calE} \subseteq \trafos{\calX\times\calE}^*$ be a class of
transformation functions such that Assumptions~\ref{asmp:densities}
and~\ref{asmp:closure} hold. Let $C_*$ be a structural causal \tram{}
(Definition~\ref{def:tramscm}) over $(\rY, \rX, \rE)$ such that
\begin{align*}
C_* \coloneqq \begin{cases}
    E^k \coloneqq \ m_k(\rX, N_{E^k}), \quad \forall k \in [q] \\
    X^j \coloneqq \ g_j(\rX, \rE, \rY, N_{X^j}), \quad  \forall j \in [d] \\
    Y \ \coloneqq \ \h^{-1}_*(Z \mid \rX^{S_*}, \rE),
\end{cases}
\end{align*}
where $\h_* \in \trafos{\calX^{S_*}\times\calE}$ and $(Z, N_\rX, N_\rE)$ denotes
the jointly independent noise variables. In this setup, the random vector $\rE$
encodes the environments and takes values in $\calE \subseteq \RR^q$. We further
assume that the parents of $\rE$ can only be non-descendants of $Y$ in
$\calG_*$; $\rE$ may be discrete or continuous.
\end{setting}

\begin{proposition}\label{prop:empty}
Assume Setting~\ref{set:empty} and that the induced distribution is faithful
w.r.t.\ the induced graph. Then, (i) $S_I = \emptyset$. Further assume that, for
all $S \subseteq [d]$, $p_{S,n}(\calD_n) = 0$ a.s.\ if $S$ is not
$(\pZ,\trafosub)$-invariant. Then, (ii) for all $\alpha \in (0, 1)$,
\[
\Prob(S_n^{\phi}\subseteq\pa_{C_*}(Y)) = 1.
\]
\end{proposition}
A proof is given in Appendix~\ref{proof:empty}.

\begin{example}\label{ex:empty:boxcox}
Consider a structural causal linear shift \tram{} with the following
assignments:
\begin{align*}
    E &\coloneqq N_E\\
    X &\coloneqq E + N_X\\
    Y &\coloneqq g(X + N_Y),
\end{align*}
where $N_E, N_Y, N_X$ are jointly independent and standard normally distributed
and, for all $z \in \RR$, $g(z) \coloneqq F_{\chi^2_5}^{-1}(\Phi(z))$. It
follows that $Y \indep E \mid X$ and $Y$ given $X$ is correctly specified by a
linear shift \tram{}. However, the following SCM (allowing for dependent errors)
induces the same observational distribution:
\begin{align*}
    \tilde E &\coloneqq N_{\tilde E}\\
    \tilde Y &\coloneqq \tilde g(\tfrac{1}{\sqrt{2}} \tilde E + N_{\tilde Y})\\
    \tilde X &\coloneqq (1, \tilde Y, \tilde E) \bvec + N_{\tilde X},
\end{align*}
where $N_{\tilde E} = N_E$, $\tilde N_Y \coloneqq \tfrac{1}{\sqrt{2}}(g^{-1}(Y)
- E) = \tfrac{1}{\sqrt{2}}(N_X + N_Y)$ are the population score residuals from
regressing $Y$ on $E$, for all $z\in\RR$, $\tilde g(z) \coloneqq g(\sqrt{2} z)$
is the baseline transformation, and $\tilde N_X \coloneqq X - (1, Y, E) \bvec$,
where $\bvec$ is the population OLS coefficient from regressing $X$ on $Y$ and
$E$. In this \tram-SCM, it holds that $\tilde E \indep \tilde Y \mid \tilde X$
(which violates faithfulness) and
oracle ICP outputs (incorrectly) that $\{\tilde X\}$ is a subset of the causal
parents: $\tilde Y$ given $\tilde X$ is correctly specified by a shift \tram{}
with $\pZ = \Phi$, and, by the conditional independence $\tilde E \indep \tilde
Y \mid \tilde X$, $\{\tilde X\}$ is $(\pZ, \trafolin)$-invariant and \tramicp{}
thus outputs $\{\tilde X\}$. However, the violation of faithfulness (which
allows for the incorrect result) relies on fine-tuned coefficients: If the
coefficient for $\tilde E$ in the structural equation of $\tilde Y$ changes
slightly, the output is more likely to be empty (see Table~\ref{tab:empty}).
\begin{table}[!ht]
\centering
\begin{tabular}{lr}
\toprule
\bf Coefficient $\tilde E \to \tilde Y$ & \bf Fraction output equals $\{X\}$\\
\midrule
$\tfrac{1}{\sqrt{2}} - 0.2$  & 0.00\\
$\tfrac{1}{\sqrt{2}} - 0.1$  & 0.00 \\
$\tfrac{1}{\sqrt{2}}$        & 1.00 \\
$\tfrac{1}{\sqrt{2}} + 0.1$  & 0.00 \\
$\tfrac{1}{\sqrt{2}} + 0.2$  & 0.00 \\
\bottomrule
\end{tabular}
\caption{%
Fraction \code{BoxCoxICP} outputs $\{\tilde X\}$ in the structural causal
\tram{} of Example~\ref{ex:empty:boxcox},
when changing the coefficient for $\tilde E$
in the structural equation of $\tilde Y$. The results are for a sample size of
3000 and 20 repetitions.
}\label{tab:empty}
\end{table}
\end{example}

\subsection{Uninformatively censored responses}\label{app:cens}

Censoring often occurs in studies with time-to-event outcomes
\citep{kleinbaum2012parametric,kalbfleisch2011statistical}.
Here, we extend \tramicp{} to uninformatively censored responses.

\begin{setting}[Censoring]\label{set:cens}\sloppy
Assume Setting~\ref{set:env} in which $Y$ is unobserved. Instead, we have
access to additional variables $L,U\in\calY$ with joint density $p_{(L,U)}$
\wrt $\mu \otimes \mu$ (see below Assumption~\ref{asmp:densities}
for the definition of $\mu$) and which fulfill $L < U$ with probability 1.
Further, we assume that
$(L, U) \indep \rY \mid \rX^{S_*}$,
\ie \emph{uninformative} censoring \citep{kalbfleisch2011statistical}. In
addition to $L$ and $U$, we observe the indicator random variables $\delta_L$
and $\delta_I$ which carry the information about whether an observation is
left-censored $Y \in (-\infty, L]$ ($\delta_L \coloneqq \1(\rY \leq L)$),
interval-censored $Y \in (L, U]$ ($\delta_I \coloneqq \1(L < \rY \leq U)$), or
right-censored $\rY \in (U, \infty]$ ($1 - \delta_I - \delta_L = \1(Y > U)$).
Let $\mu_{\{0,1\}}$ denote the counting measure on $\{0,1\}$. For
$\Prob_{\rX^{S_*}}$-almost all $\rx^{S_*}$ and all $l, u \in \calY$ s.t.~$l < u$
and $d_L, d_I \in \{0,1\}$, the joint density of $(\delta_L, \delta_I, L,U)$
(\wrt to the product measure $\mu_{\{0, 1\}} \otimes \mu_{\{0,1\}} \otimes \mu
\otimes \mu$) conditional on $\rX^{S_*}=\rx^{S_*}$, exists, and is denoted by
$p$.
\end{setting}

Under Setting~\ref{set:cens}, we have
\begin{align}
   &p(\delta_L, \delta_I, l, u \mid \rX^{S_*} = \rx^{S_*}) \\
   &= p_{(L,U) \mid \rX^{S_*}}(l, u \mid \rx^{S_*})
       p_{(\delta_L, \delta_I) \mid L, U, \rX^{S_*}}(\delta_L,
       \delta_I \mid l, u, \rx^{S_*}) \\
   &=
   \begin{cases}
      p_{(L,U) \mid \rX^{S_*}}(l,u \mid \rx^{S_*})
        (\pZ(\h_*(l\mid\rx^{S_*}))) & \text{if }\delta_L = 1, \delta_I = 0, \\
       p_{(L,U) \mid \rX^{S_*}}(l,u \mid \rx^{S_*})
        (\pZ(\h_*(u\mid\rx^{S_*}))- \pZ(\h_*(l\mid\rx^{Z_*}))) & \text{if }\delta_L = 0, \delta_I = 1, \\
       p_{(L,U) \mid \rX^{S_*}}(l,u \mid \rx^{S_*})
        (1 - \pZ(\h_*(u\mid\rx^{S_*}))) & \text{if }\delta_L = 0, \delta_I = 0, \\
       0 & \text{if }\delta_L = 1, \delta_I = 1,
   \end{cases}
\end{align}
which follows from
considering the four cases of the values for
$(\delta_L,\delta_I)$,
\eg $\Prob(\delta_L = 1, \delta_I = 0 \mid L = l, U = u, \rX^{S_*} = \rx^{S_*}) =
\Prob(\delta_L = 1 \mid L = l, \rX^{S_*} = \rx^{S_*}) =
\Prob(Y \leq l \mid \rX^{S_*}=\rx^{S_*}) = \pZ(\h_*(l\mid
\rx^{S_*}))$.
The log-likelihood is given by the log conditional
density (and is undefined for the case $\delta_L = \delta_I = 1$)
\begin{align}
\tilde\ell &:\trafosub\times\{(0,1), (1, 0), (0, 0)\}\times\calY^2\times \calX\to\RR,\\
\tilde\ell &:(\h; \delta_L,\delta_R, l, u, \rx) \mapsto \delta_L \log\pZ(\h(l\mid
\rx)) + \delta_I \log (\pZ(\h(u\mid\rx))-\pZ(\h(l\mid\rx)))\\
&\quad + (1 - \delta_L - \delta_I) \log (1 - \pZ(\h(u\mid\rx))) +
\log p_{(L,U) \mid \rX^{S_*}}(l,u \mid \rx^{S_*}).
\end{align}
The definition of $(\pZ,\trafosub)$-invariance and
Proposition~\ref{prop:parents} still hold when considering the unobserved $\rY$ in
Setting~\ref{set:cens}. However, when using \tramicp{}, since $\rY$ is unobserved,
we cannot test statements about $\rY$ directly---the censoring needs to
be taken into account.
For $(\pZ, \calY, \calX, \calH_{\calY,\calX})$ satsifying
Assumptions~\ref{asmp:densities} and~\ref{asmp:closure}, uninformatively censored
score residuals can be defined via $\tilde R : \trafosub \times \{(0,1), (1, 0), (0, 0)\}
\times\calY^2 \times\calX\to\RR$, with
$\tilde R : (\h; \delta_L, \delta_I, l, u,
\rx) \mapsto \frac{\partial}{\partial\alpha}
\tilde\ell(\h + \alpha; \delta_L, \delta_I, l, u, \rx)$. The following
proposition extends
invariance based on score residuals (Proposition~\ref{prop:residualinv}) to
uninformatively censored responses.
%
\begin{proposition}[Score-residual-invariance for uninformatively censored responses]
\label{prop:rinvcens}
Assume Setting~\ref{set:cens}. Then, we have the following implication:
\begin{align}
S \mbox{ is } (\pZ,\trafosub)\mbox{-invariant}
\implies
&\Ex[\tilde R(\h^S; \delta_L, \delta_I, L, U, \rX^S) \mid \rX^S] = 0,
\mbox{ and}\\
&\Ex[\Cov[\rE, \tilde R(\h^S; \delta_L, \delta_I, L, U, \rX^S)\mid\rX^S]] = 0.
\end{align}
\end{proposition}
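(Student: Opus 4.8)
The plan is to rerun the proof scheme of Lemma~\ref{prop:sresidmeanzero} and Proposition~\ref{prop:residualinv} with the censored score residual $\tilde R$ in place of $R$. The crucial observation is that, under $(\pZ,\trafosub)$-invariance, the censored log-likelihood $\tilde\ell$ is (up to an $\h$-free term) the logarithm of a genuine conditional density of the \emph{observed} triple $(\delta_L,\delta_I,L,U)$, so the standard ``the score has conditional mean zero'' identity applies. Concretely: fix a $(\pZ,\trafosub)$-invariant set $S$ and let $\h^S\in\trafosubS$ be its invariant transformation function (Definition~\ref{def:traminv}); then for $\Prob_{(\rX^S,\rE)}$-almost all $(\rx^S,\evec)$, the law of $\rY\given(\rX^S=\rx^S,\rE=\evec)$ equals that of $\rY\given\rX^S=\rx^S$ and, by Assumption~\ref{asmp:densities}, has canonical density $f_{\rY\given\rX^S=\rx^S}(\bcd;\h^S)$. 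I also use that, under Setting~\ref{set:cens}, $(L,U)\indep\rY\given(\rX^S,\rE)$: the uninformative-censoring assumption (stated there for the conditioning sets $\rX^S$) extends to conditioning additionally on the environments, since the censoring mechanism does not involve the unobserved $\rY$. Hence the true conditional density of $(\delta_L,\delta_I,L,U)$ given $(\rX^S=\rx^S,\rE=\evec)$ is exactly the three-case expression displayed after Setting~\ref{set:cens}, with $(\h_*,\rx^{S_*})$ replaced by $(\h^S,\rx^S)$ and the censoring density taken conditionally on $(\rX^S,\rE)$; call it $g_0(\delta_L,\delta_I,l,u)$.

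For $\alpha\in\RR$, Assumption~\ref{asmp:closure} gives $\h^S+\alpha\in\trafosubS$, and the shift preserves the support $\calY$, so $(\pZ,\h^S+\alpha)$ is again a valid conditional CDF with canonical density $f_{\rY\given\rX^S=\rx^S}(\bcd;\h^S+\alpha)$; feeding this response law through the same censoring mechanism gives a function $g_\alpha(\delta_L,\delta_I,l,u)$, namely the same three-case formula with $\h^S$ replaced by $\h^S+\alpha$, which for every $\alpha$ is a probability density (\wrt the relevant product measure; the degenerate case $\delta_L=\delta_I=1$ carries mass $0$) and therefore satisfies $\int g_\alpha=1$. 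Since $\tilde R(\h^S;\delta_L,\delta_I,l,u,\rx^S)$ is by definition the $\alpha$-derivative at $\alpha=0$ of the $\h$-dependent part of $\tilde\ell(\h^S+\alpha;\delta_L,\delta_I,l,u,\rx^S)$, and $\log g_\alpha$ differs from that part only by a term not depending on $\alpha$ (the log density of the censoring variables), we have $\partial_\alpha\log g_\alpha\rvert_{\alpha=0}=\tilde R(\h^S;\delta_L,\delta_I,l,u,\rx^S)$. Differentiating $\int g_\alpha=1$ at $\alpha=0$ and interchanging differentiation and integration (the same kind of regularity condition as~\eqref{eq:interch}, justified as in the discussion following Lemma~\ref{prop:sresidmeanzero}) yields $0=\int\partial_\alpha g_\alpha\rvert_{\alpha=0}=\int(\partial_\alpha\log g_\alpha\rvert_{\alpha=0})\,g_0=\Ex[\tilde R(\h^S;\delta_L,\delta_I,L,U,\rX^S)\given\rX^S=\rx^S,\rE=\evec]$ for $\Prob_{(\rX^S,\rE)}$-almost all $(\rx^S,\evec)$. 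From $\Ex[\tilde R\given\rX^S,\rE]=0$ almost surely, the tower property gives $\Ex[\tilde R\given\rX^S]=\Ex[\Ex[\tilde R\given\rX^S,\rE]\given\rX^S]=0$ (the first claim) and $\Ex[\rE\,\tilde R\given\rX^S]=\Ex[\rE\,\Ex[\tilde R\given\rX^S,\rE]\given\rX^S]=0$, so $\Cov[\rE,\tilde R\given\rX^S]=\Ex[\rE\,\tilde R\given\rX^S]-\Ex[\rE\given\rX^S]\Ex[\tilde R\given\rX^S]=0$ almost surely, whence $\Ex[\Cov[\rE,\tilde R\given\rX^S]]=0$ (the second claim).

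I expect the main obstacle to be the bookkeeping in the middle step: checking cleanly that $g_\alpha$ is a bona fide probability density for $\alpha$ near $0$ — equivalently, that the $\h$-dependent part of $\tilde\ell$ really is (up to an $\alpha$-free term) the log observed-data conditional density — and pinning down the precise form of uninformative censoring under which the conditional law of $\rY$ used above (given the observed data and $\rE$) is the correct one. Note that the first conclusion alone already follows from the uninformative-censoring statement as given in Setting~\ref{set:cens}, conditioning only on $\rX^S$; it is the covariance conclusion that forces conditioning additionally on $\rE$. As elsewhere in the paper, the interchange of differentiation and integration is taken as a regularity assumption rather than established.
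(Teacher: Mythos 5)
Your argument is correct and reaches both conclusions, but it takes a more abstract route than the paper. The paper's proof writes out $\tilde R$ explicitly in the three censoring cases and computes $\Ex[\tilde R(\h^S;\delta_L,\delta_I,L,U,\rX^S)\given\rX^S=\rx^S]$ directly: the case probabilities from the observed-data density cancel against the denominators of the score residuals, and the remaining integrand is $\dZ(\h^S(l\given\rx^S)) + \bigl(\dZ(\h^S(u\given\rx^S)) - \dZ(\h^S(l\given\rx^S))\bigr) - \dZ(\h^S(u\given\rx^S)) = 0$ pointwise, so the conclusion holds with no interchange of differentiation and integration. You instead invoke the generic score identity by differentiating $\int g_\alpha = 1$ at $\alpha=0$, which is conceptually clean and makes transparent why the result is a Bartlett-type identity for the observed-data likelihood; but it imports a regularity condition analogous to \eqref{eq:interch} that Proposition~\ref{prop:rinvcens} does not state and that the paper's explicit cancellation shows to be unnecessary here. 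Your identification of $\tilde\ell(\h^S+\alpha;\cdot)$ with $\log g_\alpha$ is exact (the censoring-density term is $\alpha$-free), and Assumption~\ref{asmp:closure} does guarantee $\h^S+\alpha\in\trafosubS$, so the normalization $\int g_\alpha=1$ is legitimate. On the covariance step, you correctly flag that one needs the uninformative-censoring condition to survive conditioning additionally on $\rE$; the paper glosses over this by saying the result ``follows analogously'' to Proposition~\ref{prop:residualinv}, which implicitly requires the same extension, so your explicit acknowledgment of this point is, if anything, more careful than the original. If you want a fully self-contained proof matching the proposition as stated, replace the differentiate-the-normalization step with the paper's direct three-term computation.
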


A proof is given in Appendix~\ref{proof:rinvcens}. Establishing theoretical
results on invariance in case of informative censoring is left for future work.

\section{Simulation results and computational details}\label{app:computationaldetails}

We now evaluate the proposed algorithms on simulated data based on randomly
chosen graphs with restrictions on the number of possible descendants and
children of the response, as well as how the binary environment indicator
affects non-response nodes. In the simulations, the conditional distribution
of $Y \mid \pa(Y)$ is correctly specified by a \tram{}, all other structural
equations are linear additive with Gaussian noise.

\subsection{Existing methods}\label{sec:existingmethods}

\subsubsection{Nonparametric ICP via conditional independence testing}\label{sec:comparators}

Throughout, we report the oracle version on nonparametric ICP. Under
Setting~\ref{set:env}, let $\calG$ denote the DAG implied by $C_*$. Assuming
the Markov property \citep[see][p.~29]{spirtes2000causation})
and faithfulness of $\Prob_{(\rY,\rX,\rE)}^{C_*}$
\wrt $\calG$, the oracle is defined as the intersection of sets $S$ for which
$Y$ is conditionally independent of $\rE$ given $\rX^S$
\citep[Proposition~4.1]{mogensen2022invariant},
\begin{align}\label{eq:oicp}
S^{\operatorname{ICP}} = \pa(\rY) \cap (\ch(\rE) \cup \pa(\an(\rY) \cap
\ch(\rE))),
\end{align}
where $\pa(\bcd)$, $\ch(\bcd)$, $\an(\bcd)$ denote parents, children and
ancestors of a node in $\calG$, respectively.

A general-purpose algorithm for causal feature selection when having access
to data from heterogeneous environments is nonparametric conditional
independence testing \citep[][Algorithm~\ref{alg:outer} with $p_{S,n}$ being
a conditional independence test for the hypothesis $\rE \indep \rY \mid
\rX^S$]{zhang2012kernel,strobl2019approximate}.

\subsubsection{ROC-based test for binary responses}\label{sec:roc}

\citet{diaz2022identifying} use a nonparametric invariance test based on
comparing the area under the ROC curves obtained from regressing a binary
response $\rY$ on $\rX^S$ and $(\rX^S, \rE)$. In case $S$ is invariant,
$\rE$ contains no further information on $\rY$, hence the AUC should be
the same. Following their approach, we fit two logistic regression models:
(i) for $\rY$ given $\rX^S$ and (ii) for $\rY$ given $\rX^S$ and $\rE$ with
the interaction term between $\rX^S$ and $\rE$. Then, we apply the DeLong
test \citep{delong1988} to test for the equality of the resulting ROCs.
We apply the ROC invariance test only to the settings with binary logistic
regression.

\subsection{Simulation setup}\label{sec:comp}

We outline the simulation setup for our experiments,
 which includes the explicit parameterizations of the
transformation function for all models, details on the data generating process
and simulation scenarios. We compare \tramicp{} with \tram-GCM and \tram-Wald
invariance tests against nonparametric ICP and oracle ICP.

\paragraph{Models and parameterization}
We consider the the binary GLM (``Binary''), a discrete-odds count transformation
model (``Cotram''), and a parametric survival model (``Weibull'').
We also show results for the normal linear model and other
transformation models for ordered and continuous outcomes
(see Tables~\ref{tab:basis} and~\ref{tab:mods}) in Appendix~\ref{app:res:other}.
For our numerical experiments, we parameterize the transformation function $\h$
in terms of basis expansions depending on the type of response $\hY(\bcd)
\coloneqq \basisy(\bcd)^\top \thetavec$, with $\basisy : \calY \to \RR^b$, and
appropriate constraints on $\thetavec \in \Theta \subseteq \overline{\RR}^b$.
Table~\ref{tab:basis} contains a summary of the bases used for continuous,
bounded continuous, count, and binary/ordered responses.

\begin{table}[!ht]
\centering
\caption{Bases expansions used for the baseline transformation in our
experiments, together with the sample space on which they are defined and
the constraints needed to ensure monotonicity. Here,
$\bern{M}(\bcd) = (a_{1,M}, \ldots, a_{M+1,M} )(\bcd)$
denotes a Bernstein basis
of order $M$, \ie
for $\calY \coloneqq [L, U] \subsetneq \RR$ and $j \in \{0, \dots, M\}$,
the $j$th basis function is given by $a_{j,M} : y \mapsto \binom{M}{j} s(y)^j
(1 - s(y))^{M-j}$, where $s : \calY \to [0, 1]$, $s(y) \coloneqq
(y - L) / (U - L)$ maps the bounded continuous response to the unit interval.
Further, for countable $\calY$ with $\lvert\calY\rvert = K$,
$\basisy_{\text{dc},K}(\ry) \coloneqq (\1(\ry = \ry_1), \dots,
\1(\ry = \ry_K))^\top$ denotes the discrete basis.
}\label{tab:basis}
\resizebox{0.95\textwidth}{!}{%
\begin{tabular}{llll}
\toprule
\bf Basis & \bf Sample space & \bf Basis functions & \bf Constraints on $\thetavec$\\
\midrule
Linear & $\calY \subseteq \RR$ & $\basisy: \ry \mapsto (1, \ry)^\top$ &
    $\theta_2 > 0$ \\
Log-linear & $\calY \subseteq \RR_+$ & $\basisy: \ry \mapsto (1, \log\ry)^\top$ &
    $\theta_2 > 0$ \\
Bernstein of order $M$  &
    $\calY \subseteq \RR$ & $\basisy: \ry \mapsto \bern{M}(\ry)$ &
    $\theta_1 \leq \theta_2 \leq \dots \leq \theta_{M + 1}$ \\
Discrete & $\calY = \{\ry_1, \dots, \ry_K\}$ &
    $\basisy: \ry \mapsto \basisy_{\text{dc},K}(\ry)$
    &
    $\theta_1 < \theta_2 < \dots < \theta_{K - 1} < \theta_K
        \coloneqq + \infty$ \\
\bottomrule
\end{tabular}}
\end{table}

\begin{table}[!ht]
\centering
\caption{
Linear shift \tram{s} used in the experiments.
The error distributions are
extended cumulative distribution functions, namely, standard normal ($\Phi$),
standard logistic ($\pSL$), and standard minimum extreme value ($\pMEV$).
The choices of basis for $\hY$ with parameters $\thetavec$ and constraints,
are listed in Table~\ref{tab:basis}. We denote the discrete basis for
an outcome with $K$ classes by $\basisy_{\text{dc},K}$.
}\label{tab:mods}
\resizebox{0.95\textwidth}{!}{%
\begin{tabular}{llll}
\toprule
\bf TRAM
& \bf Basis & \bf Error distribution $\pZ$ & \textbf{Transformation function} $\h$ \\
\midrule
Lm & Linear & $\Phi$ & $(\ry \mid \rx) \mapsto\theta_1 + \theta_2 \ry - \rx^\top\shiftparm$ \\
Binary & Discrete & $\pSL$ & $(\ry \mid \rx) \mapsto\theta - \rx^\top\shiftparm$ \\
BoxCox & Bernstein of order $M$
& $\Phi$ & $(\ry \mid \rx) \mapsto\bern{M}(\ry)^\top\thetavec - \rx^\top\shiftparm$ \\
Polr & Discrete with $K$ classes & $\pSL$ & $(\ry \mid \rx) \mapsto \basisy_{\text{dc},K}(\ry)^\top\thetavec - \rx^\top\shiftparm$ \\
Colr & Bernstein of order $M$ & $\pSL$ & $(\ry \mid \rx) \mapsto\bern{M}(\ry)^\top\thetavec - \rx^\top\shiftparm$ \\
Cotram & Bernstein of order $M$ & $\pSL$ & $(\ry \mid \rx) \mapsto\bern{M}(\ry)^\top\thetavec - \rx^\top\shiftparm$ \\
Coxph & Bernstein of order $M$ & $\pMEV$ & $(\ry \mid \rx) \mapsto\bern{M}(\ry)^\top\thetavec - \rx^\top\shiftparm$ \\
Weibull & Log-linear & $\pMEV$ & $(\ry \mid \rx) \mapsto\theta_1 + \theta_2\log\ry - \rx^\top\shiftparm$ \\
\bottomrule
\end{tabular}}
\end{table}

\paragraph{Data-generating process} In each iteration, the data are drawn from a
random DAG with a pre-specified number of potential ancestors and descendants of
the response. In the DAG, the response given its causal parents is correctly
specified by one of the linear shift \tram{s} in Table~\ref{tab:mods}. All
other structural equations are linear with additive noise. The environments are
encoded by a single Bernoulli-distributed random variable.
Algorithm~\ref{alg:dgp} details the DGP. We generated data for 100 random DAGs
and 20 repetitions per DAG. Sample sizes 100, 300, 1000, and 3000 are
considered. The DAGs were generated with at most 3 ancestors and 2 descendants
of the response (excluding the environment).

\paragraph{Summary measures}
We summarize simulation results in terms of power, measured by Jaccard
similarity of the \tramicp{} output to the true parents,
$\operatorname{Jaccard}(S_n,S_*)$, and the family-wise
error rate (the proportion in which \tramicp{} outputs a non-parent),
$\hat\Prob(S_n \not\subseteq S_*)$.
Jaccard similarity between two sets $A$ and $B$ is defined as
$\operatorname{Jaccard}(A, B) \coloneqq \lvert A \cap B \rvert / \lvert A \cup
B \rvert$ and is 1 iff $A = B$ and 0 if $A$ and $B$ do not overlap.

\paragraph{Software}
Code for reproducing the results in this manuscript is openly available on
GitHub at \url{https://github.com/LucasKook/tramicp.git}. For setting up,
fitting, and sampling from \tram{s}, we rely on packages \pkg{tram}
\citep{pkg:tram} and \pkg{cotram} \citep{pkg:cotram}. We generate and
sample from random DAGs using package \pkg{pcalg} \citep{pkg:pcalg}. The
ROC-based test described in Section~\ref{sec:roc} relies on the \pkg{pROC}
package \citep{pkg:pROC} using \code{roc.test} with \code{method = "delong"}.
To apply \tramicp{}, we use the \pkg{tramicp} package described in
Appendix~\ref{app:pkg}. Conditional independence testing is done via the GCM
test implemented in \pkg{GeneralisedCovarianceMeasure}
\citep{pkg:GeneralisedCovarianceMeasure} using a random forest as implemented
in \pkg{ranger} \citep{pkg:ranger} for both regressions.

\subsubsection{Data generating process}

Algorithm~\ref{alg:dgp} summarizes the data generating process. First, a sample
of a random DAG of $d_1$ potential ancestors of the response is generated and
perturbed with environment effects $\mB^1_E$. Afterwards, the response is
sampled from the given \tram{} with baseline and shift coefficients
$\thetavec$ and $\betavec$, respectively (\cf Table~\ref{tab:mods}). Lastly,
the DAG of $d_2$ potential descendants is generated and likewise perturbed with
environment effects $\mB^2_E$. Based on the adjacency matrix of the resulting
full DAG, we can compute the oracle ICP output (\cf Section~\ref{sec:comparators}).
We generate random DAGs using \code{pcalg::randomDAG()} and sample from them with
\code{pcalg::rmvDAG()}. We simulate from \tram{s} using \code{tram::simulate()}.

\begin{algorithm}[!ht]
\caption{Data generation from random DAGs} \label{alg:dgp}
\begin{algorithmic}[1]
\Require Number of ancestors $d_1$ and descendants $d_2$, a \code{tram} with
parameters $\thetavec \in \RR^K$, $\betavec \in \RR^{d_1}$ with $\pa(\rY) =
\supp \betavec$, sample size $n$, edge probabilities $p_A$ (ancestral graph),
$p_D$ (descendental graph),
coefficient matrices $\mB_E^1$ (environment on ancestors), $\mB_E^2$
(environments on descendants), $\mB_Y$ (response on its children), $\mB_1$ and
ancestors on descendants, routine \code{randomDAG($d,p$)} which outputs a random
DAG with $d$ nodes with connection probability $p$, routine \code{rmvDAG(DAG,
$n$)} which draws an i.i.d.~sample from the distribution implied by \code{DAG},
a routine \code{simulate(tram, newdata)} which draws i.i.d.~observations from
the conditional distribution implied by \code{tram} given \code{newdata}.
\State $E \sim \BD(0.5)$, $\evec \gets n$ i.i.d.\ observations of $E$
\gComment{Simulate environment}
\State $\mD_1 \gets$ \code{rmvDAG(randomDAG(}$d_1, p_1$\code{),$n$)} +
$\mB_E^1\evec$ \gComment{Simulate ancestors}
\State $\yvec \gets$ \code{simulate(tram(}$\thetavec, \betavec$\code{),
newdata = }$\mD_1$\code{)}
\gComment{Simulate response}
\State $\mD_2 \gets$ \code{rmvDAG(randomDAG(}$d_2, p_2$\code{),$n$)} +
$\mB_E^2\evec$ + $\mB_1 \mD_1$ + $\mB_\rY\yvec$
\gComment{Simulate descendants}
\State \Return{$\left(\mD_1 \mid \mD_2 \mid \yvec \mid \evec\right)$}
\gComment{Output simulated data}
\end{algorithmic}
\end{algorithm}

\subsubsection{Simulation scenario}\label{app:sim}

We generate data from all models
listed in Table~\ref{tab:mods}, even the ones that were not discussed in
Section~\ref{sec:results}; we run \tramicp-(\{Wald, GCM\}) and nonparametric ICP
for 100 random DAGs and 20 samples per DAG. We considered sample sizes
100, 300, 1000, 3000, and 10000. The random DAGs contain at most 3 ancestors
and 2 descendants of the response. The probability of two vertices
being connected is fixed at 0.8, which includes also edges between environment
and covariates. Bernstein polynomial bases are fixed to be of order 6. Likewise,
for the Polr model, which features an ordinal response, the number of classes
is fixed at 6. Coefficients for the baseline transformation are fixed at
$\pZ^{-1}(p_0), \pZ^{-1}(p_1), \dots, \pZ^{-1}(p_{M})$, where
$p = (p_0, p_1, \dots, p_{M})^\top$, with $p_0 < p_1 < \dots < p_{M}$,
is a vector of equally spaced
probabilities with $p_0 = 2.5 \times 10^{-4}$ and $p_{M} = 1 - p_1$ fixed.
The coefficients for the log-linear basis are fixed at $(-22, 8)^\top$ to
ensure well-behaved Weibull conditional CDFs reflecting rare events. All
covariates are standardized to mean zero and unit variance. Nodes other than
the response and environment are generated with normal errors and linear
structural equations with additive noise. The coefficients for those structural
equations are drawn from a standard uniform distribution. The non-zero coefficients
for the parents and children of the response are drawn from a normal distribution
with mean zero and variance 0.9. The environments induce only mean shifts in the
covariates by entering the structural equation linearly (Algorithm~\ref{alg:dgp}).
The coefficients are drawn from a normal distribution with mean zero and variance
10.

\subsection{Simulation results}\label{sec:results}

Figure~\ref{fig:simres} summarizes the simulation results for the scenario
described in Section~\ref{sec:comp}. All ICP procedures are level at
the nominal 5\% for all considered models.
Like ICP in linear additive noise models, \tramicp{} is conservative in terms
of FWER. For all models and invariance tests, the power (measured in terms of
Jaccard similarity to the parents) increases with sample size. For all models,
\tramicp-Wald has more power than \tramicp-GCM, which in turn slightly
outperforms nonparametric ICP for the Cotram model. The ROC invariance test is
only applied to binary responses and has comparable power to \tram-GCM and GCM,
but lower power than the Wald test.

\begin{figure}[!t]
\centering
\includegraphics[width=0.8\textwidth]{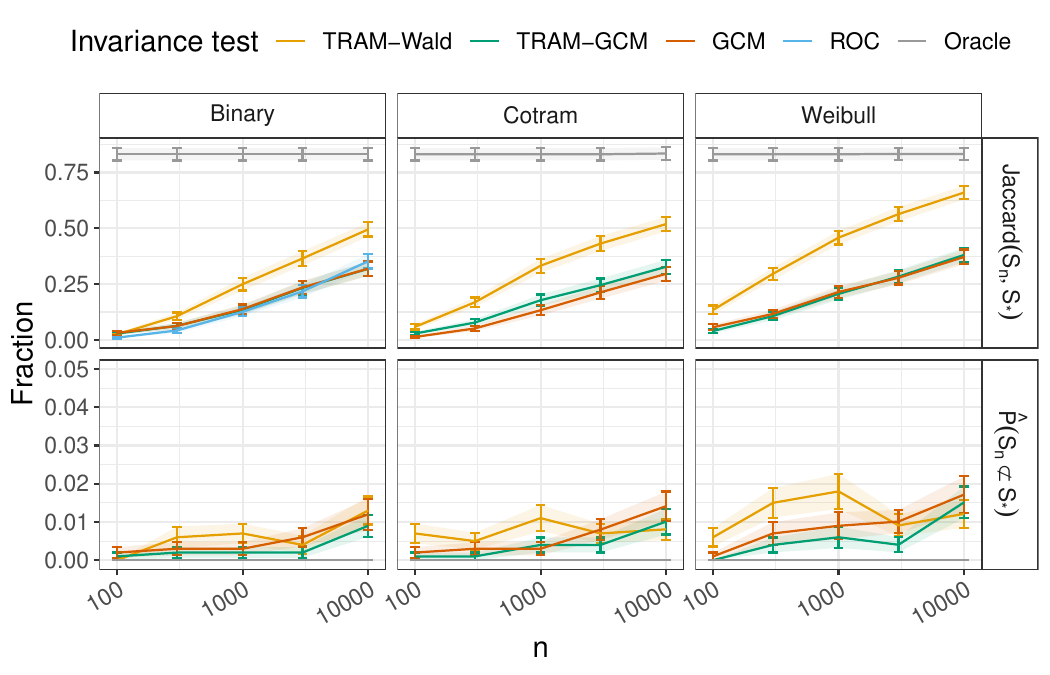}
\vspace{-0.5cm}
\caption{%
Comparison of \tramicp-GCM and \tramicp-Wald with existing methods in a binomial
GLM (``Binary''), a discrete odds count transformation model (``Cotram'') and a
Weibull model (``Weibull'') for different sample sizes. The two proposed
methods are compared against the GCM test (a nonparametric conditional
independence test), the ROC test (for binary logistic regression only) and the
oracle version of ICP, see \eqref{eq:oicp}, in terms of Jaccard similarity to
the set of parents and fraction of outputs containing non-parents (FWER). ICP
is level at the nominal 5\% with all proposed invariance tests, while \tramicp-Wald
is most powerful in this setup (\tram-Wald is not level at the nominal 5\% under
model misspecification, see Appendix~\ref{app:additive}). The ROC
invariance test is on par with \tram-GCM and GCM in terms of power.
The expression $S_n \not \subset S_*$ in the figure
should be understood as $S_n \cap S_*^c \neq \emptyset$.
}
\label{fig:simres}
\end{figure}

\subsection{Additional simulation results}\label{app:res}

\subsubsection{Other models and response types}\label{app:res:other}

Figure~\ref{fig:sim1app} contains
simulation results for the simulation scenario described in
Appendix~\ref{app:sim} for all models from Table~\ref{tab:mods} not
included in the main text, \ie BoxCox, Colr, Coxph, Lm, and Polr. The
conclusions drawn in the main text remain the same: ICP is level at the nominal
5\% for all considered invariance tests. The \tram-Wald invariance
test exerts the highest power (measured in terms of Jaccard-similarity to the
true parents), followed by the \tram-GCM invariance test and the
nonparametric conditional independence test (GCM), which perform on par.

\begin{figure}[!t]
    \centering
    \includegraphics[width=0.85\textwidth]{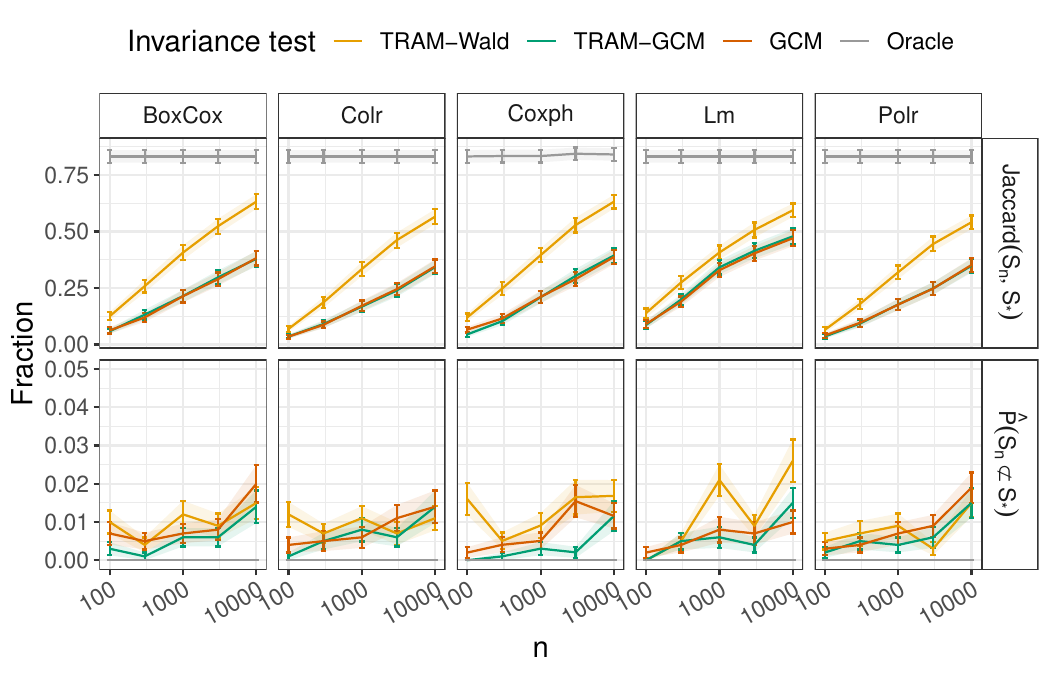}
    \vspace{-0.5cm}
    \caption{%
    Results for the experiment described in Appendix~\ref{app:res:other}. We
    use the simulation scenario described in Appendix~\ref{app:sim} and show
    the models that were not included in Figure~\ref{fig:simres}.
    }
    \label{fig:sim1app}
\end{figure}

\subsubsection{Hidden parents}\label{app:res:hidden}

In the presence of hidden variables, \tramicp{} can still output a subset of
the ancestors, for example, if there is a set of ancestors that is invariant
(see Section~\ref{sec:practical}). Suppose, we generate $Y$ as a \tram{} and
omit one of its observed parents.
If this parent is directly influenced by $E$, then we do not expect
any set of ancestors to be invariant. More precisely, there is no set of observed
covariates, such that $E$ is $d$-separated from $Y$ given this set, which,
assuming faithfulness, implies that there is no invariant set.
When a test has power equal to one to reject non-invariant sets, the output of ICP
is the empty set (and thus still a subset of the ancestors of $Y$, in fact,
even of the parents of $Y$).

Figure~\ref{fig:misppa} shows the results of the simulation
described in Appendix~\ref{app:sim} with two modifications. First, the edge
probabilities for $\rE \to \rX$ and $\rX \to \rY$ are fixed to one, thus $X^1,
X^2, X^3$ are fixed as a parents of $Y$ in all graphs; second, $X^1$ is omitted
(treated as unobserved) when running all ICP algorithms. As stated in
Section~\ref{sec:results}, this situation is adversarial in that there is
no invariant ancestral set as described in Section~\ref{sec:practical}
(the path $E \to X^1 \to Y$ is always unblocked because $X^1$ is
unobserved). There is evidence that, for the considered sample sizes,
\tramicp-GCM and GCM are not level at the nominal 5\%. For smaller sample
sizes, \tramicp-Wald is also anti-conservative.

\begin{figure}[!t]
    \centering
    \includegraphics[width=\textwidth]{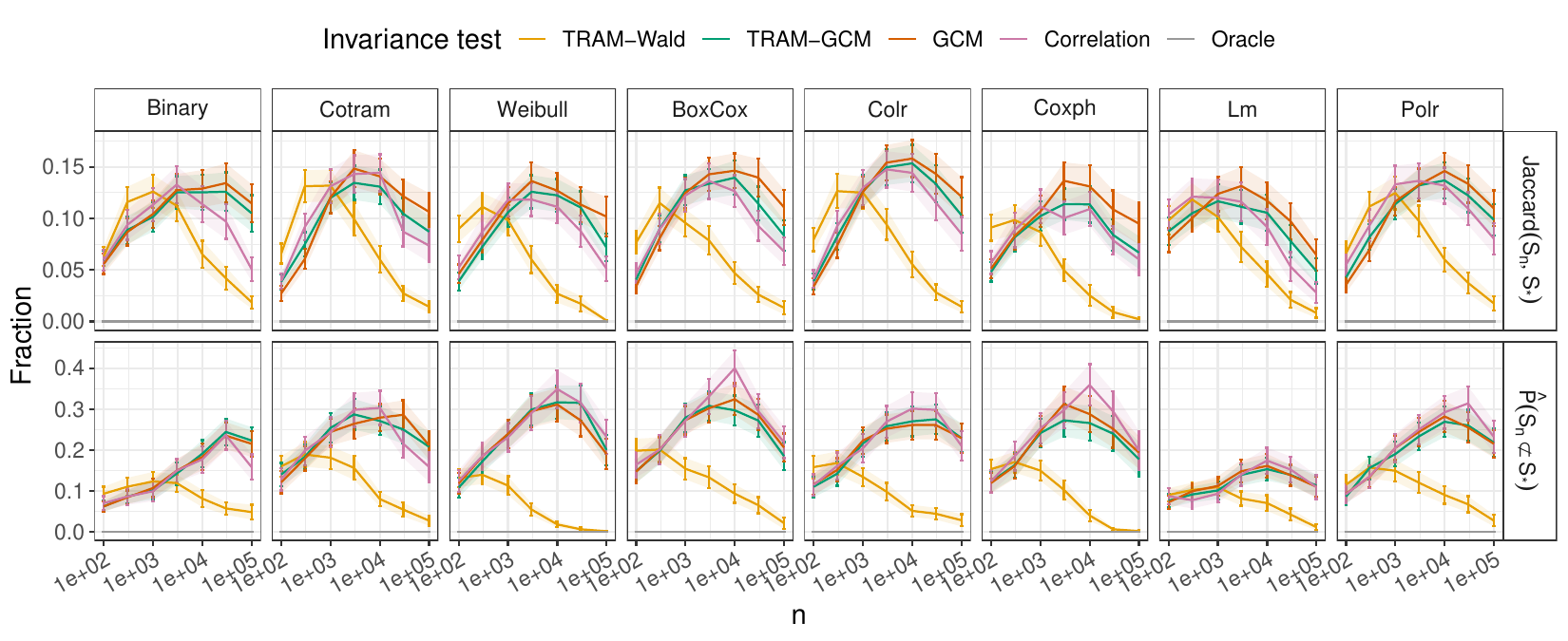}
    \vspace{-1cm}
    \caption{%
    Results for the experiment described in Section~\ref{app:res:hidden}.
    We use the same
    simulation scenario
    as in Appendix~\ref{app:sim}
    with $X^1$ as a fixed, but
    treated-as-hidden parent.
    Since $E \to X^1 \to Y$,
    there is no invariant set.
    There is evidence that ICP (with all invariance tests considered here) is
    not level at the nominal 5\% for small sample sizes and all models.
    }
    \label{fig:misppa}
\end{figure}

However, for large sample sizes the empirical probability of obtaining a set
that is not a subset of the parents of $Y$ decreases for all models. In addition,
Figure~\ref{fig:misppa:larger} shows increased power for all invariance test
when the coefficient for $X^1 \to Y$ is multiplied by $2.5$ (for the same DAGs
used in Figure~\ref{fig:misppa}).

\begin{figure}[!t]
    \centering
    \includegraphics[width=\textwidth]{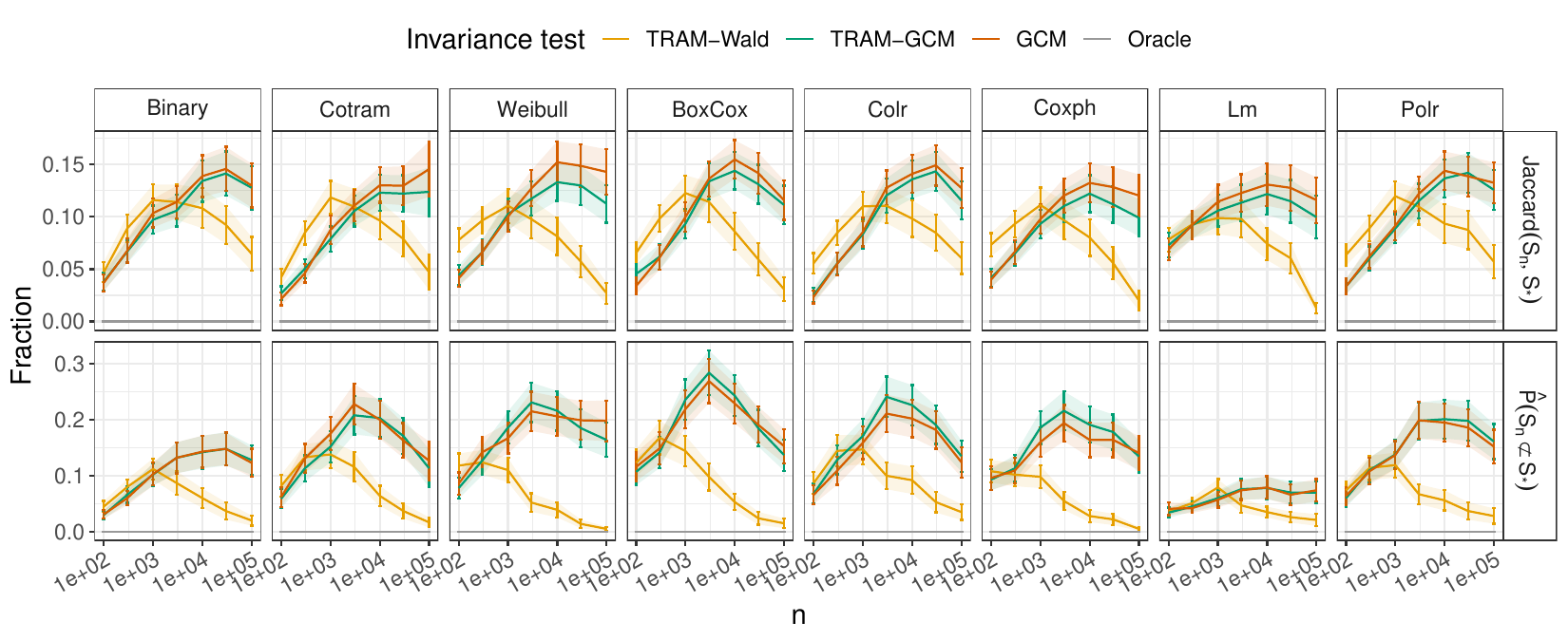}
    \vspace{-1cm}
    \caption{%
    Results for the experiment described in Section~\ref{app:res:hidden}.
    We use the same simulation scenario as in Appendix~\ref{app:sim}
    with $X^1$ as a fixed, but treated-as-hidden parent and with coefficients
    for $X^1 \to Y$ multiplied by $2.5$. The invariance tests show larger
    power to reject non-invariant sets. In order for the \tram{s} to be
    well-behaved, the data $\mD_1$ in Algorithm~\ref{alg:dgp} are standardized,
    so that each component has empirical variance $1/25$.
    }
    \label{fig:misppa:larger}
\end{figure}

\subsubsection{Misspecified error distribution}\label{app:res:error}

Choosing an error distribution $\pZ$ different from the one generating the
data is another way to misspecify the model.
Figure~\ref{fig:mispfz} contains
results for the simulation scenario in Appendix~\ref{app:sim} when the error
distribution when fitting the \tram{} is different from the one generating
the data. For the Cox and Colr model, there is mild evidence that \tramicp-Wald
is not level at the nominal 5\%. All \tramicp{} algorithms show slightly reduced
power under a misspecified error distribution. However, \tramicp-GCM empirically
shows some robustness and still performs at least on par with nonparametric ICP
(with the Colr model being an exception).

\begin{figure}[t!]
    \centering
    \includegraphics[width=\textwidth]{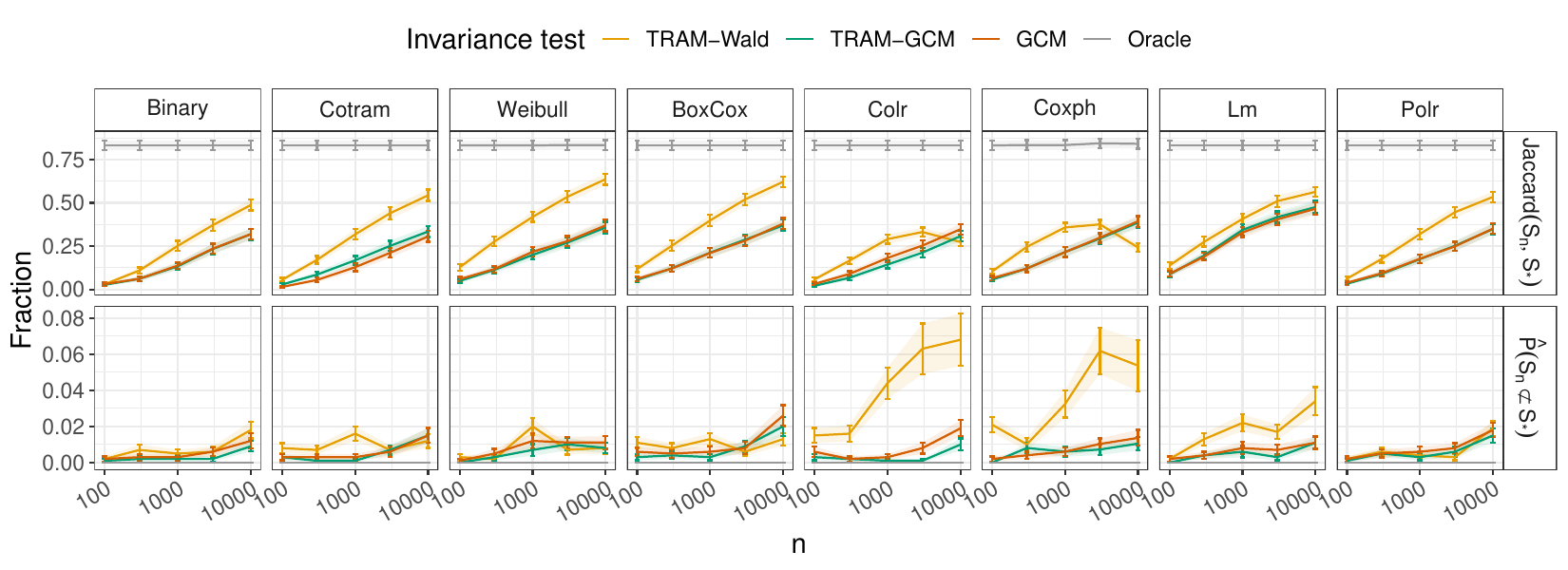}
    \vspace{-1cm}
    \caption{%
    Results for the experiment described in Appendix~\ref{app:res:error}.
    We use the same simulation scenario as in Appendix~\ref{app:sim} but with
    misspecified $\pZ$. The data is generated
    according to the models in Table~\ref{tab:mods}, but the error
    distribution used when running ICP is different from the one generating
    the data, as follows: Binary: Standard normal instead
    of standard logistic. BoxCox: Standard logistic instead of standard
    normal. Colr: Standard minimum extreme value instead of standard logistic.
    Cotram: Standard normal instead of standard logistic. Coxph:
    Standard logistic instead of standard minimum extreme value. Lm: Standard
    logistic instead of standard normal. Polr: Standard normal instead of
    standard logistic. Weibull: Standard log-logistic instead of standard
    minimum extreme value.
    }
    \label{fig:mispfz}
\end{figure}

\subsection{Faithfulness in linear shift transformation models}\label{app:faith}

Studying faithfulness violations theoretically in \tram{s} is again
hindered by the lack of collapsibility and marginalizability (discussed
in Appendix~\ref{app:collaps}).
We now give an example illustrating that there are structural causal
\tram{s} that induce distributions that are non-faithful \wrt the induced graph.
Consider the following structural causal \tram{},
\begin{align}
    E &\coloneqq N_E\\
    X^1 &\coloneqq E + N_1\\
    X^2 &\coloneqq E + N_2\\
    Y &\coloneqq g(Z - 0.5 X^1 + \beta X^2),
\end{align}
where $g \coloneqq F_{\chi^2(3)}^{-1}\circ\Phi$ denotes the inverse
baseline transformation and $N_E \sim \BD(0.5)$, $N_1 \sim \ND(0, 1)$,
$N_2 \sim \ND(0, 1)$, and $Z \sim \ND(0,1)$ are jointly independent
noise terms. The population version of \tramicp{} returns $\{1, 2\}$ as
an invariant set since $E \indep Y \mid X^1, X^2$. We let $\beta$ vary from
$0.2$ to $0.8$; for $\beta = 0.5$, we have the following cancellation
\begin{align}
    Y = g(Z - 0.5 (X^1 - X^2)) = g(Z - 0.5 (N_1 - N_2)).
\end{align}
For $\beta = 0.5$, due to the cancellation, $E \indep Y$, \ie the empty set is
invariant, too, and the output of a population version of \tramicp{} equals the
empty set. For each of the above values of $\beta$, we simulate a single sample
from the implied joint distribution with sample size $n = 10^4$ and
apply \code{BoxCoxICP} with the \tram-GCM invariance test.
Table~\ref{tab:faith} shows set $p$-values and the output of \tramicp{}.
As expected, for $\beta = 0.5$ (which implies $E \indep Y$), \tramicp{} indeed
outputs the empty set.

\begin{table}[!t]
\centering
\begin{tabular}{lrrrrr}
\toprule
$\beta$ & $\emptyset$ & $X^1$ & $X^2$ & $X^1 + X^2$ & Output\\
\midrule
0.2 & 0.000 & 0 & 0 & 0.280 & $\{1,2\}$\\
0.3 & 0.000 & 0 & 0 & 0.271 & $\{1,2\}$\\
0.4 & 0.000 & 0 & 0 & 0.266 & $\{1,2\}$\\
0.5 & 0.949 & 0 & 0 & 0.273 & $\emptyset$\\
0.6 & 0.000 & 0 & 0 & 0.281 & $\{1,2\}$\\
0.7 & 0.000 & 0 & 0 & 0.291 & $\{1,2\}$\\
0.8 & 0.000 & 0 & 0 & 0.302 & $\{1,2\}$\\
\bottomrule
\end{tabular}
\caption{%
Results for the experiment described in Appendix~\ref{app:faith}.
Set-specific $p$-values and output of \tramicp{} are shown for different
choices
for the coefficient $\beta$ of $X^2$ to illustrate faithfulness violations
in linear shift \tram{s}. The $p$-values stem from a single run
of \tramicp{} on data generated according to the structural causal
\tram{} in the main text with a sample size of $n = 10^4$.
The setting $\beta=0.5$ implies $E \indep Y$ and thus corresponds
to a faithfulness violation. The test indeed does not
reject this hypothesis and consequently, \tramicp{} outputs the empty set.
The seed for data generation is fixed to the same value for each $\beta$, such
that the results are directly comparable across rows.
}
\label{tab:faith}
\end{table}

This example illustrates how faithfulness violations can lead to uninformative
results when using \tramicp. However, such cancellations seem to be avoided
when sampling the coefficients from a continuous distribution, as done in our
experiments in Section~\ref{sec:results}.

\subsection{Wider applicability of the TRAM-GCM invariance test}\label{app:additive}

We have theoretical guarantees for \tramicp-Wald only under the class of linear
shift \tram{s} and, indeed, the presence of a nonlinear covariate
effect can lead to \tramicp-Wald being anti-conservative. The
\tram-GCM invariance test can still be used for the more general
shift \tram{s} (or when linear shift \tram{s} are estimated via penalized
maximum likelihood.

The \tram-Wald invariance test requires regularity conditions and is
restricted to linear shift \tram{s} (see Proposition~\ref{prop:wald}),
whereas the \tram-GCM invariance test
is not limited to linear shift \tram{s}, for example (Theorem~\ref{thm:icp}).
Here, we show empirically that if the relationship between the response
and covariates is governed by a nonlinear function or penalized maximum
likelihood estimation is used, the \tram-Wald invariance test and a naive
Pearson correlation test between score residuals and environments are not
level $\alpha$, whereas \tram-GCM remains level $\alpha$.
For this, we consider two examples, shift \tram{s}
(Example~\ref{ex:additive}) and linear shift \tram{s}
estimated via penalized maximum likelihood (Example~\ref{ex:penalized}).

\begin{example}[\tram-Wald may be anti-conservative]\label{ex:additive}
Consider a structural causal shift \tram{} with the following assignments:
\begin{align}
    E &\sim \BD(0.5)\\
    X^1 &\coloneqq E + N_1\\
    X^2 &\coloneqq N_2\\
    Y &\coloneqq Z + f_1(X^1) + 0.5 X^2\\
    X^3 &\coloneqq Y - E + N_3,
\end{align}
where $N_1, N_2, Z, N_3$ are i.i.d. standard normally distributed and
$f_1 : x \mapsto 0.5 x + \sin(\pi x) \exp(-x^2/2)$.
A scatter plot for $Y$ and $X^1$ is depicted in Figure~\ref{fig:additive}A.
In Figure~\ref{fig:additive}B, we depict the empirical cumulative distribution
functions (ECDFs) of the $p$-values obtained
from the \tram-GCM invariance test, \tram-Wald invariance test, and a naive
correlation test between the score residuals and environments for $S = \{1\}$.
The $p$-values of the \tram-GCM invariance test are close to uniform
(a binomial test for the null hypothesis
$H_0: \text{rej.\ probability at level } 0.05 \leq 0.05$ yields a $p$-value of $0.25$).
There is, however, strong evidence
that the \tram-Wald invariance is not level at the nominal 5\%
(the binomial test described above yields a $p$-value less than $0.0001$).
and the correlation test seems overly conservative (the binomial test described
above yields a $p$-value of $0.99$).
\begin{figure}[!t]
    \centering
    \includegraphics[width=0.9\textwidth]{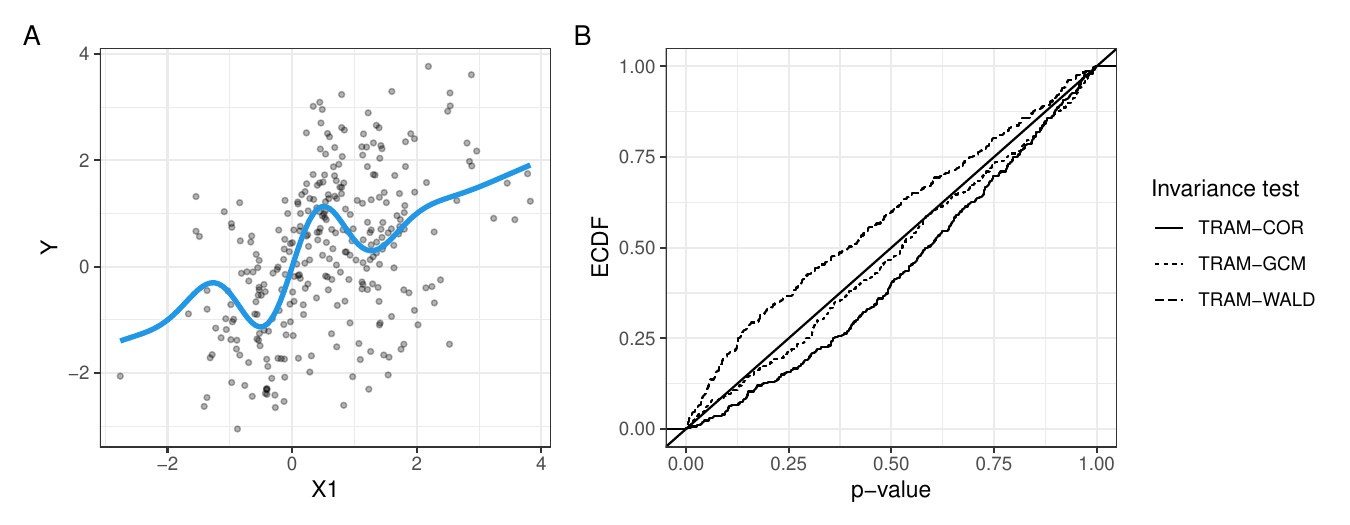}
    \caption{%
    Results for the experiment described Example~\ref{ex:additive} in in
    Appendix~\ref{app:additive} on the wider applicability of the \tram-GCM
    invariance test in shift \tram{s}.
    A: Relationship between $Y$ and $X^1$ with a sample size of $n = 300$.
    B: ECDFs of $p$-values of the \tram-GCM invariance test, a Pearson
    correlation test (\tram-COR), and the \tram-Wald invariance test for
    $n = 300$ and 300 repetitions.
    }
    \label{fig:additive}
\end{figure}
\end{example}

\begin{example}[The Pearson correlation invariance test may be
anti-conservative]\label{ex:penalized}\sloppy
Consider a structural causal linear shift \tram{} over $(Y, X^1, X^2, \dots,
X^{12}, E)$ with the same assignments as in Example~\ref{ex:additive} using
$f_1 : x \mapsto 0.5 x$ and additional covariates $X^4, \dots, X^{12}$
drawn i.i.d.\ from a standard normal distribution. We fit penalized
linear shift \tram{s} with linear transformation function and an $\ell_2$-penalty
using $\lambda = 10$. In Figure~\ref{fig:penalized}, we again display the ECDF
of $p$-values obtained from the \tram-GCM invariance test, the \tram-Wald
invariance test and a naive correlation test between score residuals and
environments for the set $\{1, 2, 4, 5, \dots, 12\}$. There
is no evidence that the \tram-GCM invariance test is not level
(the binomial test described above yields a $p$-value equal to $0.78$).
However, there is strong evidence that the \tram-Wald
invariance test
and correlation test
are not level at the nominal 5\% (in both cases the binomial test described
above yields a $p$-value less than $0.0001$).
\begin{figure}[H]
    \centering
    \includegraphics[width=0.6\textwidth]{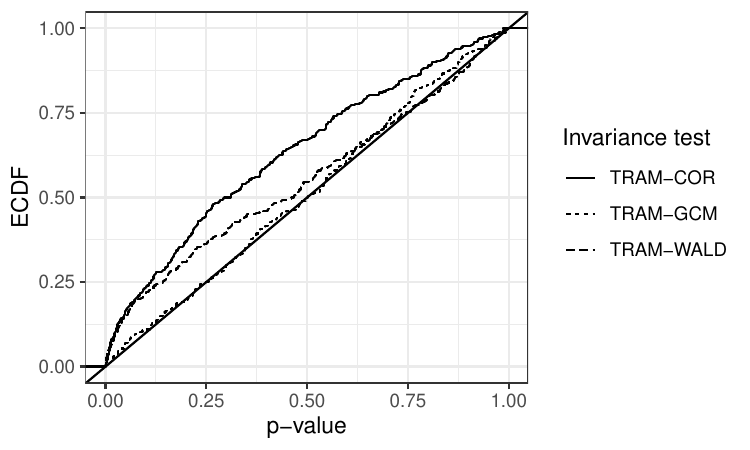}
    \caption{%
    Results for the experiment described in Example~\ref{ex:penalized}
    in Appendix~\ref{app:additive} on the wider applicability of the \tram-GCM
    invariance test in penalized \tram{s}.
    We show the ECDF of $p$-values of \tram-GCM invariance test versus the
    \tram-Wald invariance test and the Pearson correlation test (\tram-COR) for
    $n = 300$ and 300 repetitions in penalized \tram{s}.
    }
    \label{fig:penalized}
\end{figure}
\end{example}

\subsection[Nonparametric extensions of TRAMICP]{%
Nonparametric extensions of TRAMICP}\label{app:nonp}

Although our theoretical results (Theorem~\ref{thm:icp}) rely on the class of
shift \tram{s}, the \tram-GCM test can be
applied even when
using score residuals obtained via a nonparametric estimator
$\hat{F}_{\rY \mid \rX^S = \rx^S}$
for $F_{\rY \mid \rX^S = \rx^S}$, $S \subseteq [d]$, $\rx^S \in \calX^S$,
such as quantile \citep{meinshausen2006quantile} or survival
\citep{ishwaran2008survforest} random forests.
For a given $\pZ \in \calZ$ and $S \subseteq [d]$, we compute the \tram-GCM
invariance test with the score residuals using, for all $\rx^S \in \calX^S$,
$\hat\h(\bcd \mid \rx^S) \coloneqq \pZ^{-1}(\hat{F}_{Y \mid \rX^S = \rx^S}(\bcd))$.
This nonparametric extension
can help to
circumvent
the issue of misspecification and non-closure under marginalization of
\tram{s}. We demonstrate \tramicp{} based on survival forests in a
simulation with the following setup.

Consider a structural causal model with the following assignments:
\begin{align}
    X^1 &\coloneqq N_{X^1} \\
    E &\coloneqq 2 X^1 + N_E \\
    X^2 &\coloneqq 2 E + N_{X^2} \\
    Y^* &\coloneqq (1 + \expit(1.5 + 3 X^1 + 3 X^2 \1(X^2 < 0.2) +
    2 \1(X^2 > 0.2) \1(X^1 > 0) + N_Y))^3 \\
    Y &\coloneqq \min(Y^*, 4) \\
    X^3 &\coloneqq -6E + 2Y^* + N_{X^3}
\end{align}
where $N_{X^1}, N_{X^2}, N_{X^3}$ are jointly independent and
logistically distributed with scale 0.1

We fix the sample size to be $n = 80$ and repeat the simulation 300 times.
We apply the misspecified \code{coxphICP}, and nonparametric ICP using
random forests (while ignoring the censoring indicator), and the proposed
nonparametric extension of \tramicp{} using survival forests and computing
score residuals with $\pZ \coloneqq \expit$.

\begin{figure}[!ht]
\centering
\includegraphics[width=0.85\textwidth]{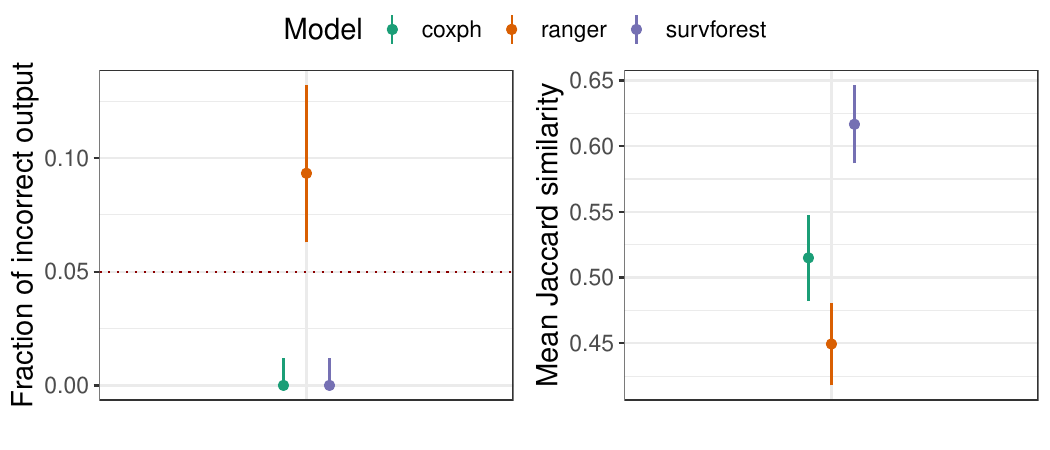}
\vspace{-0.5cm}
\caption{%
Fraction of incorrect output (left) and mean Jaccard similarity (between the
output and the true set of causal parents; right) for different variants of
ICP for the setup described in Appendix~\ref{app:nonp}.
Results are shown for the (misspecified) Cox model, nonparametric ICP based on
ranger (which ignores the censoring present in the data) and the proposed
nonparametric extension of \tramicp{} using survival forests.
}\label{fig:nonp}
\end{figure}

\paragraph{Results}
The fraction of incorrect output and power (in terms of Jaccard similarity
between the output and the true set of parents) is shown in
Figure~\ref{fig:nonp}. While there is evidence that
nonparametric ICP via GCM is not level, both \tramicp{} using the Cox model and
survival forests seem to be level. In terms of power, the proposed nonparametric
extension of \tramicp{} based on survival random forests outperforms both
nonparametric ICP and \tramicp{} based on the Cox model. Despite misspecification,
the Cox model is surprisingly robust, but has, at least in this scenario, less
power than the proposed extension.

If the risk of misspecification is deemed high, the proposed nonparametric
extension can be used to mitigate this risk and has been empirically shown
to be level even in situations where nonparametric ICP is not (potentially
due to the presence of censoring). We leave theoretical investigations of
the proposed extension for future work. The \proglang{R}~package
\pkg{tramicp} implements the extension in \code{qrfICP()} and
\code{survforestICP()} for quantile and survival random forests,
respectively; we have added a note to indicate that it does not come with
theoretical guarantees yet.

\section{Causal drivers of survival in criticially ill adults}\label{app:casestudy}

\subsection{Evidence of age and cancer being direct causes of survival}\label{app:casestudy:fig}

Figure~\ref{fig:casestudy} shows the set specific $p$-values of \tram-GCM
and \tram-Wald. All non-rejected sets contain \code{ca} and \code{age}.
Since all $p$-values fall below the identity, \tram-GCM is more conservative.

\begin{figure}[!ht]
\centering
\includegraphics[width=0.85\textwidth]{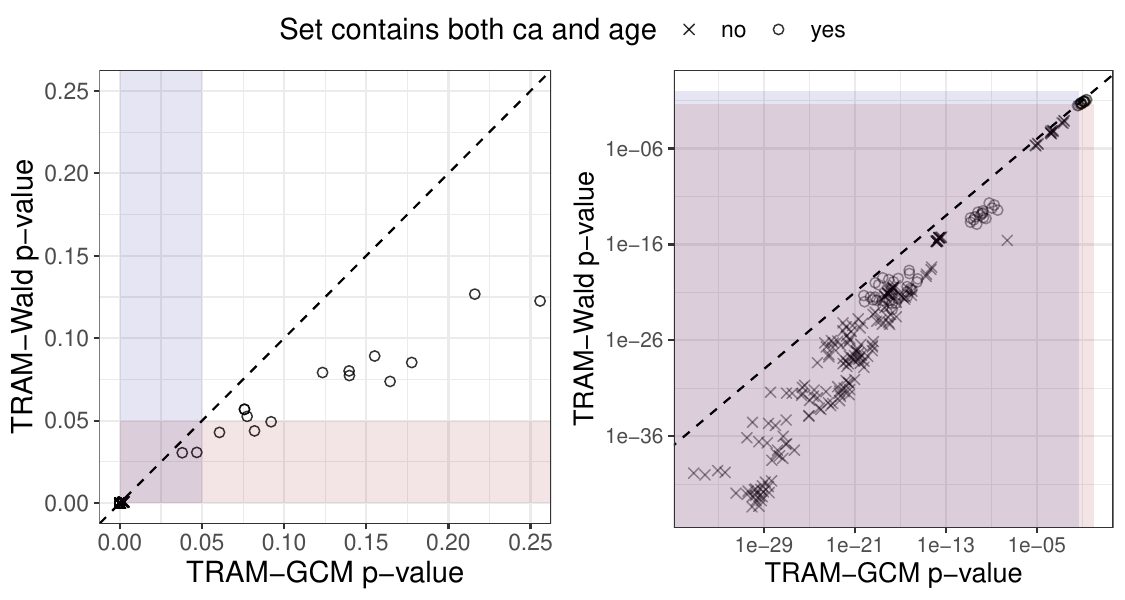}
\caption{%
    Set-specific $p$-values of the \tram-GCM and \tram-Wald invariance test in
    the \mbox{SUPPORT2} case study described in Section~\ref{sec:casestudy}.
    Sets containing both \code{ca} and \code{age} are depicted as circles,
    whereas sets containing neither are depicted as crosses. The blue and red
    shaded regions mark the rejection region for \tram-GCM and \tram-Wald,
    respectively, and the dashed line is the identity function. Left: $p$-values
    on a linear scale. All invariant sets contain cancer and age, which is
    therefore the output of both \tramicp-GCM and \tramicp-Wald (see
    Table~\ref{tab:casestudy}). Right: $p$-values on the $\log_{10}$-scale.
    \tram-GCM is more conservative than \tram-Wald, as all $p$-values fall below
    the identity (dashed line).
}\label{fig:casestudy}
\end{figure}
%

\subsection{Multiple environments}\label{app:casestudy:menv}

\tramicp{} allows several environments to be specified. When using both
\code{race} and \code{num.co} as environments in the SUPPORT2 dataset,
\tramicp-GCM outputs \code{ca}, \code{age}, \code{diabetes} as plausible
causal predictors. \tramicp-Wald additionally outputs \code{dementia} and
\code{sex}. Choosing more variables as  environments may yield more
heterogeneity but may at the same time decrease the power of statistical tests.
The predictor $p$-values are given in the Table~\ref{tab:casestudy:app}.

\begin{table}[!ht]
\centering
\caption{%
\tramicp{} applied to the SUPPORT2 dataset with \code{num.co} and \code{race}
as the environments. Predictor-specific $p$-values (see
Section~\ref{sec:practical}) are reported for the \tram-GCM and
\tram-Wald invariant test.
$p$-values in bold are significant at the 5\% level; in each row,
the set of predictors with bold numbers corresponds
to the output of \tramicp{}.
}\label{tab:casestudy:app}
\resizebox{\textwidth}{!}{%
\begin{tabular}{@{}lrrrrrrrrc@{}}
\toprule
\textbf{Invariance test}
& \multicolumn{8}{c}{\textbf{Predictor-specific $p$-values}}&
\multicolumn{1}{l}{\textbf{Environment}} \\ \midrule
& \multicolumn{1}{l}{\code{scoma}} & \multicolumn{1}{l}{\code{dzgroup}} & \multicolumn{1}{l}{\code{ca}} & \multicolumn{1}{l}{\code{age}} & \multicolumn{1}{l}{\code{diabetes}} & \multicolumn{1}{l}{\code{dementia}} & \multicolumn{1}{l}{\code{sex}} & \multicolumn{1}{l}{\code{race}} & \multicolumn{1}{l}{} \\ \cmidrule(lr){2-9}
\multicolumn{10}{l}{\textit{Multiple environments}}\\
\tram-GCM  & 0.089 & 0.089 & \textbf{0.000} & \textbf{0.003} & \textbf{0.040} & 0.056 & 0.085 & - & \multirow{2}{*}{\code{num.co + race}}           \\
\tram-Wald & 0.050 & 0.050 & \textbf{0.000} & \textbf{0.002} & \textbf{0.031} & \textbf{0.029} & \textbf{0.040} & - &\\
\bottomrule
\end{tabular}
}
\end{table}

\subsection{Sensitivity analysis: Informative censoring}\label{app:casestudy:cens}

When applying \tramicp{} in a survival context, the assumption of
uninformative censoring plays an important role. In their original
analysis of the SUPPORT dataset, \citet{knaus1995support} have assumed
that the censoring is uninformative. As a sensitivity analysis, we can
introduce (possibly additional) informative censoring by treating the
observed event times of a fraction of all patients who experienced the
event as right-censored.
For 10\% (20\%, ..., 90\%) of randomly chosen patients with
exact event times, we repeat the analysis ten times. For low fractions of
additionally informatively censored observations (10--20\%), the output of
\tramicp-Wald remains
stable (\code{ca}, \code{age}). For larger fractions (30--80\%), \code{age}
is contained in the output less often. For the largest considered fraction
(90\%), \tramicp-Wald outputs mostly the empty set (see Table~\ref{tab:cens}).

The sampling is repeated 10 times per fraction. Table~\ref{tab:cens} shows
and how often \tramicp-Wald outputs the empty set, \code{ca} or \code{ca+age}
over the ten repetitions. \tramicp-Wald is relatively robust for fractions
up to 40\%. For larger amounts of informative censoring, \code{age} is
contained in the output less often. At fraction 0.9, the output is empty for
most of the repetitions.

\begin{table}[!ht]
\centering
\caption{%
\tramicp{} applied to the \mbox{SUPPORT2} dataset with different fractions
of additionally informatively censored survival times, as described in
Appendix~\ref{app:casestudy}. The additionally censored patients are sampled
from all patients who experienced the event. The experiment is repeated ten
times and the table shows how often the output of \tramicp-Wald is the empty
set, \code{ca}, or \code{ca+age} out of the ten repetitions.
}\label{tab:cens}
\renewcommand{\arraystretch}{0.6}
\begin{tabular}{lrrr}
\toprule
\textbf{Fraction} &
\multicolumn{3}{c}{\textbf{Output (out of 10)}}\\
& \code{Empty} & \code{ca} & \code{ca+age}  \\
\cmidrule(lr){2-4}
0.1 &
0 & 0 & 10\\
0.2 &
0 & 0 & 9 \\
0.3 &
1 & 2 & 4 \\
0.4 &
2 & 2 & 6 \\
0.5 &
1 & 4 & 4 \\
0.6 &
3 & 5 & 2 \\
0.7 &
0 & 9 & 1 \\
0.8 &
3 & 3 & 3 \\
0.9 &
7 & 2 & 1 \\
\bottomrule
\end{tabular}
\end{table}

\section{\proglang{R} package \pkg{tramicp}}\label{app:pkg}

With \pkg{tramicp}, we provide a user-friendly implementation for applying
\tramicp{}, which we briefly outline in this section. For every model
implementation listed in Table~\ref{tab:mods} in Section~\ref{sec:comp}, there
is a corresponding alias in \pkg{tramicp} which appends \code{ICP} to the model
name (\eg \code{glmICP} in Example~\ref{ex:intro}).

As an example, in the below code snippet, we apply \tramicp-GCM to data generated
from a structural causal \tram{} with DAG $\calG$ (shown below) and a ``Cotram''
\citep[cf.\ count \tram,][]{siegfried2020count}
model for the count-valued
response with $\pZ = \pSL$ (see Table~\ref{tab:mods} and
Example~\ref{ex:count}).
\begin{verbatim}
R> cotramICP(Y ~ X1 + X2 + X3 + X4, data = dat, env = ~ E,  type = "residual",
+    test = "gcm.test", verbose = FALSE)
\end{verbatim}
The argument \code{type} specifies the type of invariance considered, \ie
\code{"wald"} for the Wald test or \code{"residual"} for any residual-based
test (the default is \code{test = "gcm.test"}).

The corresponding output is shown below.
\tramicp{} correctly returns $\{1, 2\}$ as the causal parent
of the response. The reported $p$-value for a predictor of interest is computed
as the maximum $p$-value over all tested sets not containing the predictor of
interest (in case all sets are rejected, the $p$-value is set to 1, see
Section~\ref{sec:practical}). An illustration of the \tram-GCM invariance
test can be found in Figure~\ref{fig:intro}.
\\[6pt]
\begin{minipage}{0.6\textwidth}
\begin{verbatim}
Model-based Invariant Causal Prediction
Discrete Odds Count Transformation Model

Call: cotramICP(formula = Y ~ X1 + X2 + X3 + X4,
    data = df, env = ~E, verbose = FALSE,
    type = "residual", test = "gcm.test")

 Invariance test: gcm.test

 Predictor p-values:
   X1    X2    X3    X4
0.001 0.000 0.699 0.699

 Set of plausible causal predictors: X1 X2
\end{verbatim}
\end{minipage}
\begin{minipage}{0.39\textwidth}
\includegraphics[width=0.95\textwidth]{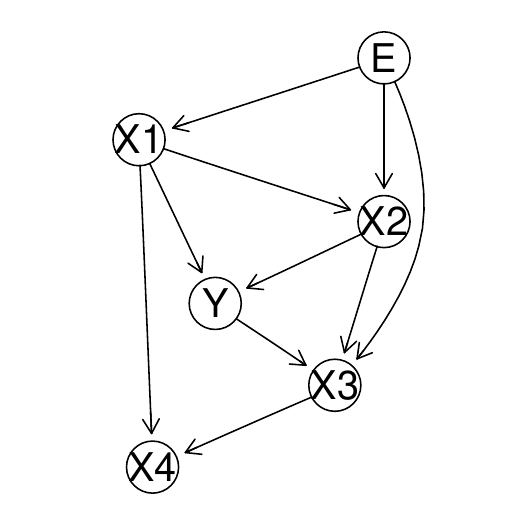}
\end{minipage}\\[6pt]

If prior knowledge of the form $S_m \subsetneq S_*$ is
available, only super-sets of $S_m$ need to be tested. Thus, $S_m$ can be
interpreted as a mandatory part of the conditioning set. In \pkg{tramicp},
such mandatory predictors can be supplied to the \code{mandatory} argument
as a formula. In our example above, we could, for instance, specify
\code{mandatory = $\sim$ X1} in the call to \code{cotramICP()}. In
Section~\ref{sec:casestudy}, we illustrate how such prior
knowledge can be used to reduce computation time by testing fewer sets.
If non-parents are falsely included as mandatory, the output of \tramicp{}
may be misleading. Also, the robustness guarantees discussed in
Section~\ref{sec:practical} `Unmeasured confounding' can break down.

The \pkg{tramicp} package implements ICP for the models and invariance tests
listed in Tables~\ref{tab:pkg} and~\ref{tab:tests}, respectively.
In its general form, \tramicp{} is implemented in the \code{dicp()} function.
The function takes the arguments \code{formula}, \code{data}, \code{env}
(a formula for the environment variables) and \code{modFUN} (the function for
fitting the model).
\code{modFUN} can be one of the implementations listed in
Table~\ref{tab:pkg}, or any other model which possesses an
\proglang{S}3~method corresponding to \code{stats::residuals()} (for
\code{type = "residual"} invariance) or \code{multcomp::glht()} (for
\code{type = "wald"} invariance) (below we give an example of how to
use shift \tram{s} implemented in \pkg{tramME}). Package
\pkg{tramicp} implements aliases for
\code{dicp(..., modFUN = <implementation>)}, which are listed in
Table~\ref{tab:pkg} and use some convenient defaults for arguments supplied
to \code{modFUN}. For instance, \code{MASS::polr()} has no corresponding
\code{residuals()} implementation, which has been added with \pkg{tramicp}.
Implemented tests can be supplied as character strings
(Table~\ref{tab:tests}). For \code{type = "residual"}, custom tests can be
supplied as \code{function(r, e, controls) \{...\}}). Importantly, these
tests should output a list containing an entry \code{"p.value"}. Additional
arguments can be supplied via the \code{controls} argument, which can be
obtained and altered via \code{dicp\_controls()}. Some tests (for instance,
\code{coin::independence\_test()}) can handle multivariable environments,
\ie formulae like \code{env = $\sim$ E1 + E2}.

\paragraph{Kernel-based independence test}
For continuous responses, $(\pZ,\trafosub)$-invariance can be tested via
independence tests, \ie by testing the null hypothesis $H_0(S) :
\rE \indep R(\h; \rY, \rX^S)$ for a given $S\subseteq[d]$ and $\h \in
\trafosubS$. In \pkg{tramicp}, we implement a test based on
the Hilbert-Schmidt Independence Criterion \citep[HSIC,][]{gretton2007kernel}.
\citet{gretton2007kernel} introduce HSIC as a kernel-based independence
measure, which is zero if and only if the two arguments are independent
(when using characteristic kernels). Here, we use dHSIC \citep{pfister2018hsic}.
By default, a Gaussian kernel for $R$ and a discrete (for discrete
environments) or Gaussian (for continuous environments) kernel for
$\rE$ is used \citep[for details see][]{pfister2018hsic}.

\begin{table}[!ht]
    \centering
    \caption{%
    Model classes directly available in \pkg{tramicp}. Cusom model
    functions can be supplied in \code{dicp()} and have to take arguments
    \code{formula} and \code{data}. For details,
    the reader is referred to the package documentation.
    }
    \label{tab:pkg}
    \begin{tabular}{lrr}
    \toprule
         \bf Model & \bf Function & \bf Implementation \\
         \midrule
         Box--Cox-type & \code{BoxcoxICP()} & \code{tram::Boxcox()} \\
         Continuous-outcome logistic regression & \code{ColrICP()} & \code{tram::Colr()} \\
         Count transformation model & \code{cotramICP()} & \code{cotram::cotram()} \\
         Cox proportional hazards & \code{CoxphICP()} & \code{tram::Coxph()} \\
         Generalized linear regression & \code{glmICP()} & \code{stats::glm()} \\
         Lehmann-type regression & \code{LehmannICP()} & \code{tram::Lehmann()} \\
         Normal linear regression & \code{LmICP()} & \code{tram::Lm()} \\
         Normal linear regression & \code{lmICP()} & \code{stats::lm()} \\
         Proportional odds logistic regression & \code{PolrICP()} & \code{tram::Polr()} \\
         Proportional odds logistic regression & \code{polrICP()} & \code{MASS::polr()} \\
         Weibull regression & \code{SurvregICP()} & \code{tram::Survreg()} \\
         Weibull regression & \code{survregICP()} & \code{survival::survreg()} \\
         \bottomrule
    \end{tabular}
\end{table}

\begin{table}[!ht]
    \centering
    \caption{%
    Invariance tests implemented in \pkg{tramicp}. Except for the
    \tram-Wald test (which requires \code{type = "wald"}), all tests have
    to be used with \code{type = "residual"} (the default). Custom
    residual-based tests can be supplied as functions of the form
    \code{function(r, e, controls) \{...\}}. Custom functions for computing
    the residuals can be used by passing the function name to
    \code{dicp\_controls()}.
    }
    \label{tab:tests}
    \begin{tabular}{lrr}
    \toprule
         \bf Test & \bf Name & \bf Implementation \\
         \midrule
         \tram-GCM & \code{"gcm.test"} & \code{GeneralisedCovarianceMeasure::gcm.test()} \\
         \tram-Wald & \code{"wald"} & \code{multcomp::Chisqtest()} \\
         \tram-COR & \code{"cor.test"} & \code{stats::cor.test()} \\
         $F$-test & \code{"var.test"} & \code{stats::var.test()} \\
         $t$-test & \code{"t.test"} & \code{stats::t.test()} \\
         Combined & \code{"combined"} & \code{stats::\{t,var\}.test()} \\
         HSIC & \code{"HSIC"} & \code{dHSIC::dhsic.test()} \\
         Independence & \code{"independence"} & \code{coin::independence\_test()} \\
         Spearman & \code{"spearman"} & \code{coin::spearman\_test()} \\
         \bottomrule
    \end{tabular}
\end{table}

\paragraph{ICP with shift transformation models}
Shift \tram{s} are implemented in \pkg{tramME} \citep{tamasi2022tramme} but
not directly imported into \pkg{tramicp} and thus do not have an alias.
Nonetheless, models implemented in \pkg{tramME} (such as \code{BoxCoxME()})
can still be used in \pkg{tramicp} via \code{dicp(..., modFUN = "BoxCoxME")}
after loading \pkg{tramME}.

\paragraph{Reducing computational complexity}
The computational complexity of ICP scales exponentially in the number of
predictors due to the need of testing all possible subsets.
\citet{peters2016causal}
proposed to reduce the high computational
burden by pre-screening the predictors using a variable selection
algorithm. Pre-screening of covariates in \tram{s} can in principle be
done via $L_1$-penalized likelihood estimation \citep{kook2020regularized}
or nonparametric feature selection methods that can handle discrete and
censored responses.
Given a variable selection procedure with output $S^{\text{VS}}_n$
which guarantees $\lim_{n\to\infty} \Prob(S^{\text{VS}}_n \supseteq
\pa_{C_*}(\rY)) \geq 1 - \alpha$ at level $\alpha \in (0, 1)$, we can
run \tramicp{} with the potentially reduced set of covariates
$S^{\text{VS}}_n$ at level $\alpha$ and maintain the coverage guarantee
of ICP at level $2 \alpha$ \citep[][Section 3.4]{peters2016causal}.

While pre-screening simplifies the application of \tramicp{} in
practice, it is difficult to ensure that the screening property holds.
Although $L_1$-penalized maximum likelihood is fast and asymptotically
guaranteed to return the Markov boundary for linear additive Gaussian noise models
\citep{meinshausen2006mb}, this no longer holds true for general linear
additive noise models \citep[][Example~1 in the supplement]{nandy2017mb}
or (linear shift) \tram{s}, since the parametric regression model of the
response on all covariates can be misspecified if a child is included
in the conditioning set.

\section{Proofs and auxiliary results}\label{app:theory}

\subsection{Proofs}\label{app:proofs}

\subsubsection{Proof of Proposition~\ref{prop:id}}\label{proof:id}

\begin{proof}
Let $\trafos{A \times A}$ be a class of shift transformation functions such that
Assumption~\ref{asmp:densities} is satisfied. For such a class,
there exists a function
$\check\h : \calY \times A \to \overline{\RR}$ and $\h_1, \h_2 \in \trafos{A
\times A}$ such that for all $(x^1, x^2) \in A \times A$, $\h_1(\bcd \mid
(x^1, x^2)) = \check\h(\bcd \mid x^1)$ and $\h_2(\bcd \mid (x^1, x^2)) =
\check\h(\bcd\mid x^2)$. Thus, $\check{h} \in \trafos{A}$.
We first show that there exists a joint distribution $\Prob_{(Y, X^1, X^2)}$
for the random variables $(Y, X^1, X^2) \in \calY \times A \times A$, \st the
conditional distributions $\Prob_{\rY \mid X^1}$ and $\Prob_{Y \mid X^2}$
are \tram{s} with transformation function $\check\h$. To that
end, let $\nu$ denote the Lebesgue measure restricted to
$A$ in the case that $A$ is an interval or counting measure on $A$ in the case
that $A$ is countable. Let further $f_X$ denote an arbitrary density with
respect to $\nu$ that satisfies for all $x \in A$ that $f_{X}(x) > 0$.
Let\footnote{We use notations $f_{Y \mid X}(y \mid x)$ and $f_{Y
\mid X=x}(y)$ interchangeably.}
$f_{Y \mid X}$ denote the strictly positive conditional density with respect
to
$\mu$ (see Assumption~\ref{asmp:densities} and the sentences thereafter)
corresponding to the \tram{} $\pZ\circ \check\h$.
By integrating over $A$, we obtain a strictly positive density with respect
to $\mu$;
\[
f_Y(\bcd) \coloneqq \int_A f_{Y \mid X}(\bcd \mid x) f_X(x) \dd\nu(x).
\]

By Bayes' theorem, the function $f_{X \mid Y} : \calX\times\calY\to\RR_+$,
which for all $x \in A$ and $y \in \calY$ is given by
\[
f_{X \mid Y}(x \mid y) \coloneqq \frac{f_{Y \mid X}(y \mid x)f_X(x)}{f_Y(y)}
\]
is another strictly positive conditional density with respect to $\nu$.

Now, let the joint density of $(Y, X^1, X^2)$ with respect to $\mu \otimes
\nu \otimes \nu$ for all $y \in \calY$,
$x^1, x^2 \in A$ be given by
\[
f(y, x^1, x^2) \coloneqq f_X(x^1) f_{Y \mid X}(y \mid x^1) f_{X \mid Y}(x^2
\mid  y) =  \frac{f_X(x^1) f_{Y \mid X}(y \mid x^1)f_{Y \mid X}(y
\mid x^2)f_X(x^2)}{f_Y(y)}.
\]
The marginal density $f_{(Y, X^2)}$ of $(Y, X^2)$ is obtained by integrating
out $X^1$;
\[
f_{(Y, X^2)}(y, x^2) = \frac{f_X(x^2) f_{Y \mid X}(y \mid x^2)}{f_Y(y)}
\int_A f_{Y \mid X}(y
\mid x^1)f_X(x^1) \dd\nu(x^1) = f_X(x^2) f_{Y \mid X}(y \mid x^2)\]
and we see from the factorization that the conditional density of $Y$ given
$X^2$ is $f_{Y \mid X}$ which is a \tram{} with transformation function
$\check h$ by construction. An analogous
argument (with the roles of $X^1$ and $X^2$ exchanged) shows that the same
is true for the distribution of $Y$ given $X^1$.

We now construct two structural causal \tram{s} $C_1$ and $C_2$ in
$\calC(\pZ,\calY,\calX,\trafos{A\times A})$ where the observational
distributions satisfy $\Prob^{C_1}_{(Y,X^1, X^2)} = \Prob^{C_2}_{(Y,X^1, X^2)}
= \Prob_{(Y,X^1, X^2)}$. The causal models are given by
\begin{equation}
C_1:
\begin{cases}
    X^1 \coloneqq N_{X^1}^1 \\
    Y \coloneqq \check\h^{-1}(Z^1 \mid X^1) \\
    X^2 \coloneqq g(Y, N_{X^2}^1),
\end{cases}
\qquad \qquad
C_2:
\begin{cases}
    X^2 \coloneqq N_{X^2}^2 \\
    Y \coloneqq \check\h^{-1}(Z^2 \mid X^2) \\
    X^1 \coloneqq g(Y, N_{X^1}^2),
\end{cases}
\end{equation}
where for all $j \in \{1, 2\}$, $(N_{X^1}^j, N_{X^2}^j, Z^j)$ are independent,
$N_{X^1}^1$ and $N_{X^2}^2$ have density $f_X$ with respect to
$\nu$, $Z^1$ and $Z^2$ have distribution function $\pZ$, $N_{X^2}^1$ and
$N_{X^1}^2$ are uniformly distributed on $(0, 1)$ and $g$ is the generalized
inverse of the conditional CDF corresponding to $f_{X \mid Y}$.

It is immediate from this construction that $\Prob^{C_1}_{(Y,X^1, X^2)} =
\Prob^{C_2}_{(Y,X^1, X^2)} = \Prob_{(Y,X^1, X^2)}$ but also that $\pa_{C_1}(Y) =
\{1\}$ and $\pa_{C_2}(Y) = \{2\}$. This proves the desired result.
\end{proof}

\paragraph{Explicit example for the construction in the proof of Proposition~\ref{prop:id}}
For illustrative purposes, we now
give an example with $\calY = A = \{1, 2, 3\}$ in which we construct a
structural causal \tram{} over three (ordered) categorical random
variables $(Y, X^1, X^2)$, in which the two pairwise marginal conditionals $Y
\mid X^1$ and $Y \mid X^2$ are proportional odds logistic regression models%
; we sample from this model and illustrate the non-identifiability empirically.

\begin{example} \label{ex:counterappendi}
Let $\calY = A = \{1, 2, 3\}$, and
$\basisy = \basisy_{\text{dc}, 3}$
(see Table~\ref{tab:basis})
and
$\calX \coloneqq A \times
A$. Let $\trafosub = \trafolin(\basisx)$ and fix
$\check\h \in \trafos{A} = \trafos{A}^{\text{linear}}(\basisx)$,
\st for all $x \in A$,
\begin{align}
    \check\h(\bcd\mid x) &= \basisy(\bcd)^\top\parm - \basisy(x)^\top\shiftparm,\\
    \parm &= (\log 0.5, \log 1.5, +\infty)^\top,\\
    \shiftparm &= (0, \log 1.4, \log 1.8)^\top.
\end{align}
Now, let $f_{Y \mid X^1}$ and $f_{Y \mid X^2}$ be conditional densities
corresponding to the proportional odds model $\pZ \circ \check\h$.
We then construct
the joint distribution by fixing $X^1$ (and $X^2$) to have a
uniform distribution, that is, $f_{X^1} : j \mapsto 1/3$ and follow
the
argument in the proof above.

We now sample from the joint distribution above and perform
goodness-of-fit tests to illustrate that feature selection based
on goodness-of-fit tests yields the empty set in case of non-subset
identifiability.
In Figure~\ref{fig:id}, we show the relative bias
of a parameter estimate $\hat\eparm$ with ground truth $\eparm$, \ie
$\hat\eparm/\eparm - 1$, when estimating the parameters of the
two proportional odds logistic regressions of $Y \mid X^1$ and $Y \mid
X^2$ for sample size $n = 10^4$ and over $10^3$ repetitions. For all $j \in
\{1, 2\}$, we
refer to the parameters for $Y \mid X^j$ as $\eparm_{j1}, \eparm_{j2}$,
$\beta_{j2}$, and $\beta_{j3}$. In addition, we perform a Pulkstenis--Robinson
goodness-of-fit $\chi^2$-test \citep{pulkstenis2004gof} implemented in
\proglang{R}~package \pkg{generalhoslem} \citep{pkg:generalhoslem} for both
models at a sample size of $10^4$ and 100 repetitions. For $Y \mid X^1$
and $Y \mid X^2$ the test rejects 4 times and 5 times at the 5\%~level,
respectively. Thus, the intersection over accepted sets is empty in most
cases.

\begin{figure}[!t]
\centering
\includegraphics[width=0.6\textwidth]{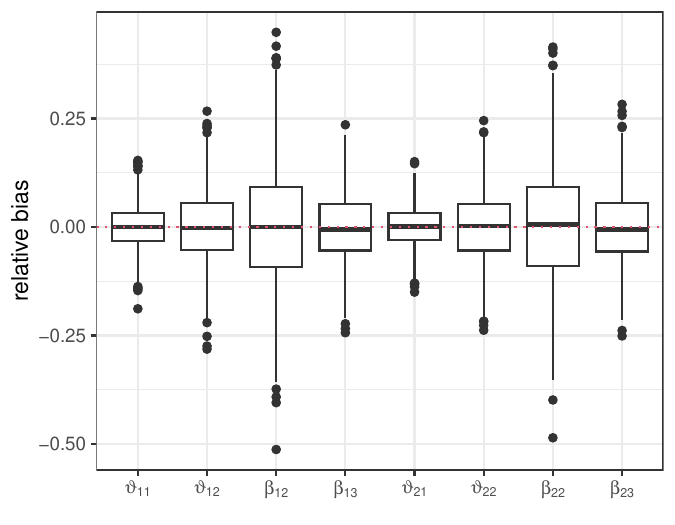}
\caption{%
Relative bias over $10^3$ repetitions
when estimating $\eparm_{ij}$, $\beta_{ij}$ when sampled
according to Example~\ref{ex:counterappendi} with $n = 10^4$.
}
\label{fig:id}
\end{figure}
\end{example}

\subsubsection{Proof of Proposition~\ref{prop:parents}}\label{proof:parents}

\begin{proof}\sloppy
We have $Y \indep \rE \mid \rX^{S_*}$ by the local Markov property because
$E_1, \dots, E_q$ are non-descendants of $Y$ by Setting~\ref{set:env} and
$\rX^{S_*}$ are the parents of $Y$, since $Y \coloneqq h_*(Z \mid X^{S_*})$.
The desired result now follows, because for all $\rx^{S_*} \in \calX^{S_*}$,
the conditional CDF of $Y$ given $\rX^{S_*} = \rx^{S_*}$ is given by
$\pZ(\h_*(\bcd \mid \rx^{S_*}))$.
\end{proof}

\subsubsection{Proof of Proposition~\ref{prop:identification}}\label{proof:identification}

\begin{proof}
Let $\calS \coloneqq \{S \subseteq [d] \mid S \not \supseteq S_*\}$. We now show
that, for all $S \in \calS$, we have $Y \not\indep_{\calG} E \mid S$, where
$\indep_\calG$ denotes $d$-separation in $\calG$. Let $S \in \calS$. By definition
of $\calS$, there exists $j \in [d]$ such that $j \in S_*$ and $j \not\in S$.
Because $S_* \subseteq \ch(\rE)$ and $S_* = \pa(Y)$, by definition, there exists a
directed path $E \rightarrow X^j \rightarrow Y$ in $\calG$. Since $j \not\in S$, we
therefore have that $Y \not\indep_{\calG} E \mid S$. We conclude, by the faithfulness
assumption, that $Y \not\indep E \mid S$ and thus $S$ is not $(\pZ, \trafosub)$-invariant.

Thus, for all $(\pZ, \trafosub)\mbox{-invariant}$ sets $S$ we have $S \supseteq S_*$
and therefore
\[
    S_I \coloneqq \bigcap_{S \subseteq [d] : S \mbox{ is }
    (\pZ, \trafosub)\mbox{-invariant}} S \supseteq
    S_*.
\]
Proposition~\ref{prop:parents} yields that $S_I \subseteq S_*$ and hence the result follows.
\end{proof}

\subsubsection{Proof of Proposition~\ref{prop:residualinv}}\label{proof:residualinv}

\begin{proof}\sloppy
By Lemma~\ref{prop:sresidmeanzero}, we have $\Ex[R(\h_0; \rY, \rX) \mid \rX]
= 0$, $\Ex[R(\h^S; \rY, \rX^S)] = 0$ and $\Ex[\rX^S R(\h^S; \rY, \rX^S)] = 0$.
By the same argument as in the proof of Proposition~\ref{prop:parents}, we
have that $\rY \indep \rE \mid \rX^S$, thus
\begin{align*}
\Ex[\rE R(\h^S; \rY, \rX^S)] &= \Ex[\Ex[\rE R(\h^S; \rY, \rX^S)
\mid \rX^S, \rE]]\\
&= \Ex[\rE \Ex[R(\h^S; \rY, \rX^S) \mid \rX^S, \rE]] =
\Ex[\rE \Ex[R(\h^S; \rY, \rX^S) \mid \rX^S]] = 0.
\end{align*}
Lastly, $\Ex[\Cov[\rE,
R(\h^S; \rY, \rX^S) \mid \rX^S]] = \Ex[\rE R(\h^S; \rY, \rX^S) \mid
\rX^S] - \Ex[\rE\mid\rX^S] \Ex[R(\h^S;\rY,\rX^S)\mid\rX^S] = 0$ follows
directly from the above.
\end{proof}

\subsubsection{Proof of Theorem~\ref{thm:icp}}\label{proof:icp}

\begin{proof}
For $i\in[n]$, we let $\xivec_i\coloneqq\rE_i-\muvec(\rX^S_i)$, $R_{P,i} \coloneqq R(\h_{P}; \rY_i, \rX_i^S)$, $\hat{R}_i \coloneqq
R(\hat\h; \rY_i, \rX_i^S)$,
$\muvec_i = (\muvec_i^1, \dots, \muvec_i^d) \coloneqq \muvec(\rX^S_i)$, $\hat{\muvec}_i = (\hat{\muvec}_i^1, \dots, \hat{\muvec}_i^d) \coloneqq
\hat\muvec(\rX_i^S)$, and define $\Sigma_P \coloneqq \Ex_P[R(h_P; Y, \rX^S)^2
\xivec \xivec^\top]$. We first argue that
\begin{equation}\label{eq:numerator}
    n^{-1/2} \sum_{i=1}^n  (\rL_i - R_{P,i} \xivec_i) = o_{\mathcal{P}}(1).
\end{equation}
To that end, we write
\begin{align*}
n^{-1/2} \sum_{i=1}^n  (\rL_i - R_{P,i} \xivec_i)
=\ &\underbrace{n^{-1/2} \sum_{i=1}^n  (\hat{R}_i - R_{P,i})(
    \muvec_i - \hat{\muvec}_i)}_{\text{Q1}} +
\underbrace{n^{-1/2} \sum_{i=1}^n  (\hat{R}_i -
    R_{P,i})\xivec_i}_{\text{Q2}} \ + \\
&\underbrace{n^{-1/2} \sum_{i=1}^n  R_{P,i}(
    \muvec_i - \hat{\muvec}_i)}_{\text{Q3}}
\end{align*}
By the Cauchy--Schwarz inequality, we have that
\begin{equation}
    \|\text{Q1}\|_{\infty} \leq \left(\frac{1}{n} \sum_{i=1}^n (\hat{R}_i -
    R_{P,i})^2 \max_{j \in [d]} \sum_{i=1}^n \|\muvec_i^j - \hat{\muvec}_i^j\|^2_2  \right)^{1/2} \leq \left(nMW  \right)^{1/2}.
\end{equation}
By Assumption~\ref{gcm:c6}, we conclude that Q1 is $o_{\mathcal{P}}(1)$.

For the Q2 term, we note first that, for all $P \in \mathcal{P}$,
$Y \indep E \mid \rX^S$ and, for all $P \in \mathcal{P}$ and $i \in [n]$,

$R(\hat{h}; Y_i, \rX_i^S)$ is a measurable function of $(Y_i,
\rX_i^S)_{i \in [n]}$. Therefore, for $i \in [n]$, we have
\[
\Ex_{P}[(\hat{R}_i - R_{P,i})\xivec_i \mid (Y_i, \rX_i^S)_{i \in [n]}] =
(\hat{R}_i - R_{P,i}) \Ex_{P}[\xivec_i \mid \rX_i^S] = 0.
\]
By the fact that for $i \neq j$, $\xivec_i$ and $\xivec_j$ are independent given $(Y_i, \rX_i^S)_{i \in [n]}$ and Assumptions~\ref{gcm:c5} and \ref{gcm:c3}, we have
\begin{align*}
\Ex_P[\| \text{Q2} \|_2^2 \mid (Y_i, \rX_i^S)_{i \in [n]}]
&= \frac{1}{n} \Ex_{P}\left[ \left(\sum_{i=1}^n
(\hat{R}_i - R_{P,i}) \xivec_i^\top \right)
\left(\sum_{i=1}^n  (\hat{R}_i - R_{P,i}) \xivec_i\right)
\mid (Y_i, \rX_i^S)_{i \in [n]} \right]\\
&=  \frac{1}{n}  \sum_{i=1}^n (\hat{R}_i -
R_{P,i})^2 \Ex_{P}\left[ \| \xivec_i \|_2^2 \mid \rX_i^S \right] = o_{\mathcal{P}}(1).
\end{align*}
By Lemma~\ref{lem:cond_conv} we conclude that $\text{Q2}= o_{\calP}(1)$.

Finally, for the Q3 term, by a similar argument as for the Q2 term
above, we have, for all $i \in [n]$,
\[
\Ex_{P}[R_{P,i}(\muvec_i - \hat{\muvec}_i) \mid
(\rE_i, \rX_i^S)_{i \in [n]}] = \Ex_{P}[R_{P,i} \mid  \rX_i^S]
(\muvec_i - \hat{\muvec}_i) = 0
\]
by Proposition~\ref{prop:sresidmeanzero}.
By this fact and Assumptions~\ref{gcm:c5} and \ref{gcm:c3}, we obtain that
\[
\Ex_{P}[\| \text{Q3} \|_2^2 \mid (\rE_i, \rX_i^S)_{i \in [n]}] = \frac{1}{n} \sum_{i=1}^n \norm{\muvec_i - \hat{\muvec}_i}_2^2
\Ex_{P}[R_{P,i}^2 \mid  \rX_i^S] = o_{\calP}(1).
\]
By Lemma~\ref{lem:cond_conv} we conclude as above that $\text{Q3} = o_{\calP}(1)$. Thus, \eqref{eq:numerator} holds.

We now argue that
\begin{equation}\label{eq:opnorm}
    \| \hat{\Sigma} - \Sigma_P \|_{\operatorname{op}} = o_{\mathcal{P}}(1).
\end{equation}
We begin by decomposing $\hat\Sigma$ into
\begin{align}
\hat\Sigma =
&\underbrace{
n^{-1}\sum_{i=1}^n R_{P,i}^2\xivec_i\xivec_i^\top
}_{\RN{1}}
+\
\underbrace{
2n^{-1}\sum_{i=1}^n R_{P,i}(\hat{R}_i - R_{P,i})\xivec_i\xivec^\top_i
}_{\RN{2}}
\\
+\
&\underbrace{
2 n^{-1}\sum_{i=1}^n R_{P,i}(\hat{R}_i - R_{P,i})\xivec_i(\muvec_i - \hat{\muvec}_i)^\top
}_{\RN{3}}
+\
\underbrace{
2n^{-1}\sum_{i=1}^n R_{P,i}(\hat{R}_i - R_{P,i})(\muvec_i - \hat{\muvec}_i)\xivec_i^\top
}_{\RN{3}^\top}
\\
+\
&\underbrace{
n^{-1}\sum_{i=1}^n R_{P,i}^2\xivec_i(\muvec_i - \hat{\muvec}_i)^\top
}_{\RN{4}}
+\
\underbrace{
n^{-1}\sum_{i=1}^n R_{P,i}^2(\muvec_i - \hat{\muvec}_i)\xivec_i^\top
}_{\RN{4}^\top}
\\
+\
&\underbrace{
n^{-1}\sum_{i=1}^n (\hat{R}_i - R_{P,i})^2(\muvec_i - \hat{\muvec}_i)\xivec_i^\top
}_{\RN{5}}
+\
\underbrace{
n^{-1}\sum_{i=1}^n (\hat{R}_i - R_{P,i})^2 \xivec_i (\muvec_i - \hat{\muvec}_i)^\top
}_{\RN{5}^\top}
\\
+\
&\underbrace{
n^{-1}\sum_{i=1}^n (\hat{R}_i - R_{P,i})^2 (\muvec_i - \hat{\muvec}_i)(\muvec_i - \hat{\muvec}_i)^\top
}_{\RN{6}}
+\
\underbrace{
2n^{-1}\sum_{i=1}^n R_{P,i}(\hat{R}_i - R_{P,i})(\muvec_i - \hat{\muvec}_i)(\muvec_i - \hat{\muvec}_i)^\top
}_{\RN{7}}
\\
+\
&\underbrace{
n^{-1}\sum_{i=1}^n R_{P,i}^2(\muvec_i - \hat{\muvec}_i)(\muvec_i - \hat{\muvec}_i)^\top
}_{\RN{8}}
+\
\underbrace{
n^{-1}\sum_{i=1}^n (\hat{R}_i - R_{P,i})^2\xivec_i\xivec_i^\top
}_{\RN{9}}
\\
-\
&\underbrace{
(n^{-1}\sum_{i=1}^n \rL_i - R_{P,i}\xivec_i)(n^{-1}\sum_{i=1}^n \rL_i - R_{P,i}\xivec_i)^\top
}_{\RN{10}}
-\
\underbrace{
(n^{-1}\sum_{i=1}^n R_{P,i}\xivec_i)(n^{-1}\sum_{i=1}^n R_{P,i}\xivec_i)^\top
}_{\RN{11}}
\\
-\
&\underbrace{
(n^{-1}\sum_{i=1}^n \rL_i - R_{P,i}\xivec_i)(n^{-1}\sum_{i=1}^n R_{P,i}\xivec_i)^\top
}_{\RN{12}}
-\
\underbrace{
(n^{-1}\sum_{i=1}^n R_{P,i}\xivec_i)(n^{-1}\sum_{i=1}^n \rL_i - R_{P,i}\xivec_i)^\top
}_{\RN{12}^\top},
\end{align}
and handle each term in the following.

We intend to argue that
\[
\RN{1} = \Sigma_P + o_{\mathcal{P}}(1).
\]
By equivalence of finite-dimensional norms, it suffices to show that for all
$j, k \in [q]$
\[
\RN{1}_{jk} - (\Sigma_P)_{jk} = o_{\calP}(1).
\]
This holds by Lemma~19 in \citet{shah2020hardness} since
\[
\sup_{P \in \calP} \Ex[\left(\lvert R_{P,i}\rvert^{2}\right)^{1+\delta/2}\lvert(\xivec_i\xivec_i^\top
)_{jk}\rvert^{1+\delta/2}]
\leq
\sup_{P \in \calP}\Ex[\lvert R_{P,i}\rvert^{2+\delta}\norm{\xivec_i
}_\infty^{2+\delta}]
\leq
\sup_{P \in \calP}\Ex[\lvert R_{P,i}\rvert^{2+\delta}\norm{\xivec_i
}_2^{2+\delta}] < \infty,
\]
by Assumption~\ref{gcm:c2}.

For the $\RN{2}$ term, by sub-multiplicativity of the operator norm and the
Cauchy--Schwarz inequality, we have
\[
\norm{\frac{1}{2} \cdot\RN{2}}_{\operatorname{op}}
\leq
n^{-1}\sum_{i=1}^n\lvert R_{P,i}\rvert \lvert(\hat{R}_i - R_{P,i})\rvert \norm{\xivec_i}_2^2
\leq (n^{-1}\sum_{i=1}^n R_{P,i}^2 \norm{\xivec_i}_2^2)^{1/2}
(n^{-1}\sum_{i=1}^n(\hat{R}_i - R_{P,i})^2\norm{\xivec_i}_2^2)^{1/2}.
\]
Further,
\[
n^{-1}\sum_{i=1}^n R_{P,i}^2 \norm{\xivec_i}_2^2 = \Ex_P[R_{P,i}^2\norm{\xivec_i}_2^2]
+ o_{\mathcal{P}}(1)
\]
again by Lemma~19 in \citet{shah2020hardness} and Assumption~\ref{gcm:c2}.
By Lemma~\ref{lem:cond_conv} and our argument for the Q2 term, we conclude that
\[
n^{-1}\sum_{i=1}^n(\hat{R}_i - R_{P,i})^2\norm{\xivec_i}_2^2 = o_{\mathcal{P}}(1),
\]
thus $\RN{2} = o_{\mathcal{P}}(1)$.

For the $\RN{3}$ term, we have
\begin{align}
\norm{\frac{1}{2} \cdot \RN{3}}_{\operatorname{op}} &\leq
n^{-1} \sum_{i=1}^n \lvert R_{P,i}\rvert \lvert \hat{R}_i - R_{P,i} \rvert \norm{\xivec_i}_2
\norm{\muvec_i - \hat{\muvec}_i}_2
\\
&\leq
(n^{-1}\sum_{i=1}^n R_{P,i}^2 \norm{\xivec_i}_2^2)^{1/2}
(n^{-1}\sum_{i=1}^n (\hat{R}_i - R_{P,i})^2 \norm{\muvec_i - \hat{\muvec}_i}_2^2)^{1/2}.
\end{align}
The first factor has already been handled in $\RN{2}$,
whereas for the second
we have
\[
n^{-1}\sum_{i=1}^n (\hat{R}_i - R_{P,i})^2 \norm{\muvec_i - \hat{\muvec}_i}_2^2
\leq n^{-1}(\sum_{i=1}^n(\hat{R}_i - R_{P,i})^2)(\sum_{i=1}^n
\norm{\muvec_i - \hat{\muvec}_i}_2^2) = o_{\calP}(1),
\]
by Assumption~\ref{gcm:c6}. The same line of argument holds for $\RN{3}^\top$.

For the $\RN{4}$ term, we have
\[
\RN{4} \leq n^{-1}\sum_{i=1}^n R_{P,i}^2 \norm{\xivec_i}_2 \norm{\muvec_i - \hat{\muvec}_i}_2 \leq (n^{-1}\sum_{i=1}^nR_{P,i}^2\norm{\xivec_i}_2^2)^{1/2}
(n^{-1}\sum_{i=1}^n R_{P,i}^2\norm{\muvec_i - \hat{\muvec}_i}_2^2)^{1/2}.
\]
The first factor was shown to be $\Ex_P[R_{P,i}^2\norm{\xivec_i}_2^2]^{1/2}+ o_{\mathcal{P}}(1)$ in the argument for $\RN{2}$,
while Lemma~\ref{lem:cond_conv} and our argument for the Q3 term let us conclude that
\[
    n^{-1}\sum_{i=1}^n R_{P,i}^2\norm{\muvec_i - \hat{\muvec}_i}_2^2 = o_{\mathcal{P}}(1),
\]
hence $\RN{4} = o_{\calP}(1)$.

For the $\RN{5}$ term, we have
\begin{align}
\norm{\RN{5}}_{\operatorname{op}}
&\leq
n^{-1}\sum_{i=1}^n (\hat{R}_i - R_{P,i})^2\norm{\muvec_i - \hat{\muvec}_i}_2
\norm{\xivec_i}_2
\\
&\leq
(n^{-1}\sum_{i=1}^n(\hat{R}_i -R_{P,i})^2\norm{\xivec_i}_2^2)^{1/2}
(n^{-1}\sum_{i=1}^n(\hat{R}_i -R_{P,i})^2\norm{\muvec_i - \hat{\muvec}_i}_2^2)^{1/2}.
\end{align}
The first factor has been handled in the argument for the (II) term and the second
in the argument for $\RN{3}$.

For the $\RN{6}$ term, we have
\[
\norm{\RN{6}}_{\operatorname{op}} \leq n^{-1} \sum_{i=1}^n
(\hat{R}_i - R_{P,i})^2 \norm{\muvec_i - \hat{\muvec}_i}_2^2,
\]
which has been dealt with in the argument for $\RN{3}$.

For the $\RN{7}$ term, we have
\begin{align}
\norm{\frac{1}{2} \cdot \RN{7}}_{\operatorname{op}}
&\leq n^{-1}\sum_{i=1}^n
\lvert \hat{R}_i - R_{P,i}\rvert \lvert R_{P,i} \rvert \norm{\muvec_i - \hat{\muvec}_i}_2^2
\\
&\leq (n^{-1}\sum_{i=1}^n ( \hat{R}_i - R_{P,i})^2 \norm{\muvec_i - \hat{\muvec}_i}_2^2)^{1/2}
(n^{-1}\sum_{i=1}^n R_{P,i}^2\norm{\muvec_i - \hat{\muvec}_i}_2^2)^{1/2}.
\end{align}
The first factor is shown to be $o_{\mathcal{P}}(1)$ as part of the argument for
$\RN{3}$ while the second factor is shown to be $o_{\mathcal{P}}(1)$ in our argument
for $\RN{4}$, thus we conclude that $\RN{7} = o_{\mathcal{P}}(1)$.

For the $\RN{8}$ term, we have
\begin{align}
\norm{\RN{8}}_{\operatorname{op}} \leq
n^{-1}\sum_{i=1}^n R_{P,i}^2\norm{\muvec_i - \hat{\muvec}_i}_2^2,
\end{align}
which was shown to be $o_{\calP}(1)$ in the argument for $\RN{4}$; thus
$\RN{8} = o_{\mathcal{P}}(1)$.

For the $\RN{9}$ term, we have
\begin{align}
\norm{\RN{9}}_{\operatorname{op}} \leq
n^{-1} \sum_{i=1}^n (\hat{R}_i - R_{P,i})^2\norm{\xivec}_2^2,
\end{align}
which was part of the argument for the (II) term.

For the $\RN{10}$ term, we have
\begin{align}
\norm{\RN{10}}_{\operatorname{op}}
= \norm{n^{-1}\sum_{i=1}^n(\rL_i - R_{P,i}\xivec_i)}_2^2
= n^{-1} \norm{n^{-1/2} \sum_{i=1}^n (\rL_i - R_{P,i}\xivec_i)}_2^2,
\end{align}
which is $o_\calP(1)$ by \eqref{eq:numerator}.

For the $\RN{11}$ term, we have
\begin{align}
\norm{\RN{11}}_{\operatorname{op}} = \norm{n^{-1} \sum_{i=1}^n R_{P,i}\xivec_i}_2^2.
\end{align}
By Markov's inequality, for all $\epsilon > 0$,
\begin{align}
\sup_{P\in\calP} \Prob_P\left(\norm{n^{-1}\sum_{i=1}^n R_{P,i}\xivec_i}_2^2
\geq \epsilon\right) \leq
\sup_{P\in\calP} \frac{\Ex_P[\norm{R_{P,i}\xivec_i}_2^2]}{n\epsilon} \to 0,
\end{align}
by Assumption~\ref{gcm:c2}.

For the $\RN{12}$ term (and similarly for the $\RN{12}^\top$
term), we have
\begin{align}
\norm{\RN{12}}_{\operatorname{op}}
=
\norm{n^{-1} \sum_{i=1}^n (\rL_i - R_{P,i}\xivec_i)}_2
\norm{n^{-1} \sum_{i=1}^n R_{P,i}\xivec_i}_2.
\end{align}
The first factor has been dealt with in $\RN{10}$ and the second
in $\RN{11}$. We have thus shown~\eqref{eq:opnorm}.

We now argue that
\begin{equation}\label{eq:sqrtopnorm}
    \| \hat{\Sigma}^{-1/2} - \Sigma_P^{-1/2} \|_{\operatorname{op}}
    = o_{\mathcal{P}}(1).
\end{equation}

To that end, let $\epsilon > 0$ be given and define $C \coloneqq
\sup_{P \in \calP} \norm{\Sigma_P}_{\operatorname{op}}$ which is finite by
Assumption~\ref{gcm:c2}. Further, define
$c \coloneqq \inf_{P \in \calP}
\lambda_{\min}(\Sigma_P)$ which is strictly positive by
Assumption~\ref{gcm:c1}. Now, by Lemma~\ref{lem:matrix}(i)
(first inequality) and (ii)(last inequality), we have
\begin{align*}
&\sup_{P \in \calP} \Prob_P( \| \hat{\Sigma}^{-1/2} - \Sigma_P^{-1/2}
\|_{\operatorname{op}} \geq \epsilon) \leq \sup_{P \in \calP}
\Prob_P( \| \hat{\Sigma}^{-1} - \Sigma_P^{-1} \|_{\operatorname{op}}
\geq C^{-1/2} \epsilon)\\
&\leq \sup_{P \in \calP} \Prob_P( \| \hat{\Sigma}^{-1} - \Sigma_P^{-1}
\|_{\operatorname{op}} \geq C^{-1/2}\epsilon, \| \hat{\Sigma} -
\Sigma_P\|_{\operatorname{op}} < c/2 ) + \sup_{P \in \calP}
\Prob_P(\| \hat{\Sigma} - \Sigma_P\|_{\operatorname{op}} \geq c/2) \\
&\leq \sup_{P \in \calP} \Prob_P( \| \hat{\Sigma} - \Sigma_P
\|_{\operatorname{op}} \geq \frac{1}{2}C^{-1/2}c^2\epsilon) + \sup_{P \in \calP}
\Prob_P(\| \hat{\Sigma} - \Sigma_P\|_{\operatorname{op}} \geq c/2 ).
\end{align*}
The results now follows from \eqref{eq:opnorm}.

We are now ready to prove the result. We write
\begin{align*}
    \rT_n &= \underbrace{\Sigma_P^{-1/2} \left(n^{-1/2}
    \sum_{i=1}^n R_{P,i} \xivec_i \right)}_{\rT_n^{(1)}} +
    \underbrace{(\hat{\Sigma}^{-1/2} - \Sigma_P^{-1/2}) \left(n^{-1/2}
    \sum_{i=1}^n  (\rL_i - R_{P,i} \xivec_i) \right)}_{\rT_n^{(2)}}\\
    &+ \underbrace{(\hat{\Sigma}^{-1/2} - \Sigma_P^{-1/2})
    \left(n^{-1/2} \sum_{i=1}^n  R_{P,i} \xivec_i \right)}_{\rT_n^{(3)}} +
    \underbrace{\Sigma_P^{-1/2} \left(n^{-1/2}
    \sum_{i=1}^n  (\rL_i - R_{P,i} \xivec_i) \right)}_{\rT_n^{(4)}}.
\end{align*}
$\rT_n^{(1)}$ converges uniformly over $\calP$ to a standard $d$-variate
Gaussian distribution by Lemma~\ref{lem:uniformlyapunov} and
Assumption~\ref{gcm:c2}. We have
\[
\norm{\rT_n^{(2)}}_2 \leq \norm{\hat{\Sigma}^{-1/2} -
\Sigma_P^{-1/2}}_{\operatorname{op}} \norm{n^{-1/2}
\sum_{i=1}^n  (\rL_i - R_{P,i} \xivec_i)}_2 = o_{\calP}(1)
\]
by \eqref{eq:numerator} and \eqref{eq:sqrtopnorm}. For any $M > 0$,
by Markov's inequality
\[
\sup_{P \in \calP} \Prob_P\left(\norm{n^{-1/2} \sum_{i=1}^n
R_{P,i} \xivec_i}_2 \geq M \right) \leq M^{-1} \sup_{P \in
\calP} \Ex_P \left[ \norm{R_{P, i} \xivec_i}_2^2 \right] \leq M^{-1} C
\]
so $n^{-1/2} \sum_{i=1}^n  R_{P,i} \xivec_i$ is bounded in probability
uniformly over $\calP$. Hence,
\[
\norm{\rT_n^{(3)}}_2 \leq \norm{\hat{\Sigma}^{-1/2} -
\Sigma_P^{-1/2}}_{\operatorname{op}} \norm{n^{-1/2}
\sum_{i=1}^n   R_{P,i} \xivec_i}_2 = o_{\calP}(1)
\]
by \eqref{eq:sqrtopnorm}. Similarly,
\[
\norm{\rT_n^{(4)}}_2 \leq \norm{\Sigma_P^{-1/2}}_{\operatorname{op}}
\norm{n^{-1/2} \sum_{i=1}^n  (\rL_i - R_{P,i} \xivec_i)}_2 =
\lambda_{\min}(\Sigma_P)^{-1/2} \norm{n^{-1/2} \sum_{i=1}^n
(\rL_i - R_{P,i} \xivec_i)}_2= o_{\calP}(1)
\]
by \eqref{eq:numerator} and Assumption~\ref{gcm:c1}. Combining the above
with a uniform version of Slutsky's theorem
\citep[][Theorem 6.3]{bengs2019uniform} we have the desired result.
\end{proof}

\subsubsection{Proof of Proposition~\ref{prop:wald}}\label{proof:wald}

\begin{proof}
First, $\gammavec^S = 0$ implies that $S$ is $(\pZ,\trafos{\calX
}^{\text{Wald}}(\basisy))$-invariant, because the canonical conditional CDF
of $Y$ conditional on $\rX^S$ and $\rE$ can, for $\Prob_{(\rX^S,\rE)}$-almost
all $(\rx^S,\evec)$, be written as
\begin{align}
    \pZ(\basisy(\bcd)^\top\thetavec + (\rx^S)^\top\shiftparm^S)
\end{align}
which implies that $(Y \mid \rX^S = \rx^S, \rE = \evec)$ and $(Y \mid
\rX^S = \rx^S)$ are identical.

For the reverse direction, $S$ is $(\pZ,\trafos{\calX}^{\text{Wald}}(
\basisy))$-invariant implies that for $\Prob_{(\rX^S,\rE)}$-almost all
$(\rx^S,\evec)$, $(\rY \mid \rX^S = \rx^S, \rE = \evec)$ is identical
to $(\rY \mid \rX^S = \rx^S)$ and the existence of an invariant
transformation function $\h^S \in\trafos{\calX}^{\text{Wald}}(\basisy)$
given by $\h^S : (\ry, \rx^S) \mapsto \basisy(\ry)^\top\tilde\thetavec -
(\rx^S)^\top\tilde\shiftparm^S$. Then, let $\h_{\parm^S} \in \trafowaldS(
\basisy)$ be such that $\parm^S = (\tilde\thetavec, \tilde\shiftparm^S, 0)$.
We then have for $\Prob_{(\rX^S,\rE)}$-almost all $(\rx^S,\evec)$,
$\h^S(\bcd\mid\rx^S) = \h_{\parm^S}(\bcd\mid \rx^S,\evec)$, which
concludes the proof.
\end{proof}

\subsubsection{Proof of Proposition~\ref{prop:empty}}\label{proof:empty}

\begin{proof}
For (i): Since $\rE \in \pa(Y)$, there exists no $S \subseteq [d]$ such that $Y$
is $d$-separated from $\rE$ given $\rX^S$ in $\calG_*$. Then, by faithfulness,
there exists no $S \subseteq [d]$ such that $Y \indep \rE \mid \rX^S$.
Therefore, there exists no $(\pZ,\trafosub)$-invariant set and
\[
S_I \coloneqq \bigcap_{S \subseteq [d] : S \text{ is }
(\pZ,\trafosub)\text{-invariant}} S =
\bigcap_{S \in \emptyset} S
= \emptyset,
\]
where the last equality follows by convention. For (ii): By the assumption of
oracle tests, we thus have
\[
\Prob(S_n \subseteq \pa_{C_*}(Y)) = \Prob(\cap_{S \subseteq [d] :
p_{S,n}(\calD_n) > \alpha} S \subseteq \pa_{C_*}(Y)) = 1,
\]
where the last equality follows because $S_n$ is almost surely empty.
\end{proof}

\subsubsection{Proof of Proposition~\ref{prop:rinvcens}}\label{proof:rinvcens}

\begin{proof}
Let $\nu \coloneqq \mu\otimes\mu$ (see Setting~\ref{set:cens}).
For all $\h \in\trafosubS$ and censored outcomes $(\delta_L, \delta_I, l, u,
\rx^S) \in \{(0,1), (1, 0), (0,0)\} \times \calY^2 \times \calX^S$, the score
residual is given by
\begin{align}
\tilde R(\h; \delta_L, \delta_I, l, u, \rx^S) =
\begin{cases}
\frac{\dZ(\h(l\mid\rx^S))}{\pZ(\h(l\mid\rx^S))} & \text{if } \delta_L = 1, \delta_I = 0, \\
\frac{\dZ(\h(u\mid\rx^S)) - \dZ(\h(l\mid\rx^S))}{
\pZ(\h(u\mid\rx^S)) - \pZ(\h(l\mid\rx^S))} & \text{if } \delta_L = 0, \delta_I = 1, \\
\frac{-\dZ(\h(u\mid\rx^S))}{1-\pZ(\h(u\mid\rx^S))} & \text{if } \delta_L = 0, \delta_I = 0.
\end{cases}
\end{align}
Then, for $\Prob_{\rX^S}$-almost all $\rx^S$, we have for the expected score
residual:
\begin{align}
&\Ex[\tilde R(\h^S; \delta_L, \delta_I, L, U, \rX^S) \mid \rX^S = \rx^S] \\
=
&\int_{\calY^2} p_{(L,U) \mid \rX^{S}}(l,u \mid \rx^{S}) \pZ(\h^S(l\mid\rx^S)) \frac{\dZ(\h^S(l\mid\rx^S))}{
\pZ(\h^S(l\mid\rx^S))} \dd\nu(l,u) \ +\\
&\int_{\calY^2} p_{(L,U) \mid \rX^{S}}(l,u \mid \rx^{S}) (\pZ(\h^S(u\mid\rx^S)) - \pZ(\h^S(l\mid\rx^S)))
\ \times \\
&\quad\quad \frac{\dZ(\h^S(u\mid\rx^S)) - \dZ(\h^S(l\mid\rx^S))}{\pZ(\h^S(u\mid\rx^S))-\pZ(\h^S(l\mid\rx^S))}
\dd\nu(l,u) \ +\\
&\int_{\calY^2} p_{(L,U) \mid \rX^{S}}(l,u \mid \rx^{S}) (1 - \pZ(\h^S(u\mid\rx^S))) \frac{-\dZ(\h^S(u\mid\rx^S))}{
1 - \pZ(\h^S(u\mid\rx^S))} \dd\nu(l,u) \\
=
&\int_{\calY^2} p_{(L,U) \mid \rX^{S}}(l,u \mid \rx^{S}) \dZ(\h^S(l\mid\rx^S)) \dd\nu(l,u) \ +\\
&\int_{\calY^2} p_{(L,U) \mid \rX^{S}}(l,u \mid \rx^{S})
(\dZ(\h^S(u\mid\rx^S)) - \dZ(\h^S(l\mid\rx^S))) \dd\nu(l,u) \ -\\
&\int_{\calY^2} p_{(L,U) \mid \rX^{S}}(l,u \mid \rx^{S}) \dZ(\h^S(u\mid\rx^S)) \dd\nu(l,u) = 0,
\end{align}
where in the first equality we have used the expression for the conditional
density computed below
Setting~\ref{set:cens}.
The result now follows analogously to the proof of Proposition~\ref{prop:residualinv}
which is given in Appendix~\ref{proof:residualinv}.
\end{proof}

\subsection{Auxiliary results}\label{app:lemmata}

Our invariance tests (Definition~\ref{def:traminv} in Section~\ref{sec:methods})
are based on score residuals and use the following property.
\begin{lemma}\label{prop:sresidmeanzero}\sloppy
Let $\calY$, $\calX$, $\pZ\in\calZ$, $\trafosub\subseteq\trafoall$ be as in
Definition~\ref{def:tram}. Impose Assumptions~\ref{asmp:densities}
and~\ref{asmp:closure}. Let $\rX \in \calX$ follow $\Prob_\rX$ and let $(\pZ,\h_0)$,
$\h_0 \in \trafosub$, be a \tram{} such that for $\Prob_\rX$-almost all
$\rx$, $(\rY \mid \rX = \rx)$ has CDF $\pZ(\h_0(\bcd\mid\rx))$. Assume that
for $\Prob_\rX$-almost all $\rx$,
\begin{equation}\label{eq:interch}
    \int_{\calY} \frac{\partial}{\partial\alpha}\dYx(\upsilon;\h_0+\alpha)
    \rvert_{\alpha=0} \dd\mu(\upsilon) =
     \frac{\partial}{\partial\alpha}\int_{\calY}\dYx(\upsilon;\h_0+\alpha)
     \dd\mu(\upsilon)\rvert_{\alpha=0},
\end{equation}
where $\alpha \in \RR$. Then,
if $\Ex[\lvert R(\h_0; Y, \rX) \rvert] < \infty$,
we have $\Ex[R(\h_0; \rY, \rX) \mid \rX] = 0$,
and hence $\Ex[R(\h_0; \rY, \rX)] = 0$ and $\Ex[\rX R(\h_0; \rY, \rX)] = 0$.
\end{lemma}
\begin{proof}\sloppy
Recall the definition of $\mu$ given beneath Assumption~\ref{asmp:densities}.
For the expected score residual, we have for $\Prob_\rX$-almost all $\rx$,
\begin{align}
    &\Ex[R(\h_0; \rY, \rX) \mid \rX = \rx] \\
    &\stackrel{\text{def.}}{=} \Ex\left[\frac{\partial}{\partial\alpha}
        \ell(\h_0 + \alpha; \rY, \rX)
        \rvert_{\alpha=0} \mid \rX = \rx\right] \\
    &\stackrel{\text{def.}}{=} \int_{\calY}
    \dYx(\upsilon; \h_0)
        \frac{\partial}{\partial\alpha}
        \log \dYx(\upsilon; \h_0 + \alpha) \big\rvert_{\alpha=0}
        \dd\mu(\upsilon)\\
    &= \int_{\calY} \dYx(\upsilon; \h_0)
        \frac{\frac{\partial}{\partial\alpha}\dYx(\upsilon;
        \h_0 + \alpha) \rvert_{\alpha=0}}{
        \dYx(\upsilon; \h_0)}
         \dd\mu(\upsilon)\\
    &= \int_{\calY}
        \frac{\partial}{\partial\alpha}\dYx(\upsilon;
        \h_0 + \alpha) \big\rvert_{\alpha=0}
         \dd\mu(\upsilon)\\
    &= \frac{\partial}{\partial\alpha} \int_{\calY}
        \dYx(\upsilon; \h_0 + \alpha)
         \dd\mu(\upsilon)\big\rvert_{\alpha=0} = 0.
\end{align}
Thus, $\Ex[R(\h_{\parm_0}; \rY, \rX)] = \Ex[\Ex[R(\h_{\parm_0}; \rY, \rX)
\mid \rX]] = 0$ and $\Ex[\rX R(\h_{\parm_0}; \rY, \rX)] =  \Ex[\rX \Ex[
R(\h_{\parm_0}; \rY, \rX) \mid \rX]] = 0$.
\end{proof}
%
One can find regularity conditions on the involved distributions and functions
that ensure the validity of interchangeability of differentiation and
integration, see~\eqref{eq:interch}, similar to
\citet[Theorems~1--3]{hothorn2018most}; see also
Propositions~\ref{prop:residualinv} and~\ref{prop:wald}.

\begin{proposition}\label{prop:cdf}
Let $\calY \subseteq \RR$, $\calX \subseteq \RR^d$. Let $\pZ : \overline{\RR}
\to [0,1]$ be the extended CDF of a continuous distribution and let $\h : \RR
\times \calX \to \overline{\RR}$ s.t. for all $\rx\in\calX$, $\h(\bcd\mid\rx)$
is ERCI on $\calY$. Then, for all $\rx\in\calX$, $\pZ(\h(\bcd\mid\rx))$ is a
CDF.
\end{proposition}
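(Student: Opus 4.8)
The plan is to check directly that, for each $\rx\in\calX$, the function $F_{\rx}:=\pZ(\h(\bcd\given\rx)):\RR\to[0,1]$ has the three defining properties of a CDF on $\RR$: it is non-decreasing, right-continuous, and tends to $0$ at $-\infty$ and to $1$ at $+\infty$. Throughout I would use that $\pZ$, being the CDF of a continuous distribution, is non-decreasing and continuous on $\RR$, and that as an extended CDF it satisfies $\pZ(-\infty)=\lim_{z\to-\infty}\pZ(z)=0$ and $\pZ(+\infty)=\lim_{z\to+\infty}\pZ(z)=1$; hence $\pZ$ is continuous as a map $\overline\RR\to[0,1]$. I would also isolate one elementary fact: a non-decreasing continuous $\pZ:\overline\RR\to[0,1]$ commutes with infima, i.e. $\inf_{a\in A}\pZ(a)=\pZ(\inf A)$ for every nonempty $A\subseteq\overline\RR$ (one inequality is monotonicity, the other picks $a_n\in A$ with $a_n\to\inf A$ and uses continuity of $\pZ$).

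For monotonicity I would first observe that the ERCI conditions make $\h(\bcd\given\rx)$ a non-decreasing map $\RR\to\overline\RR$: it equals $-\infty$ on $(-\infty,\inf\calY)$ by (ii), is strictly increasing on $\calY$ by (i), equals $+\infty$ on $(\sup\calY,\infty)$ by (iii), and on the gaps $(\inf\calY,\sup\calY)\setminus\calY$ condition (iv) forces it to repeat values already taken on $\calY$, hence it never decreases. Composing with the non-decreasing $\pZ$ shows $F_{\rx}$ is non-decreasing. For the tail limits, applying the commuting-with-infima fact to $A=\{\h(y\given\rx):y<y_0\}$ and to $A=\{\h(y\given\rx):y>y_0\}$ reduces the limits of $F_{\rx}$ to those of $\h(\bcd\given\rx)$; since $\lim_{y\to-\infty}\h(y\given\rx)=-\infty$ (immediately from (ii) if $\inf\calY>-\infty$, and from (v) otherwise) and $\lim_{y\to+\infty}\h(y\given\rx)=+\infty$ (from (iii), resp. (v)), we get $F_{\rx}(y)\to\pZ(-\infty)=0$ and $F_{\rx}(y)\to\pZ(+\infty)=1$.

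The crux is right-continuity. The same commuting-with-infima fact gives $\lim_{s\downarrow y_0}F_{\rx}(s)=\pZ\!\left(\lim_{s\downarrow y_0}\h(s\given\rx)\right)$ for every $y_0\in\RR$, so it suffices to prove that $\h(\bcd\given\rx)$ is itself right-continuous on all of $\RR$, not merely on $\calY$. I would verify this by cases on the position of $y_0$: if $y_0\notin[\inf\calY,\sup\calY]$ then $\h(\bcd\given\rx)$ is locally constant ($-\infty$ or $+\infty$) near $y_0$, so right-continuity is immediate; if $y_0\in\calY$ it follows from right-continuity on $\calY$ in (i) together with (iv) — either $\calY$ does not accumulate at $y_0$ from the right, so (iv) makes $\h(\bcd\given\rx)$ constant on a right-neighbourhood of $y_0$, or there are $\upsilon_n\downarrow y_0$ in $\calY$, so one squeezes $\h(s\given\rx)$ between $\h(y_0\given\rx)$ and $\h(\upsilon_n\given\rx)$ using monotonicity and (i); and if $y_0$ lies in a gap, (iv) shows $\h(\bcd\given\rx)$ is constant on the part of the gap to the right of $y_0$, again reducing to (i) at the right endpoint of the gap when needed.

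The hard part will be nothing conceptual but rather the careful bookkeeping in this last step — in particular handling points of the form $\inf\calY$ or $\sup\calY$ that do not belong to $\calY$, and gap points at which $\calY$ accumulates from the right — checking in each instance that conditions (i)–(v), together with the stated conventions about $\pm\infty$, determine both $\h(y_0\given\rx)$ and the right-limit $\lim_{s\downarrow y_0}\h(s\given\rx)$ and that these coincide. Once right-continuity of $\h(\bcd\given\rx)$ is in hand, continuity of $\pZ$ closes the argument, and combining the three properties yields that $F_{\rx}$ is a CDF for every $\rx\in\calX$.
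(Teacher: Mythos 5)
Your proposal is correct and follows essentially the same route as the paper's proof: a direct verification that $\pZ(\h(\bcd\given\rx))$ is non-decreasing, right-continuous, and has the correct limits at $\pm\infty$, with right-continuity reduced to passing the right-limit through $\pZ$ and onto $\h(\bcd\given\rx)$. The only difference is one of detail — the paper asserts $\lim_{\upsilon\to\ry^+}F(\upsilon)=\pZ(\lim_{\upsilon\to\ry^+}\h(\upsilon\given\rx))=\pZ(\h(\ry\given\rx))$ in a single line, whereas you justify the interchange via the infimum-commuting property of a continuous non-decreasing $\pZ$ and sketch the case analysis for right-continuity of $\h(\bcd\given\rx)$ on all of $\RR$.
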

\begin{proof}
Fix $\rx\in \calX$, define $F : \calY \to [0,1]$, $F : y \mapsto \pZ(\h(\ry
\mid\rx))$. It suffices to show that $F$ is a CDF, that is,
\begin{enumerate}[label=(\roman*{})]
    \item for all $\ry\in\RR$, $\lim_{\upsilon\to\ry^+} F(\upsilon) = F(\ry)$,
    \item for all $\ry, \ry' \in\RR$ s.t. $\ry \leq \ry'$, $F(\ry) \leq F(\ry')$,
    \item $\lim_{\upsilon\to\infty} F(\upsilon) = 1$,
    \item $\lim_{\upsilon\to-\infty} F(\upsilon) = 0$.
\end{enumerate}
Property (i) follows from continuity of $\pZ\rvert_\RR$ and because $\h$ is ERCI
on $\calY$, \ie $\forall\ry\in\RR, \lim_{\upsilon\to\ry^+} F(\upsilon) =
\pZ(\lim_{\upsilon\to\ry^+}\h(\upsilon\mid\rx))=\pZ(\h(\ry\mid\rx))$.
Property (ii) follows directly because $\h$ is ERCI on $\calY$ and
$\pZ$ is strictly increasing. Properties (iii) and (iv) follow from
$\lim_{\upsilon\to-\infty} \h(\upsilon\mid\rx) = - \infty$ and
$\lim_{\upsilon\to\infty} \h(\upsilon\mid\rx) = \infty$.
\end{proof}

\begin{lemma}\label{lem:matrix}
Let $A, B \in \RR^{d \times d}$ be symmetric and positive semidefinite.
Assume further that $\lambda_{\min}(A) \geq c > 0$. Then
\begin{enumerate}[label=(\roman*{})]
    \item $\norm{A^{1/2} - B^{1/2}}_{\operatorname{op}} \leq c^{-1/2}
        \norm{A - B}_{\operatorname{op}}$.
    \item If $\norm{A - B}_{\operatorname{op}} \leq c/2$ then $B$ is
        invertible and $\norm{A^{-1} - B^{-1}}_{\operatorname{op}}
        \leq 2c^{-2} \norm{A - B}_{\operatorname{op}}$.
\end{enumerate}
\end{lemma}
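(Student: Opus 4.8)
\textbf{Part (i).} The plan is to use the standard integral representation of the square-root function on positive semidefinite matrices, namely
\[
A^{1/2} = \frac{1}{\pi} \int_0^\infty A (A + t I)^{-1} \, t^{-1/2} \dd t,
\]
valid for symmetric positive semidefinite $A$ (with the understanding that the integral converges because $A(A+tI)^{-1}$ is bounded and behaves like $t^{-1}$ for large $t$). Subtracting the analogous expression for $B^{1/2}$ and using the resolvent identity $A(A+tI)^{-1} - B(B+tI)^{-1} = t\big[(A+tI)^{-1} - (B+tI)^{-1}\big] = -t (A+tI)^{-1}(A - B)(B+tI)^{-1}$, one reduces the claim to bounding $\int_0^\infty t^{1/2} \norm{(A+tI)^{-1}}_{\operatorname{op}} \norm{(B+tI)^{-1}}_{\operatorname{op}} \dd t \cdot \norm{A-B}_{\operatorname{op}}$. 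Here $\norm{(A+tI)^{-1}}_{\operatorname{op}} \leq (c+t)^{-1}$ by the eigenvalue bound $\lambda_{\min}(A) \geq c$, while $\norm{(B+tI)^{-1}}_{\operatorname{op}} \leq t^{-1}$ since $B \succeq 0$. The remaining scalar integral $\frac{1}{\pi}\int_0^\infty t^{-1/2}(c+t)^{-1} \dd t$ evaluates (substitute $t = c u^2$) to exactly $c^{-1/2}$, giving the claimed constant. Alternatively, if one prefers to avoid the integral formula, one can diagonalize and reduce to the scalar inequality $|\sqrt a - \sqrt b| \le \frac{|a-b|}{\sqrt a + \sqrt b} \le \frac{|a-b|}{\sqrt c}$ combined with an operator-monotonicity argument, but the integral route is cleanest for the operator norm.

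\textbf{Part (ii).} First establish invertibility of $B$: write $B = A(I - A^{-1}(A-B))$; since $\norm{A^{-1}(A-B)}_{\operatorname{op}} \leq c^{-1} \cdot (c/2) = 1/2 < 1$, the Neumann series shows $I - A^{-1}(A-B)$ is invertible, hence so is $B$, and moreover $\norm{B^{-1}}_{\operatorname{op}} \leq \norm{A^{-1}}_{\operatorname{op}} \sum_{k\geq 0} 2^{-k} = 2c^{-1}$. Now use the standard identity $A^{-1} - B^{-1} = A^{-1}(B - A)B^{-1}$, so that $\norm{A^{-1} - B^{-1}}_{\operatorname{op}} \leq \norm{A^{-1}}_{\operatorname{op}} \norm{A - B}_{\operatorname{op}} \norm{B^{-1}}_{\operatorname{op}} \leq c^{-1} \cdot \norm{A-B}_{\operatorname{op}} \cdot 2c^{-1} = 2c^{-2}\norm{A-B}_{\operatorname{op}}$, which is the claim.

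\textbf{Main obstacle.} Neither part is genuinely hard; the only point requiring a little care is justifying the integral representation and the convergence of the scalar integral in Part~(i) (both at $t \to 0$, where $t^{-1/2}$ is integrable, and at $t \to \infty$). If one wants to sidestep that entirely, the diagonalization approach makes Part~(i) elementary but needs the observation that $\norm{A^{1/2}-B^{1/2}}_{\operatorname{op}}$ cannot be controlled eigenvalue-by-eigenvalue in general (the eigenbases differ), so one still needs either operator monotonicity of $t\mapsto t^{1/2}$ or the integral formula — hence I would go with the integral representation as the primary argument. Part~(ii) is completely routine given the Neumann series bound on $\norm{B^{-1}}_{\operatorname{op}}$.
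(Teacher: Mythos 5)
Your proof is correct but takes a genuinely different route from the paper's on both parts. For (i), the paper avoids any integral representation: it sets $C \coloneqq A^{1/2}-B^{1/2}$, picks a unit eigenvector $x$ of $C$ for its largest-magnitude eigenvalue, uses the algebraic identity $A-B = A^{1/2}C + CA^{1/2} - C(A^{1/2}-B^{1/2})$ to get $\lvert x^\top(A-B)x\rvert = \norm{C}_{\operatorname{op}}\,(x^\top A^{1/2}x + x^\top B^{1/2}x) \geq \norm{C}_{\operatorname{op}}\,c^{1/2}$, and rearranges. Your integral-representation argument is equally valid and yields the same constant $c^{-1/2}$ (the scalar integral indeed evaluates to $c^{-1/2}$, and the bounds $\norm{(A+tI)^{-1}}_{\operatorname{op}}\leq (c+t)^{-1}$, $\norm{(B+tI)^{-1}}_{\operatorname{op}}\leq t^{-1}$ are exactly what is needed); note only a harmless sign slip in your resolvent identity, where the first equality should read $t\bigl[(B+tI)^{-1}-(A+tI)^{-1}\bigr]$ — immaterial since only the norm is used. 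The trade-off: the paper's argument is self-contained linear algebra requiring no justification of a matrix-valued improper integral, while yours generalizes more readily (e.g., to operator-monotone functions other than the square root) once the representation is granted. For (ii), the paper establishes invertibility of $B$ via Weyl's inequality ($\lambda_{\min}(B)\geq c/2$) and then bounds $\norm{A^{-1}-B^{-1}}_{\operatorname{op}}$ through a rearrangement trick that implicitly controls $\norm{B^{-1}}_{\operatorname{op}}$, whereas you bound $\norm{B^{-1}}_{\operatorname{op}}\leq 2c^{-1}$ directly by a Neumann series; your version is arguably cleaner and arrives at the same constant $2c^{-2}$. Your closing remark that a naive eigenvalue-by-eigenvalue comparison fails because the eigenbases of $A$ and $B$ differ is correct and is precisely the difficulty the paper's eigenvector-of-$C$ trick is designed to circumvent.
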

\begin{proof}
\begin{enumerate}[label=(\roman*{})]
    \item Define $C\coloneqq A^{1/2} - B^{1/2}$ and let $x \in \RR^d$ denote a
    unit eigenvector corresponding to the largest absolute eigenvalue
    $\norm{C}_{\operatorname{op}}$ of $C$. Then, since $A-B = A^{1/2}C +
    CA^{1/2} - C(A^{1/2}-B^{1/2})$, we have
    \begin{align*}
    \norm{A-B}_{\operatorname{op}} &\geq |x^\top (A - B) x| =  |x^\top
    (A^{1/2}C + CA^{1/2} - C(A^{1/2}-B^{1/2})) x|\\
    &=  |x^\top A^{1/2} \norm{C}_{\operatorname{op}} x  + x^\top \norm{C}_{\operatorname{op}}  A^{1/2} x - x^\top \norm{C}_{\operatorname{op}}(A^{1/2}-B^{1/2})x|\\
    &=  \norm{C}_{\operatorname{op}} (x^\top  A^{1/2} x + x^\top B^{1/2} x)
    \geq \norm{C}_{\operatorname{op}} \lambda_{\min}(A^{1/2}) \geq
    \norm{C}_{\operatorname{op}} c^{1/2}.
    \end{align*}
    Rearranging proves the desired result.
    \item By Weyl's inequality
    we have
    \[
    \lambda_{\min}(A) - \lambda_{\min}(B) \leq \norm{A - B}_{\operatorname{op}}.
    \]
    By our assumptions, this implies
    \[
    \lambda_{\min}(B) \geq   \lambda_{\min}(A) -
    \norm{A - B}_{\operatorname{op}} \geq c/2,
    \]
    so $B$ is in fact invertible. We can now write by sub-multiplicativity
    of the operator norm and the triangle inequality
    \begin{align*}
        &\norm{A^{-1} - B^{-1}}_{\operatorname{op}} = \norm{A^{-1}(B -
        A)B^{-1}}_{\operatorname{op}} \leq \norm{A^{-1}}_{\operatorname{op}}
        \norm{A - B}_{\operatorname{op}}\norm{B^{-1}}_{\operatorname{op}}\\
        &= \norm{A^{-1}}_{\operatorname{op}}\norm{A - B}_{\operatorname{op}}
        \norm{A^{-1} - A^{-1} + B^{-1}}_{\operatorname{op}}\\
        &\leq \norm{A^{-1}}_{\operatorname{op}}\norm{A - B}_{\operatorname{op}}(
        \norm{A^{-1}}_{\operatorname{op}} + \norm{A^{-1} + B^{-1}}_{\operatorname{op}}).
    \end{align*}
    Rearranging the inequality, we have that
    \[
        (1- \norm{A^{-1}}_{\operatorname{op}}\norm{A - B}_{\operatorname{op}})
        \norm{A^{-1} - B^{-1}}_{\operatorname{op}} \leq \norm{A^{-1}}_{
        \operatorname{op}}^2 \norm{A - B}_{\operatorname{op}}.
    \]
    Since $\norm{A^{-1}}_{\operatorname{op}} = \lambda_{\min}(A)^{-1} \leq c^{-1}$
    by assumption, we conclude that $1- \norm{A^{-1}}_{\operatorname{op}}\norm{A -
    B}_{\operatorname{op}} \geq 1/2$ so we can rearrange the inequality above further
    to get
    \[
        \norm{A^{-1} - B^{-1}}_{\operatorname{op}} \leq \frac{\norm{A^{-1}}_{
        \operatorname{op}}^2 \norm{A - B}_{\operatorname{op}}}{1- \norm{A^{-1}}_{
        \operatorname{op}}\norm{A - B}_{\operatorname{op}}} \leq 2 \norm{A^{-1}}_{
        \operatorname{op}}^2 \norm{A - B}_{\operatorname{op}} \leq 2 c^{-2}
        \norm{A - B}_{\operatorname{op}}
    \]
    as desired.
\end{enumerate}
\end{proof}

\begin{lemma}[Uniform multivariate Lyapunov's Theorem]\label{lem:uniformlyapunov}
Let $\rX_1, \dots, \rX_n$ be i.i.d.\ copies of $\rX \in \RR^d$ with distribution
determined by $\calP$, such that for all $P \in \calP$, $\Ex_P[\rX] = 0$,
$\Var_P[\rX] = \Id$ and
there exists a $\delta > 0$, s.t.\
$\sup_{P\in\calP} \Ex_P[\norm{\rX}_2^{2+\delta}] < \infty$.
Let $\rS_n \coloneqq n^{-1/2} \sum_{i=1}^n \rX_i$. Then,
\begin{align}
    \lim_{n\to\infty}\sup_{P\in\calP}\sup_{\tvec\in\RR^d}
    \lvert \Prob_P(\rS_n \leq \tvec) - \Phi_d(\tvec)
    \rvert =0,
\end{align}
where $\Phi_d$ denotes the $d$-dimensional standard normal CDF.
\end{lemma}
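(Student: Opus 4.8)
The plan is to prove this by combining a uniform estimate on characteristic functions with a uniformization-by-contradiction argument (a uniform version of Pólya's theorem). Throughout, one may assume without loss of generality that $\delta \in (0,1]$, and set $C \coloneqq \sup_{P\in\calP}\Ex_P[\norm{\rX}_2^{2+\delta}] < \infty$.

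First I would establish a uniform characteristic-function expansion. Writing $\phi_P$ for the characteristic function of $\rX$ under $P$, the elementary inequality $\lvert e^{ix} - 1 - ix + \tfrac12 x^2\rvert \le \min\{\tfrac16\lvert x\rvert^3, \lvert x\rvert^2\} \le \lvert x\rvert^{2+\delta}$, together with $\Ex_P[\rX] = 0$ and $\Var_P[\rX] = \Id$, gives for every $s \in \RR^d$
$$\Bigl\lvert \phi_P(s) - \bigl(1 - \tfrac12\norm{s}_2^2\bigr)\Bigr\rvert \le \Ex_P\bigl[\lvert s^\top\rX\rvert^{2+\delta}\bigr] \le \norm{s}_2^{2+\delta}\,\Ex_P\bigl[\norm{\rX}_2^{2+\delta}\bigr] \le C\norm{s}_2^{2+\delta}.$$
Since the characteristic function of $\rS_n$ under $P$ is $\tvec\mapsto \phi_P(\tvec/\sqrt n)^n$, substituting $s = \tvec/\sqrt n$ and using the estimates $\lvert z^n - w^n\rvert \le n\max(\lvert z\rvert,\lvert w\rvert)^{n-1}\lvert z - w\rvert$ and $\lvert(1 - a/n)^n - e^{-a}\rvert \le a^2/n$ (for $0 \le a \le n$) yields, for every fixed $K > 0$, $\sup_{P\in\calP}\sup_{\norm{\tvec}_2 \le K}\lvert \phi_P(\tvec/\sqrt n)^n - e^{-\norm{\tvec}_2^2/2}\rvert \to 0$; in particular, along any sequence $(P_k, n_k)$ with $n_k \to \infty$ the characteristic functions of $\rS_{n_k}$ under $P_{k}$ converge pointwise to $\tvec\mapsto e^{-\norm{\tvec}_2^2/2}$, using that $C$ is the same for all $k$.

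Next I would combine uniform tightness with a contradiction argument. Because $\Ex_P[\norm{\rS_n}_2^2] = \trace(\Var_P[\rS_n]) = d$ for all $P$ and $n$, Markov's inequality shows the family $\{\calL_P(\rS_n) : P \in \calP, n \ge 1\}$ is uniformly tight. Suppose the conclusion fails: then there exist $\eps > 0$, $P_k \in \calP$, $n_k \to \infty$ and $\tvec_k \in \RR^d$ with $\lvert \Prob_{P_k}(\rS_{n_k} \le \tvec_k) - \Phi_d(\tvec_k)\rvert \ge \eps$ for all $k$. By Prokhorov's theorem we may pass to a subsequence along which $\calL_{P_k}(\rS_{n_k})$ converges weakly to some probability measure $\nu$; by the characteristic-function convergence above and Lévy's continuity theorem, $\nu = \ND(0,\Id)$, i.e. $\nu$ has orthant CDF $\Phi_d$. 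Since $\Phi_d$ is continuous, every point is a continuity point of $\nu$, so $\Prob_{P_k}(\rS_{n_k} \le \tvec) \to \Phi_d(\tvec)$ for each fixed $\tvec$; a Pólya-type argument (monotonicity together with continuity of the limit gives uniform convergence on any box $[-M,M]^d$, and uniform tightness controls both $\Prob_{P_k}(\rS_{n_k}\le\tvec)$ and $\Phi_d(\tvec)$ outside a large enough box) upgrades this to $\sup_{\tvec}\lvert \Prob_{P_k}(\rS_{n_k} \le \tvec) - \Phi_d(\tvec)\rvert \to 0$, contradicting the choice of $\tvec_k$.

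The main obstacle is this last step: upgrading weak convergence to uniform convergence of the orthant CDFs while simultaneously allowing $\tvec_k$ to escape to the boundary of $\overline\RR^d$. One must split $\RR^d$ into a compact box, on which the multivariate Pólya theorem applies via equicontinuity of $\Phi_d$ and pointwise convergence of CDFs, and its complement, on which uniform tightness (from the uniform second-moment bound) forces $\Prob_{P_k}(\rS_{n_k} \le \tvec)$ and $\Phi_d(\tvec)$ to be simultaneously within $\eps$ of the correct constant ($0$ or $1$) for all $k$; keeping these bounds uniform in $k$ is the delicate part. An alternative would be to invoke a quantitative multivariate Berry–Esseen bound, but under only a $(2+\delta)$-th moment no such rate is available, so the qualitative route above appears necessary.
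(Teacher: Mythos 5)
Your proof is correct, but it takes a different route from the paper's in the central limit step. The paper first picks, for each $n$, a near-extremal $P_n\in\calP$ (attaining the supremum up to $n^{-1}$), recasts $\rS_n$ under $P_n$ as a row sum of the triangular array $\rX_{n,i}\sim n^{-1/2}\rX$ under $P_n$, and verifies the Lindeberg condition for the Lindeberg--Feller theorem directly from the uniform $(2+\delta)$-moment bound via H\"older's and Markov's inequalities; you instead reprove the CLT from scratch by a uniform second-order characteristic-function expansion (using the same moment bound) combined with Prokhorov's theorem and L\'evy continuity along a violating sequence. The two uniformization devices (near-extremal sequence vs.\ proof by contradiction along a violating sequence) are interchangeable, and your characteristic-function estimate is a legitimate, self-contained substitute for the citation to the triangular-array CLT; the paper's version is shorter because it delegates the analytic work to a known theorem. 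Both proofs then need the multivariate P\'olya step (pointwise convergence of CDFs to the continuous $\Phi_d$ implies uniform convergence); the paper leaves this entirely implicit, whereas you address it explicitly. One caveat on your sketch of that step: it is not true that outside a large box both CDFs are ``within $\eps$ of $0$ or $1$'' --- for $\tvec$ with one very large positive coordinate and the others moderate, $\Phi_d(\tvec)$ is bounded away from both constants. The correct repair is the standard grid argument: choose, in each coordinate, a finite partition (including $\pm\infty$) on which the corresponding marginal of $\Phi_d$ increases by at most $\eps/d$ between consecutive points, use coordinatewise monotonicity of CDFs to sandwich $F_{n_k}(\tvec)$ and $\Phi_d(\tvec)$ between their values at adjacent grid points, and apply pointwise convergence at the finitely many grid points (which are marginal CDF evaluations, themselves converging by weak convergence since the relevant boundaries are $\Phi_d$-null). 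With that correction your argument is complete, and indeed more explicit than the paper's on this point.
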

\begin{proof}
For each $n$, let $P_n \in \mathcal{P}$ be such that
\[
\sup_{P\in\calP}\sup_{\tvec\in\RR^d}
\lvert \Prob_P(\rS_n \leq \tvec)  - \Phi_d(\tvec)
\rvert\leq \sup_{\tvec\in\RR^d}
\lvert \Prob_{P_n}(\rS_n \leq \tvec) - \Phi_d(\tvec)
\rvert + n^{-1}.
\]
For $n \in \mathbb{N}$, let $(\mX_{n, i})_{i \in [n]}$ be independent
random variables with the same distribution as $n^{-1/2} \mX$ under
$P_n$. Note that $\sum_{i=1}^n \mX_{n, i}$ has the same distribution
as $\rS_n$ under $P_n$ so if we can show that $\sum_{i=1}^n \mX_{n, i}$
converges in distribution to a standard $d$-variate Gaussian, we will
have shown the desired result. This follows from
\citet[][2.27]{vandervaart2000asymptotic} since for any $\epsilon >0$
\begin{align*}
    &\sum_{i=1}^n \Ex_{P_n}\left[\| \rX_{n, i}\|_2^2 \1_{\{\|\rX_{n, i}\|_{2}
    >  \epsilon\}} \right] =  \Ex_{P_n}\left[\|\rX\|_2^2
    \1_{\{\|\rX\|_{2} > n^{1/2} \epsilon\}}\right]\\
    &\leq \left(\Ex_{P_n}\left[\|\rX\|_2^{2+\delta} \right] \right)^{2/(2+\delta)}
    \Prob_{P_n}\left(\|\rX\|_{2} > n^{1/2} \epsilon\right)^{\delta/(2+\delta)}\\
    &\leq \epsilon^{-\delta} n^{-\delta/2} \Ex_{P_n}
    \left[\|\rX\|_2^{2+\delta} \right] \leq \epsilon^{-\delta}
    n^{-\delta/2} \sup_{P \in \mathcal{P}}
    \Ex_{P}\left[\|\rX\|_2^{2+\delta} \right] \to 0
\end{align*}
as $n \to \infty$ by Hölder's inequality, Markov's inequality and our
assumption that $\sup_{P\in\calP} \Ex_P[\norm{\rX}_2^{2+\delta}] < \infty$.
\end{proof}

The following result is essentially \citet[][Lemma~10]{lundborg2022conditional} that we
include here for completeness.
\begin{lemma}\label{lem:cond_conv}
    Let $X_1, \dots, X_n$ be a sequence of real-valued random variables and $W_1, \dots, W_n$ be a sequence of random variables taking values in a measurable space $\calW$, both sequences with distributions determined by $\calP$. If $\Ex_P[|X_n| \mid W_n] = o_{\calP}(1)$ then $X_n = o_{\calP}(1)$.
\end{lemma}
\begin{proof}
    Let $\epsilon > 0$ be given and for $x, y \in \RR$ we let
    $x \wedge y \coloneqq \min(x, y)$.
    By Markov's inequality,
    \[
        \sup_{P \in \calP} \Prob_P(|X_n| \geq \epsilon) = \sup_{P \in \calP} \Prob_P(|X_n| \wedge \epsilon \geq \epsilon) \leq \sup_{P \in \calP} \frac{\Ex_P[|X_n| \wedge \epsilon]}{\epsilon}.
    \]
    By monotonicity of conditional expectations, we have for each $P \in \calP$ that
    \[
        \Ex_P[|X_n| \wedge \epsilon \mid W_n] \leq \Ex_P[\epsilon \mid W_n] = \epsilon
    \]
    and
    \[
        \Ex_P[|X_n| \wedge \epsilon \mid W_n] \leq \Ex_P[|X_n| \mid W_n].
    \]
    Combining these inequalities, we have
    \[
        \Ex_P[|X_n| \wedge \epsilon \mid W_n] \leq \Ex_P[|X_n| \mid W_n] \wedge \epsilon,
    \]
    thus by the tower property, we obtain that
    \[
        \sup_{P \in \calP} \Ex_P[|X_n| \wedge \epsilon] = \sup_{P \in \calP} \Ex_P[ \Ex_P[ |X_n| \wedge \epsilon \mid W_n]] \leq
        \sup_{P \in \calP}
        \Ex_P[ \Ex_P[|X_n| \mid W_n] \wedge \epsilon].
    \]
    Define $Y_n \coloneqq \Ex_P[|X_n| \mid W_n] \wedge \epsilon$ and let $\delta > 0$ be given. We have
    \[
        \sup_{P \in \calP} \Ex_P[Y_n] \leq \sup_{P \in \calP} \Ex_P[\1_{\{Y_n < \delta/2\}} Y_n] + \sup_{P \in \calP} \Ex_P[\1_{\{Y_n \geq \delta/2\}}Y_n] \leq \frac{\delta}{2} + \epsilon \sup_{P \in \calP} \Prob_P(Y_n \geq \delta/2).
    \]
    By assumption, for any $\eta > 0$, we can choose $N \in \mathbb{N}$ so that for all $n \geq N$, we can make $\sup_{P \in \calP} \Prob_P(\Ex_P[|X_n| \mid W_n] \geq \delta/2) < \eta$. Hence, letting $\eta \coloneqq \frac{\delta}{2\epsilon}$, we have that
    \[
        \sup_{P \in \calP} \Ex_P[Y_n] < \delta
    \]
    for $n$ sufficiently large, which proves the desired result.
\end{proof}

\begin{lemma}\label{lem:unique}
Assume Setting~\ref{set:env}. If $S \subseteq [d]$ is
$(\pZ,\trafosub)$-invariant, then
the invariant transformation function
$\h^S \in \trafosubS$
from Definition~\ref{def:traminv}
is $\Prob_{\rX^S}$-almost surely unique.
\end{lemma}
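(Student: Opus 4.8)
The plan is to show that any two transformation functions in $\trafosubS$ that witness $(\pZ,\trafosub)$-invariance of $S$ agree for $\Prob_{\rX^S}$-almost every $\rx^S$. The whole point is that a transformation function can be read off from the conditional CDF by composing with the (injective) inverse of $\pZ$, so that invariance pins it down uniquely up to a $\Prob_{\rX^S}$-null set.

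First I would fix two candidates $\h^S_1, \h^S_2 \in \trafosubS$, each satisfying Definition~\ref{def:traminv}. Applying the definition to each, there are sets $A_1, A_2 \subseteq \calX^S$ with $\Prob_{\rX^S}(A_j) = 1$ such that for every $\rx^S \in A_j$ the conditional CDF of $(\rY \given \rX^S = \rx^S)$ equals $\pZ(\h^S_j(\bcd \given \rx^S))$; here I use the standard disintegration fact that a property of $(\rx^S,\evec)$ which depends only on $\rx^S$ and holds $\Prob_{(\rX^S,\rE)}$-almost surely also holds $\Prob_{\rX^S}$-almost surely, and that the conditional CDF is itself well-defined up to a $\Prob_{\rX^S}$-null set. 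On $A \coloneqq A_1 \cap A_2$, which still has full $\Prob_{\rX^S}$-measure, both $\pZ(\h^S_1(\bcd\given\rx^S))$ and $\pZ(\h^S_2(\bcd\given\rx^S))$ are the CDF of one and the same distribution, hence they coincide: for all $\rx^S \in A$ and all $\ry \in \RR$, $\pZ(\h^S_1(\ry \given \rx^S)) = \pZ(\h^S_2(\ry \given \rx^S))$.

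Next I would record that every $\pZ \in \calZ$ is injective on $\overline{\RR}$. Indeed, $\pZ$ is strictly increasing on $\RR$ by definition, and since $\lim_{x \to -\infty}\pZ(x) = 0 = \pZ(-\infty)$ and $\lim_{x \to \infty}\pZ(x) = 1 = \pZ(+\infty)$ together with strict monotonicity force $0 < \pZ(x) < 1$ for all finite $x$, so $\pZ$ is in fact strictly increasing, hence injective, on the whole extended line. Composing the displayed equality with this injectivity gives $\h^S_1(\ry \given \rx^S) = \h^S_2(\ry \given \rx^S)$ for all $\rx^S \in A$ and all $\ry \in \RR$; in particular $\h^S_1(\bcd \given \rx^S) = \h^S_2(\bcd \given \rx^S)$ (on $\RR$, a fortiori on $\calY$) for $\Prob_{\rX^S}$-almost all $\rx^S$, which is the asserted uniqueness.

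I do not expect a genuine obstacle here. The only mildly delicate part is the measure-theoretic bookkeeping — descending from a "$\Prob_{(\rX^S,\rE)}$-almost sure" statement to a "$\Prob_{\rX^S}$-almost sure" one and handling the null-set ambiguity in the regular conditional distribution — and that is routine. No further structure (the ERCI conditions, Assumption~\ref{asmp:densities}, or Assumption~\ref{asmp:closure}) is needed beyond the strict monotonicity of $\pZ$.
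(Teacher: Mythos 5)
Your proof is correct and follows essentially the same route as the paper's: both equate the two candidate conditional CDFs on a common full-measure set and then cancel $\pZ$ using its strict monotonicity on $\RR$ together with $\pZ(-\infty)=0$ and $\pZ(+\infty)=1$ to get injectivity on $\overline{\RR}$. Your version merely spells out the null-set bookkeeping and the injectivity argument a bit more explicitly than the paper does.
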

\begin{proof}
Assume there exists another invariant transformation function $\h \in
\trafosubS$. Then for $\Prob_{\rX^S}$-almost all $\rx^S$,
the CDF of $\rY \mid \rX^S = \rx^S$ can be written as
$\pZ(\h^S(\bcd\mid\rx^S))$ or $\pZ(\h(\bcd\mid\rx^S))$ by
Definition~\ref{def:traminv}. Now, we have
for $\Prob_{\rX^S}$-almost all $\rx^S$
\begin{align}
    \pZ(\h^S(\bcd\mid\rx^S)) = \pZ(\h(\bcd\mid\rx^S)).
\end{align}
By strict monotonicity of $\pZ\rvert_\RR$ and $\pZ(-\infty)
= 0$ and $\pZ(+\infty) = 1$, we have
for $\Prob_{\rX^S}$-almost all $\rx^S$
\begin{align}
    \h^S(\bcd\mid\rx^S) = \h(\bcd\mid\rx^S),
\end{align}
which concludes the proof.
\end{proof}

\bibliographystyle{abbrvnat}
\bibliography{bibliography}

\end{document}